\pdfoutput=1
\RequirePackage[l2tabu, orthodox]{nag}

\documentclass[paper=a4, fontsize=11pt,final]{scrartcl}

\usepackage[english]{babel}
\usepackage[utf8]{inputenc}
\usepackage{cmap}
\usepackage[T1]{fontenc}
\usepackage{textcomp}
\usepackage{lmodern}
\usepackage{xspace}
\usepackage{amsmath}

\makeatletter
\@ifpackageloaded{hyperref}{
\hypersetup{
    final,
    colorlinks,
    linkcolor={red!80!black},
    citecolor={blue!50!black},
    urlcolor={blue!80!black},
    pdftitle = {On the Complexity of Linear Temporal Logic with Team Semantics},
    pdfauthor = {Martin Lück}
}
}
{
\usepackage{hyperref}
}
\makeatother

\usepackage{cleveref}
\usepackage{aliascnt}

\usepackage[heavycircles]{stmaryrd}

\usepackage{amssymb}

\usepackage{authblk}

\usepackage{bookmark}

\usepackage{pdfcolmk}

\usepackage{braket}

\usepackage[babel,autostyle=true,strict]{csquotes}
\MakeOuterQuote{"}

\usepackage{fixmath}

\usepackage[fixamsmath,disallowspaces]{mathtools}

\usepackage{stackengine}

\usepackage[protrusion=true,expansion=true,tracking=true,kerning=true,spacing=true]{microtype}
\usepackage[xspace]{ellipsis}

\usepackage{enumitem}

\usepackage{booktabs}
\usepackage{array}
\newcolumntype{R}{>{$}r<{$}}
\newcolumntype{L}{>{$}l<{$}}
\newcolumntype{C}{>{$}c<{$}}

\usepackage{colortbl}

\usepackage{makecell}

\usepackage{amsthm}

\theoremstyle{plain}
\newtheorem{theorem}{Theorem}[section]

\newaliascnt{lemma}{theorem}
\newaliascnt{corollary}{theorem}
\newaliascnt{definition}{theorem}
\newaliascnt{example}{theorem}
\newaliascnt{proposition}{theorem}

\theoremstyle{definition}
\newtheorem{definition}[definition]{Definition}
\theoremstyle{plain}
\newtheorem{proposition}[proposition]{Proposition}

\newcommand*{\altqed}{\hfill\small\ensuremath{\triangleleft}}

\newtheorem{lemma}[lemma]{Lemma}
\newtheorem{corollary}[corollary]{Corollary}

\aliascntresetthe{lemma}
\aliascntresetthe{corollary}
\aliascntresetthe{definition}
\aliascntresetthe{example}
\aliascntresetthe{proposition}

\usepackage{tikz}
\usetikzlibrary{arrows}
\usetikzlibrary{positioning}
\usetikzlibrary{backgrounds}
\usetikzlibrary{patterns,decorations.pathreplacing}
\usetikzlibrary{fit}

\usepackage[backend=biber,style=numeric,sortcites,url=false,doi=false,maxbibnames=31,maxcitenames=2,citetracker=true]{biblatex}

\DeclareFieldFormat[article]{title}{\mkbibemph{#1}}
\DeclareFieldFormat[article]{journal}{#1}
\DeclareFieldFormat[article]{journaltitle}{#1}
\DeclareFieldFormat[article]{volume}{\textbf{#1}}
\DeclareFieldFormat[article]{number}{no.~#1}

\DeclareFieldFormat[incollection]{title}{\mkbibemph{#1}}
\DeclareFieldFormat[incollection]{booktitle}{#1}
\DeclareFieldFormat[incollection]{volume}{\textbf{#1}}
\DeclareFieldFormat[incollection]{number}{(#1)}

\DeclareFieldFormat[inproceedings]{title}{\mkbibemph{#1}}
\DeclareFieldFormat[inproceedings]{booktitle}{#1}
\DeclareFieldFormat[inproceedings]{volume}{\textbf{#1}}
\DeclareFieldFormat[inproceedings]{number}{(#1)}

\DeclareFieldFormat[unpublished]{title}{\mkbibemph{#1}}

\DeclareFieldFormat[thesis]{title}{\mkbibemph{#1}}

\renewbibmacro*{volume+number+eid}{\printfield{volume}\space \printtext[parens]{\usebibmacro{date}}
  \setunit{\addcomma\space}
  \printfield{number}\setunit{\addcomma\space}\printfield{eid}
  }

\renewbibmacro*{issue+date}{\iffieldundef{issue}
      {}
      {\printfield{issue}}\newunit}

\renewbibmacro{in:}{}

\DeclareMathOperator{\ran}{\mathsf{ran}}

\newcommand*{\ie}{i.e.\@\xspace}
\newcommand*{\eg}{e.g.\@\xspace}
\newcommand*{\wrt}{w.\,r.\,t.\@\xspace}
\newcommand*{\Wloss}{W.l.o.g.\@\xspace}
\newcommand*{\wloss}{w.l.o.g.\@\xspace}
\newcommand*{\logic}[1]{\ensuremath{\mathsf{#1}}\xspace}

\newcommand*{\sneg}{{\sim}}

\newcommand*{\depopp}{\mathsf{dep}}
\newcommand*{\dep}[1]{{\text{\textsf{dep}}{\ensuremath(#1)}}}
\newcommand*{\MSO}{\logic{MSO}}
\newcommand*{\LTL}{\logic{LTL}}
\newcommand*{\AP}{\ensuremath\mathsf{AP}}
\providecommand{\dfn}{\mathrel{\mathop:}=}
\providecommand{\ddfn}{\mathrel{\mathop{{\mathop:}{\mathop:}}}=}

\newcommand*{\logicOpFont}[1]{\mathsf{#1}}

\newcommand*{\complClFont}[1]{\mathbf{#1}}
\newcommand*{\td}{\mathrm{td}}
\newcommand*{\X}{\protect\ensuremath\logicOpFont{X}}
\newcommand*{\F}{\protect\ensuremath\logicOpFont{F}}
\newcommand*{\G}{\protect\ensuremath\logicOpFont{G}}

\newcommand*{\U}{\protect\ensuremath\logicOpFont{U}}
\newcommand*{\R}{\protect\ensuremath\logicOpFont{R}}
\newcommand*{\imp}{\rightarrow}
\newcommand*{\equi}{\leftrightarrow}
\newcommand*{\N}{\protect\ensuremath{\mathbb{N}}\xspace}
\newcommand*{\size}[1]{{\ensuremath{\vert\nobreak#1\nobreak\vert}}}
\newcommand*{\LR}{\Leftrightarrow}
\renewcommand*{\phi}{\varphi}
\newcommand*{\restr}[2]{#1{\upharpoonright}#2}
\newcommand*{\calA}{\protect\ensuremath{\mathcal{A}}}
\newcommand*{\calC}{\protect\ensuremath{\mathcal{C}}}

\newcommand*{\calI}{\protect\ensuremath{\mathcal{I}}}

\newcommand*{\calK}{\protect\ensuremath{\mathcal{K}}}

\newcommand*{\calT}{\protect\ensuremath{\mathcal{T}}}

\newcommand*{\calV}{\protect\ensuremath{\mathcal{V}}}

\newcommand*{\vroot}{\mathsf{root}}
\newcommand*{\vend}{\mathsf{end}}
\newcommand*{\ulc}{\mathrm{ulc}}
\newcommand*{\ulp}{\mathrm{ulp}}
\newcommand*{\fraka}{\mathfrak{a}}
\newcommand*{\frakb}{\mathfrak{b}}
\newcommand*{\frakc}{\mathfrak{c}}

\newcommand*{\frakT}{\mathfrak{T}}

\newcommand*{\NEXPTIME}{\protect\ensuremath{\complClFont{NEXPTIME}}\xspace}

\newcommand*{\PSPACE}{\protect\ensuremath{\complClFont{PSPACE}}\xspace}
\newcommand*{\NP}{\protect\ensuremath{\complClFont{NP}}\xspace}

\newcommand*{\qa}[1]{\forall^{1}_{#1}}
\newcommand*{\qaa}[1]{\forall^{\subseteq}_{#1}}
\newcommand*{\qe}[1]{\exists^1_{#1}}
\newcommand*{\qee}[1]{\exists^{\subseteq}_{#1}}
\newcommand*{\hook}{\hookrightarrow}

\newcommand*{\timp}{\rightarrowtriangle}
\newcommand*{\tequiv}{\leftrightarrowtriangle}
\newcommand*{\steq}{\equiv_{\mathrm{st}}}
 
\author{\large Martin Lück\\\large Leibniz Universität Hannover, Germany\\ \large \texttt{lueck@thi.uni-hannover.de}}
\date{\vspace{-5ex}}

\title{On the Complexity of Linear Temporal Logic with Team Semantics}

\interfootnotelinepenalty=1000

\bibliography{main}

\begin{document}

\maketitle

\begin{abstract}
\textbf{Abstract.}
A specification given as a formula in linear temporal logic (LTL) defines a system by its set of traces.
However, certain features such as information flow security constraints are rather modeled as so-called hyperproperties, which are sets of sets of traces.
One logical approach to this is team logic, which is a logical framework for the specification of dependence and independence of information.
LTL with team semantics has recently been discovered as a logic for hyperproperties.

We study the complexity theoretic aspects of LTL with so-called synchronous team semantics and Boolean negation, and prove that both its model checking and satisfiability problems are highly undecidable, and equivalent to the decision problem of third-order arithmetic.
Furthermore, we prove that this complexity already appears at small temporal depth and with only the "future" modality $\F$.
Finally, we also introduce a team-semantical generalization of stutter-invariance.
\end{abstract}

\noindent\textit{Keywords:} linear temporal logic, team semantics, complexity

\noindent\textit{MSC 2010:} 03B44; 03B60; 68Q60

\section{Introduction}

Linear temporal logic (LTL) \cite{Pnueli77} is a successful specification language for state-based transition systems, with many applications in software and hardware verification and protocol design \cite{BK08,Schneider04}.
This is not limited to correctness of the program output with respect to the input, but also covers aspects such as fairness, safety and deadlock-freeness.

The execution behaviour of a system is usually modeled as a trace, that is, an infinite sequence of propositional assignments.
However, certain important properties of systems are not expressible as classes of traces.
One simple example is \emph{bounded termination} of the system, that is, all traces reach some final state after at most $c$ steps, where $c$ does depend on the system not but on the trace.
Another is \emph{observational determinism}, also \emph{noninterference}, which means that the externally visible development of a trace should only depend on the input from the view of an external observer.

Recent research on temporal logic has also accounted for such properties of systems as the above.
Clarkson and Schneider~\cite{CS08} coined the name \emph{hyperproperties} for these, which are formally \emph{sets of sets} of traces.
A logic to specify hyperproperties, called HyperLTL, was introduced by Clarkson et al.~\cite{CFKMRS14}.
It extends LTL by universal and existential trace quantifiers, and can for example express observational determinism and has decidable model checking.
On the other hand, its satisfiability problem is undecidable and it still lacks the expressive power for properties such as bounded termination~\cite{CFKMRS14}.

\smallskip

Another novel logical framework useful for expressing hyperproperties is \emph{team semantics}~\cite{KMV018}.
Generally speaking, team semantics extends classical logic such that formulas are evaluated not over single assignments, states, etc., but instead over sets of those.
It has been studied in the context of first-order logic, propositional logic and modal logic, for example~\cite{Vaa07,Vaa08,VY16}.

Team semantics has also been adapted to temporal logics like CTL~\cite{KMV15} and recently LTL~\cite{KMV018} by Krebs et al.
They distinguish two kinds of team semantics for LTL: \emph{asynchronous} semantics, which is strictly weaker than HyperLTL, and \emph{synchronous} semantics, which is expressively incomparable to HyperLTL.
Roughly speaking, the future operator $\F$ either "asynchronously" skips a finite amount of time on every trace that is unbounded and depends on the trace, or "synchronously" advances to a point in the future that is equal for all traces, and similarly for the other temporal operators.

In team logic, often non-classical atoms are added to the syntax that express properties of teams.
The prime example is the \emph{dependence atom} $\dep{p_1,\ldots,p_n;q}$, which was introduced by Väänänen~\cite{Vaa07} in the context of team semantics.
It states that the value of $q$ functionally depends on the values of $p_1,\ldots,p_n$.
Applied to traces, this defines a hyperproperty.
In fact, the dependence atom allows to intuitively implement features like observational determinism, for example the formula
\begin{align*}
  \bigwedge_{i = 1}^k\dep{\mathsf{in}_1,\ldots,\mathsf{in}_n;\G(\vend\to \mathsf{out}_i)}
\end{align*}
states that the truth of the formulas $\G(\vend\to \mathsf{out}_i)$, and thus the values of $\mathsf{out}_1,\ldots,\mathsf{out}_k$ at the end of the computation on each trace, depend only on the values of the propositions $\mathsf{in}_1,\ldots,\mathsf{in}_n$ at the beginning of that trace.

Several other non-classical atoms have been considered in the past, \eg, independence, inclusion, and exclusion constraints as well as atoms for counting.
We refer the reader to the literature \cite{0037838,Kontinen13} for further information.
(Our results hold regardless of whether those atoms are available or not.)

Team semantics generally lacks the Boolean negation, and although the connective $\neg$ is part of the syntax, it is not the classical negation and does not satisfy for example the law of excluded middle~\cite{Vaa07}.
Sometimes Boolean negation, also called contradictory negation or strong negation, is re-introduced and denoted by $\sneg$.
This usually increases the expressive power greatly, as was shown for propositional, modal and first-order team logic~\cite{Vaa07,VY17,KMSV15}.
On the other hand, this is not surprising, as in the context of proposition-based logics, the non-classical atoms of dependency, independence, inclusion and others are all expressible by polynomial-sized formulas if $\sneg$ is available~\cite{LV19}.

\smallskip

The most important logical decision problems are model checking (does the given system satisfy the given formula?) and satisfiability (is the given formula true in some system?).
Not only model checking, but also satisfiability is routinely solved in practice, since an unsatisfiable and hence self-contradictory specification is of no use.
Thus checking for satisfiability is a sensible heuristic to avoid errors in the specification.~\cite{BK08}

Model checking and the satisfiability problem sometimes become easier for fragments of the logic.
For example, while both problems are well-known to be $\PSPACE$-complete for LTL~\cite{SC85}, their complexity can drop down to $\NP$ or even less when the input formulas are restricted, for example if they use only certain subsets of temporal operators, if the nesting depth of temporal operators is small, or if the number of distinct propositional symbols is bounded~\cite{SC85,DS02,Sch02}.
For HyperLTL, some research in the same direction has been done, with focus on fragments of small alternation depth of trace quantifiers and small temporal depth~\cite{abs-1907-05070,CFKMRS14,FH16}.

\smallskip

\textbf{Contributions.}
We consider the logic $\LTL(\sneg)$, which in this paper denotes linear temporal logic (LTL) with team semantics, with Boolean negation $\sneg$, and with synchronous semantics of the temporal connectives.

We begin with identifying a \emph{stutter-invariant} fragment of $\LTL(\sneg)$.
Classically, a formula $\phi$ being stutter-invariant means that finitely often repeating arbitrary labels on a trace does not change the truth of $\phi$.
Peled and Wilke~\cite{PW97} showed that the stutter-invariant LTL-definable trace properties are exactly those that are definable without the "nexttime" operator $\X$.
One application of stutter-invariant formulas is the hierarchical refinement of specifications and structures, where atomic transitions are replaced by a complex subroutine~\cite{lamport1983,PW97}.

In \Cref{sec:stutter}, we generalize the notion of stutter-equivalence to teams, and thus to hyperproperties, and partially obtain a similar result: we show that every $\X$-free formula is stutter-invariant.
Whether the converse holds like in the classical case is left open.

Afterwards, we turn to the computational complexity of $\LTL(\sneg)$.
It turns out that both its satisfiability and model checking problem are equivalent to third-order arithmetic.
In particular, model checking and satisfiability have the same complexity, which is well-known feature of LTL that now recurs in team semantics.

Moreover, we investigate the problem of \emph{countable satisfiability}, which asks for a countable team satisfying a formula, as well as \emph{finite satisfiability}, for which the team is induced as the set of traces of a finite Kripke structure.
(Note that there are countable teams that not finitely generated, and finitely generated teams that are not countable.)
For obvious reasons, the finite satisfiability problem is closer to practical applications than general satisfiability, and while these satisfiability notions coincide for classical LTL~\cite{SC85}, they do not for, \eg, HyperLTL~\cite{abs-1907-05070}.

We sum up our complexity theoretic results for $\LTL(\sneg)$.
The bold-face symbol $\mathbf{\Delta^3_0}$ refers to the decision problem of third-order arithmetic $\Delta^3_0$ (and likewise $\mathbf{\Delta^2_0}$ for second-order).
These and other notions will be defined in \Cref{sec:upper}.

\begin{theorem}\label{thm:uncountable-main-thm}
  Model checking, satisfiability and finite satisfiability are equivalent to $\mathbf{\Delta^3_0}$ and countable satisfiability is equivalent to $\mathbf{\Delta^2_0}$ \wrt logspace-reductions.
  Furthermore, this already holds with only the temporal operator $\F$ and temporal depth two (for satisfiability) and one (for model checking), respectively.
\end{theorem}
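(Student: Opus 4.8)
The plan is to prove matching upper and lower bounds for each of the four problems, showing that they sit exactly at the level of third-order arithmetic (and second-order, for countable satisfiability). The unifying idea is that a team over traces is a \emph{set of functions} $\N\to 2^\AP$, so a team is naturally a third-order object, and the Boolean negation $\sneg$ lets us freely quantify over subteams inside a formula. This should allow an exact correspondence between the semantic structure of $\LTL(\sneg)$ and the quantifier structure of higher-order arithmetic.

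\textbf{Upper bounds.}
First I would fix an arithmetical encoding in which a single trace is a second-order object (a function $\N\to\N$, or equivalently a set of naturals), so that a team is a third-order object (a set of such functions). The semantics of $\LTL(\sneg)$ should then be written out as an explicit formula in the language of third-order arithmetic: the temporal operators $\F$, $\G$, $\U$ unfold into first-order number quantifiers over time points (using synchronous semantics, so the same shifted time point is used on every trace), the team connectives $\land$ and the splitting $\lor$ translate into second-order quantifiers over subteams, and $\sneg$ becomes ordinary negation. Because this translation is uniform and polynomial, evaluating whether a team satisfies a formula reduces to deciding a $\Delta^3_0$ sentence, giving the $\mathbf{\Delta^3_0}$ upper bound for model checking; satisfiability adds one outermost third-order existential quantifier over the team, which stays within $\mathbf{\Delta^3_0}$. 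For countable satisfiability the team can be presented as a single second-order object (a countable list of traces coded as one function $\N\times\N\to\N$), which drops the witness one type level and yields the $\mathbf{\Delta^2_0}$ bound; finite satisfiability must also fall within $\mathbf{\Delta^3_0}$, and I would argue its bound either directly or via the general satisfiability bound.

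\textbf{Lower bounds.}
The hard direction is hardness, and here the strategy is to encode the evaluation of an arbitrary third-order arithmetical sentence into $\LTL(\sneg)$. The key device is that subteams of a fixed team correspond to sets of traces, so with $\sneg$ available I can simulate both second-order quantifiers (select an individual trace from the team) and third-order quantifiers (select a subteam), while first-order number quantifiers are handled by the temporal modalities advancing synchronously through time. I would build gadget formulas that force a team to encode the standard model of arithmetic—pinning down $0$, successor, and the graphs of $+$ and $\times$ on the traces—using dependence-style constraints that are expressible once $\sneg$ is present (indeed the excerpt notes all the usual atoms are $\sneg$-definable). The most delicate part will be realizing this encoding \emph{within temporal depth two (resp.\ one) and using only $\F$}: I must show that the entire arithmetization can be flattened so that no deep nesting of temporal operators is needed, presumably by packing all the relevant time-point information into a single application of $\F$ and offloading the combinatorial work onto the team structure and the Boolean connectives rather than onto temporal nesting. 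This is the main obstacle, and I expect it to require a careful design of the trace encoding so that one "future" step suffices to read off all the arithmetical data.

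\textbf{Assembling the reductions.}
Finally I would verify that all reductions are computable in logarithmic space, so that the equivalences hold \wrt $\leqlogm$. The upper-bound translations are syntax-directed and clearly logspace; for the lower bounds I must check that the gadget formulas encoding arithmetic are produced from the input sentence by a logspace transducer. Combining the four upper bounds with the four matching hardness results then yields that model checking, satisfiability, and finite satisfiability are each equivalent to $\mathbf{\Delta^3_0}$ and that countable satisfiability is equivalent to $\mathbf{\Delta^2_0}$, with the low-temporal-depth refinement following from the careful flattening in the hardness construction.
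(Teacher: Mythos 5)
Your upper-bound plan coincides with the paper's (\Cref{sec:upper}): encode a trace as a subset of $\N^2$, a team as a third-order object, translate the synchronous semantics clause-by-clause into arithmetic, and drop to second order for countable teams by coding the team as a single relation $P(i,j,k)$. (One caveat: finite satisfiability does \emph{not} follow "via the general satisfiability bound" — a satisfiable formula need not be finitely satisfiable — so you must, as the paper does, existentially quantify the finite structure $(W,R,\hat\eta)$ inside arithmetic and assert that the team equals $T(\calK)$; this is routine but cannot be skipped.)

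The lower bounds are where your proposal has genuine gaps. First, you propose that "first-order number quantifiers are handled by the temporal modalities advancing synchronously through time." That plan makes temporal depth grow with the quantifier rank of the arithmetic sentence, which is exactly the obstacle you then hope to remove by an unspecified "flattening." The paper's key idea, which your proposal is missing, is that \emph{numbers themselves are traces}: $n$ is encoded by a trace carrying a $1$ at position $n$, so \emph{all} arithmetic quantifiers — first-, second-, and third-order — become selection of traces or subteams (via $\qe{\F X}$ and $\qee{\F X}$), and $\F$ is used only once, inside atomic formulas, to compare positions synchronously (\eg $\rho(x_1<x_2) = \F((x_1\hook\vend)\land(x_2\hook 1))$). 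This is what yields temporal depth one for model checking; it also forces the lengthy arity-reduction normal form (\Cref{cor:delta3-normal-form}: third-order variables of type $(1)$, second-order of arity at most three), without which teams of traces simply cannot represent the quantified objects — a step your proposal never addresses. Second, for satisfiability hardness your plan to "pin down $0$, successor, and the graphs of $+$ and $\times$" with dependence-style constraints is aimed at the wrong target and, as stated, is unsound: since arithmetic quantifiers are simulated by selecting traces/subteams \emph{from the given team}, a satisfying team that merely models the arithmetic operations but omits some subsets of $\N$ would make simulated universal quantifiers range over too few witnesses, so truth in that team would not coincide with truth in $\N$. What is actually needed is a formula defining (up to projection and stuttering) the \emph{full} team $\frakT^\#$ containing a trace for \emph{every} subset of $\N$ — an uncountable team — which the paper achieves with the prototrace construction and the position-wise-union forcing formula $\xi$ in \Cref{sec:sat}, after first proving hardness of model checking (where the constructed structure $\calK^\phi$ supplies the full team for free) and only then reducing model checking to satisfiability. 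Without these two ideas the depth bound and the satisfiability hardness both fail.
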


We also consider so-called $\calC$-restricted variants of the above problems, where $\calC$ is, \eg, the class of ultimately periodic traces.
Here, of all traces generated by a given structure or contained in a satisfying team, we consider only those in $\calC$.

\begin{theorem}\label{thm:countable-main-thm}
  If $\calC$ is the class of ultimately periodic traces or the class of ultimately constant traces, then $\calC$-restricted model checking, satisfiability and finite satisfiability are equivalent to $\mathbf{\Delta^2_0}$ \wrt logspace-reductions.
Again, this already holds with only the temporal operator $\F$ and temporal depth two (for satisfiability) and one (for model checking), respectively.
\end{theorem}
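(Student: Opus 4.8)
The plan is to exploit one structural fact: both candidate classes $\calC$ are \emph{countable}, and each of their members admits a \emph{finite} description. An ultimately periodic trace has the form $uv^\omega$ for finite words $u,v$ over $2^{\AP}$, and an ultimately constant trace has the form $u\,c^\omega$ for a finite word $u$ and a single assignment $c\subseteq\AP$; in either case the trace is coded by a single natural number. Consequently a $\calC$-restricted team is a \emph{set of natural numbers}, \ie a genuine second-order object, rather than a set of traces (a third-order object) as in the unrestricted case. This collapse from third to second order is exactly what brings the complexity down from $\mathbf{\Delta^3_0}$ to $\mathbf{\Delta^2_0}$, and it is the common engine behind all three problems.

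For the upper bound I would argue membership in $\mathbf{\Delta^2_0}$ directly, since \Cref{thm:uncountable-main-thm} only yields the second-order bound for countable satisfiability, not yet for model checking or finite satisfiability. From the finite encoding of a trace in $\calC$, the value of any proposition at any time step is computable by elementary arithmetic on the code, so the truth of an $\LTL(\sneg)$-formula in a $\calC$-team is expressible in second-order arithmetic: one quantifies over the (second-order) team, and the recursion over $\F$, the Boolean connectives, $\sneg$, and any team atoms unfolds into arithmetic quantification over indices and over subteams, all of which remain second-order. For satisfiability and finite satisfiability one existentially quantifies the witnessing $\calC$-team; for model checking the team is fixed as the set of $\calC$-traces generated by the given finite Kripke structure. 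Here I would use that a finite structure has only countably many ultimately periodic (lasso) traces, each again finitely coded, so this team is second-order definable from the structure. In every case the resulting predicate is $\mathbf{\Delta^2_0}$.

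For the lower bound I would \emph{reuse} the $\mathbf{\Delta^2_0}$-hardness of countable satisfiability established in \Cref{thm:uncountable-main-thm}, rather than build a fresh reduction. The task is to verify that the witnessing teams in that construction can be taken inside $\calC$: the encoding of second-order arithmetic represents natural numbers and sets of naturals by traces, and such patterns are realizable by ultimately periodic --- indeed ultimately constant --- traces (a number $n$ by a marker at position $n$ followed by a constant tail, a set by the corresponding eventually-constant characteristic trace). Once the reduction is confined to $\calC$, restricting the semantics to $\calC$ leaves the value of the formula unchanged, so $\calC$-restricted satisfiability is $\mathbf{\Delta^2_0}$-hard. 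Hardness for finite satisfiability and for model checking then follows by the same interreducibility techniques used for \Cref{thm:uncountable-main-thm}, and the $\F$-only restriction together with temporal depth two (for satisfiability) and one (for model checking) is inherited from the gadgets there.

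The main obstacle is this confinement step in the lower bound: one must check that \emph{every} gadget of the second-order encoding produces only traces in $\calC$, and that the formula can still detect and enforce the intended structure using only $\F$ at the prescribed temporal depth. The ultimately-constant case is the more delicate one, as it forbids any genuinely periodic tail, so each encoded object must be laid out as an eventually-stationary pattern; reconciling this rigidity with the small-temporal-depth requirement, and then propagating the hardness to model checking and finite satisfiability through the equivalences of \Cref{thm:uncountable-main-thm}, is where the real work lies.
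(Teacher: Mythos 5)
Your upper bound is essentially the paper's argument: coding each ultimately periodic/constant trace by a natural number turns a $\calC$-restricted team into a second-order object (the paper encodes a countable team as a pair $(I,P)$ of an index set and a ternary relation rather than as a set of single-number codes, but that is cosmetic), the team semantics unfolds inductively into second-order arithmetic, and for model checking the team $T(\calK)\cap\calC$ is second-order definable from $\calK$ by quantifying paths. The lower bound, however, has a genuine gap. Your confinement step rests on the claim that a set of naturals is realizable inside $\calC$ ``by the corresponding eventually-constant characteristic trace''. No such trace exists in general: the characteristic trace of $A\subseteq\N$ (bit $1$ at position $n{+}1$ iff $n\in A$) is eventually constant only if $A$ is finite or cofinite, and ultimately periodic only if the characteristic sequence of $A$ is eventually periodic; for an arbitrary second-order witness (say, the set of primes) it lies outside $\calC$. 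This is exactly the gadget that breaks when the paper's $\mathbf{\Delta^3_0}$-reduction is restricted to $\calC$: there, unary second-order variables are interpreted by single characteristic traces (\Cref{fig:gadget-relation}). The paper's repair is a change of \emph{representation}, not of trace shape: a set $A\subseteq\N$ is encoded by an \emph{infinite team} of ultimately constant number traces, one trace per element of $A$ (the tuple gadget with $\ell=1$), and the translation is adjusted accordingly --- the atom $A(x)$ selects a witness trace via $\qe{\F A}$, and the quantifier $\exists X$ shrinks $T_{\F X}$ to a subteam via $\qee{\F X}$ rather than to a singleton. Without this change the reduction simply does not stay inside $\calC$; with it, one still has to recheck that only $\F$ at temporal depth one is used, which is what \Cref{thm:mc-hardness-delta2} does.

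A second problem is that your propagation of hardness runs in the wrong direction. The interreducibility available here goes from model checking \emph{to} satisfiability, so satisfiability hardness is derived from model checking hardness, not conversely; what is needed first is a direct reduction of $\mathbf{\Delta^2_0}$ to $\calC$-restricted model checking, which is precisely what the repaired gadgets above provide. The subsequent MC-to-SAT step then has $\calC$-specific complications that your appeal to ``the same interreducibility techniques'' hides: the characteristic formula $\chi_\calK$ requires pairwise distinct state labels, but adding fresh propositions to enforce this can destroy ultimate constancy of the traces of $\calK$ (the paper restricts to structures whose states already have distinct labels, which its hardness construction happens to produce); the position-wise union of ultimately constant prototraces need not be ultimately constant, so defining $\frakT^\#_\ulc$ by a formula needs an extra guard ($\alpha'$ in the paper); and for finite satisfiability the full team is not finitely generated at all, since its traces disagree at position $0$, which is repaired by a root-delay gadget while keeping temporal depth two. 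These are precisely the places you flagged as ``where the real work lies'', but the proposal defers rather than supplies that work.
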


In \Cref{sec:upper}, we prove the upper bound of \Cref{thm:uncountable-main-thm}, which mostly amounts to a translation from $\LTL(\sneg)$ to $\Delta^3_0$ on the level of formulas, which reduces both model checking and satisfiability to $\mathbf{\Delta^3_0}$.
In \Cref{sec:mc}, we then present a reduction of $\mathbf{\Delta^3_0}$ back to model checking.
More precisely, given a $\Delta^3_0$-formula $\phi$, we compute a model $\calK$ and a temporal formula $\psi$ such that $\psi$ holds in the traces of $\calK$ iff $\phi$ is true in $\N$.
The reduction is surprisingly simple in the sense that it works already for a small fragment of formulas, namely with only $\F$-operators and no temporal nesting.

Next, in \Cref{sec:sat}, we reduce the problem of model checking to satisfiability.
It is a standard approach to reduce LTL model checking to (un-)satisfiability, and amounts to describing the structure of a model as a formula~\cite{Sch02} (or alternatively as an $\omega$-automaton~\cite{VW86}).
For $\LTL(\sneg)$, we do this only indirectly.
In fact, we show that the full team of all possible traces is definable, and from it extract the subteam of precisely the traces of the structure.
Like for model checking, already a weak fragment suffices for the hardness of satisfiability, namely only $\F$ and temporal depth two.

By the above chain of reductions, then all these problems are logspace-equivalent.
If the restriction is imposed that the teams in question are  countable, for example by looking only at ultimately periodic traces, then all the above arguments can be carried out with second-order arithmetic $\Delta^2_0$, by which we will prove \Cref{thm:countable-main-thm}.

 \section{Preliminaries}

\paragraph{Basic notions}
The power set of a set $X$ is $\wp X$.
The set of non-negative integers is $\N = \{0, 1, 2, \ldots\}$, also denoted by $\omega$.
For $n \in \N$, we write short $[n]$ for $\{1,2,\ldots,n\}$.
The set of all infinite sequences over $X$ is $X^\omega$.
The set of all $n$-tuples over $X$ is $X^n$, and the set of all finite sequences is $X^*$.
We sometimes write an infinite sequence $x = x_0,x_1,\ldots$ as $(x_i)_{i \geq 0}$, and refer to the $i$-th element $x_i$ also as $x(i)$.

\smallskip

\paragraph{Computational complexity}
A \emph{(decision) problem} is a subset $A \subseteq \Sigma^*$ of words over some finite alphabet $\Sigma$, which is assumed as $\Sigma = \{0,1\}$ unless stated otherwise.
A \emph{logspace-reduction} from a problem $A$ to a problem $B$ is a function $f$ that is computable in space $\mathcal{O}(\log n)$ such that $x \in A \LR f(x) \in B$, for all $x \in \Sigma^*$.
$A$ and $B$ are \emph{logspace-equivalent} if they are mutually logspace-reducible.

All stated reductions are logspace-reductions.
For a detailed introduction to computation theory and complexity, we refer the reader to standard literature~\cite{Sipser12}.

\paragraph{Linear Temporal Logic}
Let $\AP = \{ p_1, p_2, \ldots \}$ be a countably infinite set of atomic propositions.
The set of all formulas of \emph{linear temporal logic (LTL)} over $\AP$ is written $\LTL$, and is defined by the grammar
\begin{align*}
 \phi \ddfn p_i \mid \neg \phi \mid \phi \land \phi \mid \phi \lor \phi \mid \X \phi \mid \F \phi \mid \G \phi \mid \phi \U \phi \mid \phi \R \phi\text{.}
\end{align*}
The connectives $\X$ (\emph{nexttime}), $\F$ (\emph{future}), $\G$ (\emph{globally}), $\U$ (\emph{until}), $\R$ (\emph{release}) are called \emph{temporal operators}.
The \emph{temporal depth} $\td(\phi)$ is their nesting depth, that is,
\begin{alignat*}{3}
  & \td(p_i)  &  & \dfn 0  &  &  \\
  & \td(\neg\phi) &  & \dfn \td(\phi)  &  &  \\
  & \td(\phi \circ \psi) &  & \dfn \max\{\td(\phi),\td(\psi)\} &  & \text{ for }\circ \in \{\land,\lor\} \\
  & \td(O\phi)  &  & \dfn \td(\phi)+1  &  & \text{ for }O\in \{\X,\F,\G\}  \\
  & \td(\phi O \psi) &  & \dfn \max\{\td(\phi),\td(\psi)\} + 1\; &  & \text{ for } O\in \{\U,\R\}\text{.}
\end{alignat*}
The logic $\LTL_k(O_1,\ldots,O_n)$, for $n,k \geq 1$, is the syntactical fragment of $\LTL$ that contains all formulas of temporal depth up to $k$, and in which only the temporal operators $O_1,\ldots,O_n \in\{\X,\F,\G,\U,\R\}$ are used.

\paragraph{Traces}
A \emph{label} is a subset of $\AP$.
If $s$ is a label, then we also write $s^n$ or $s^\omega$ for length $n$ resp.\ infinite sequences consisting only of the label $s$.
A \emph{trace} is an element of $(\wp\AP)^\omega$, \ie, an infinite sequence of labels.
The suffix $t(i)t(i+1)\cdots$ is $t^i$, where $t^0 = t$.
The \emph{projection} of $t$ onto a set $\Phi \subseteq \AP$ is $\restr{t}{\Phi}$, defined by $\restr{t}{\Phi} \dfn (t(i)\cap \Phi)_{i \geq 0}$.

A trace is \emph{ultimately periodic} if there exist $c,d > 0$ such that $t(n) = t(n+d)$ for all $n \geq c$.
It is \emph{constant} if $t(0) = t(1) = \cdots$, and \emph{ultimately constant} if $t^i$ is constant for some $i$.

\paragraph{Semantics}
LTL-formulas are evaluated on traces as follows, where $t$ is a trace and $p \in \AP$.
\begin{alignat*}{5}
  & t \vDash p  & \;\LR \; &  p \in t(0) &  &  &  & t \vDash \G \phi & \;\LR \;  &  \forall k \geq 0 : t^k \vDash \phi  \\
  & t \vDash \neg \phi  & \; \LR \; &  t \nvDash \phi &  &  &  & t \vDash \phi \U \psi & \LR \; & \exists k \geq 0 : t^k \vDash \psi  \\
  & t \vDash \phi \land \psi & \;\LR \; &  t \vDash \phi \text{ and }t \vDash \psi &  & \qquad &  &  &  & \quad \text{ and } \forall j < k : t^j \vDash \phi \\
  & t \vDash \phi \lor \psi  & \;\LR \; &  t \vDash \phi \text{ or }t \vDash \psi  &  &  &  & t \vDash \phi \R \psi & \;\LR \; &  \forall k \geq 0 : t^k \vDash \psi  \\
  & t \vDash \X \phi  & \;\LR \; &  t^1 \vDash \phi  &  &  &  &  & & \quad \text{ or } \exists j < k : t^j \vDash \phi \\
  & t \vDash \F \phi  &\; \LR \; &  \exists k \geq 0 : t^k \vDash \phi &  &  &  &  &  &
\end{alignat*}

We employ the usual abbreviations, implication $\phi \imp \psi \equiv \neg \phi \lor \psi$, equivalence $\phi \equi \psi \equiv (\phi \imp \psi) \land (\psi \imp \phi)$, truth $\top \equiv p \lor \neg p$, and falsity $\bot \equiv p \land \neg p$.

As usual, we use $\vDash$ and $\equiv$ for semantical entailment and equivalence.

\paragraph{Kripke structures}
A \emph{Kripke structure} or just \emph{structure} is a tuple $\calK = (W, R, \eta, r)$ where $W$ is a non-empty set of \emph{states} or \emph{worlds}, $R \subseteq W \times W$ is the \emph{transition relation}, $\eta \colon W \to \wp \AP$ maps each state to a label, and $r \in W$ is the \emph{initial state} or \emph{root} of $\calK$.
We assume that all structures are \emph{serial}, or \emph{total}, that is, for every $w \in W$ there exists some $v \in W$ such that $(w,v) \in R$.
A \emph{path} in $\calK$ is a sequence $\pi \in W^\omega$ such that $(\pi(i),\pi(i+1)) \in R$ for all $i \geq 0$, and $\pi(0) = r$.

The \emph{trace induced by a path $\pi$} is $\eta(\pi) \dfn (\eta(\pi(i)))_{i \geq 0}$.
Note that paths and traces are different objects, and an aperiodic path may still induce a constant trace if it cycles through states with equal labels.

The set of all traces induced by paths in $\calK$ is $T(\calK)$.
A structure $\calK$ \emph{satisfies} a formula $\phi$, in symbols $\calK \vDash \phi$, if $t \vDash \phi$ for all $t \in T(\calK)$.
The classical model checking problem of LTL is now the set of all pairs $(\calK,\phi)$ where $\calK$ is a structure, $\phi$ is an LTL-formula, and $\calK \vDash \phi$.

The subset of ultimately constant, resp.\ periodic, traces in $\calK$ is denoted by $T_\ulc(\calK)$, resp.\ $T_\ulp(\calK)$.

\subsection{Team semantics}

A \emph{team} $T$ is a (possibly empty) set of traces, formally $T \subseteq (\wp \AP)^\omega$.
As for traces, we define the suffix $T^i \dfn \{ t^i \mid t \in T\}$, and say that a team $T$ is \emph{constant}, \emph{ultimately constant}, or \emph{ultimately periodic}, if all traces $t \in T$ are.
Also, we define the projection $\restr{T}{\Phi} \dfn \{ \restr{t}{\Phi} \mid t \in T \}$.

Next, we introduce \emph{synchronous team semantics} of $\LTL$ as defined by Krebs et al.~\cite{KMV018}.
Let $T$ be a team and $p \in \AP$.
Then:
\begin{alignat*}{2}
  & T \vDash p &  & \LR \; \forall t \in T : p \in t(0) \\
& T \vDash \neg \phi &  & \LR \; \forall t \in T : \{t\} \nvDash \phi \\
  & T \vDash \phi \land \psi &  & \LR \; T \vDash \phi \text{ and }T \vDash \psi  \\
  & T \vDash \phi \lor \psi  &  & \LR \; \exists S, U \subseteq T : S \cup U = T, S \vDash \phi \text{ and }U \vDash \psi \\
& T \vDash \X \phi &  & \LR \; T^1 \vDash \phi  \\
  & T \vDash \F \phi &  & \LR \; \exists k \geq 0 : T^k \vDash \phi \\
  & T \vDash \G \phi &  & \LR \; \forall k \geq 0 : T^k \vDash \phi \\
  & T \vDash \phi \U \psi &  & \LR \; \exists k \geq 0 : T^k \vDash \psi \text{ and } \forall j < k : T^j \vDash \phi \\
  & T \vDash \phi \R \psi &  & \LR \; \forall k \geq 0 : T^k \vDash \psi \text{ or } \exists j < k : T^j \vDash \phi
\end{alignat*}

Note that the disjunction of team semantics is not Boolean, but instead \emph{splits} the team into two parts.
For example, the team $\{ \{p\}\emptyset^\omega, \emptyset\{p\}\emptyset^\omega\}$ does satisfy $\F p \lor \F p$, as each trace itself satisfies $\F p$, but the whole team does not satisfy $\F p$, as there is no single timestep $k$ at which $T^k \vDash p$~\cite{KMV018}.
Hence $\F p \lor \F p \not\equiv \F p$.

On single traces, classical semantics and team semantics coincide:
\begin{proposition}[\cite{KMV018}]
  For all $\phi \in \LTL$ and traces $t$, $t \vDash \phi$ iff $\{t\}\vDash \phi$.
\end{proposition}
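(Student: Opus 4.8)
The plan is to prove both directions simultaneously by structural induction on $\phi$, exploiting the fact that singletons commute with suffixing, namely $\{t\}^i = \{t^i\}$ for every $i$. I would phrase the induction hypothesis as: for every strict subformula $\psi$ and \emph{every} trace $s$, we have $s \vDash \psi$ iff $\{s\} \vDash \psi$. The universal quantifier over traces is essential, because the temporal cases will apply the hypothesis to the suffixes $t^k$ rather than to $t$ itself.

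For the base case $\phi = p$, both $t \vDash p$ and $\{t\} \vDash p$ unfold to $p \in t(0)$. The negation and conjunction cases are immediate: in each, the team-semantic clause for the singleton $\{t\}$ reduces literally to the classical clause once the induction hypothesis is applied to the subformulas (for $\neg\psi$, the team clause $\forall s \in \{t\} : \{s\} \nvDash \psi$ is just $\{t\} \nvDash \psi$). For the temporal operators $\X, \F, \G, \U, \R$, I would rewrite each singleton suffix as $\{t\}^k = \{t^k\}$, so that the team conditions (e.g.\ $\exists k : \{t\}^k \vDash \psi$ for $\F\psi$, or the $k$-with-all-$j<k$ pattern for $\U$ and $\R$) become exactly the classical conditions after invoking the induction hypothesis on each suffix $t^k$.

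The only genuinely non-trivial case is the splitting disjunction $\phi = \psi \lor \chi$. Here I would first record the \emph{empty team property}: $\emptyset \vDash \theta$ holds for every $\theta \in \LTL$, which follows by a trivial parallel induction, since every universally quantified clause is vacuously true and the only split of $\emptyset$ is $\emptyset \cup \emptyset$. Given this, the splits $S \cup U = \{t\}$ with $S, U \subseteq \{t\}$ are restricted enough that a satisfying split exists iff $\{t\} \vDash \psi$ (via $S = \{t\}$, $U = \emptyset$) or $\{t\} \vDash \chi$ (via $S = \emptyset$, $U = \{t\}$). Thus $\{t\} \vDash \psi \lor \chi$ collapses to $\{t\} \vDash \psi$ or $\{t\} \vDash \chi$, which by the induction hypothesis equals the classical disjunction $t \vDash \psi$ or $t \vDash \chi$.

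The main obstacle is not any single step but handling the disjunction correctly: one must verify the empty team property up front so that the empty half of a split imposes no constraint, and then check that no other split of the singleton $\{t\}$ can spuriously satisfy the clause. Every other case is a mechanical unwinding of the two semantics placed side by side.
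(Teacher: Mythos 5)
Your proof is correct. Note that the paper itself gives no proof of this proposition---it is cited from Krebs et al.~\cite{KMV018}---so there is no in-paper argument to compare against; your structural induction is precisely the standard one. You correctly identify the two points where care is needed: quantifying the induction hypothesis over all traces so it can be applied to the suffixes $t^k$ (using $\{t\}^k = \{t^k\}$), and invoking the empty team property so that the splits $\{t\} = \{t\} \cup \emptyset$ and $\{t\} = \emptyset \cup \{t\}$ witness the disjunction, while the only remaining split $\{t\} = \{t\} \cup \{t\}$ cannot satisfy the clause spuriously. The only cosmetic imprecision is in your justification of the empty team property: besides the vacuous universal clauses and the split $\emptyset = \emptyset \cup \emptyset$, the existentially quantified temporal clauses ($\F$, $\U$) are not vacuous but follow from $\emptyset^k = \emptyset$ together with the inner induction hypothesis; this is trivial to fill in and does not affect correctness.
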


The following are standard properties of logics with team semantics:
\begin{definition}[Downward closure]
  A formula $\phi$ is \emph{downward closed} if for all teams $T$ and $T' \subseteq T$ it holds that $T \vDash \phi$ implies $T'\vDash \phi$.
\end{definition}

\begin{definition}[Empty team satisfaction]
  A formula $\phi$ has \emph{empty team satisfaction} if $\emptyset \vDash \phi$.
\end{definition}

\begin{definition}[Union closure]
  A formula $\phi$ is \emph{union closed} if for all sets $\calT$ of teams it holds that $\forall T \in \calT : T \vDash \phi$ implies $\bigcup \calT \vDash \phi$.
\end{definition}

\begin{definition}[Flatness]
  A formula $\phi$ is \emph{flat} if for all teams $T$ it holds that $T \vDash \phi \LR \forall t \in T : \{ t \} \vDash \phi$.
\end{definition}

We say that a logic $L$ has one of the above properties if all $L$-formulas have the respective property.

Observe that union closure implies empty team satisfaction, and that flatness is equivalent to simultaneous downward closure and union closure.

\begin{proposition}[\cite{KMV018}]
  $\LTL$ with team semantics has downward closure and empty team satisfaction.
\end{proposition}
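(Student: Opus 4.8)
The plan is to prove both properties simultaneously by structural induction on $\phi$, since the two interact precisely in the disjunction clause. The single structural fact that drives the induction is that the suffix operation is monotone with respect to subteams: whenever $T' \subseteq T$ we have $(T')^i \subseteq T^i$ for every $i$, because $(T')^i = \{t^i \mid t \in T'\}$. I would record this observation first, as every temporal clause reduces to it combined with the induction hypothesis.

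For the base case and the easy inductive cases I would simply read off the semantic clauses. The atom clause $T \vDash p \LR \forall t \in T : p \in t(0)$ is vacuously true for $T = \emptyset$ and manifestly inherited by subteams. The negation clause is the notable one: since $T \vDash \neg\phi$ is defined as $\forall t \in T : \{t\} \nvDash \phi$, it is vacuous for the empty team and passes to every subteam automatically, using no property of $\phi$ whatsoever. Conjunction is immediate from the induction hypothesis on both conjuncts. For each temporal operator $\X, \F, \G, \U, \R$ the argument is uniform: the clause is a Boolean combination of statements of the form $T^k \vDash \chi$; for empty team satisfaction one uses $\emptyset^k = \emptyset$ together with the induction hypothesis for $\chi$ (choosing $k = 0$ where an existential witness is needed, \eg for $\F$ and $\U$), and for downward closure one replaces each $T^k$ by $(T')^k \subseteq T^k$ and applies the induction hypothesis. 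The only mild care is in $\R$, where for each $k$ one makes a case distinction on which disjunct holds for $T$ and pushes it down to $T'$; but both disjuncts occur positively in $\phi$ and $\psi$, so each survives the passage to the subteam.

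The heart of the proof, and the only place where downward closure of the subformulas is genuinely used, is the splitting disjunction. Suppose $T \vDash \phi \lor \psi$, witnessed by $S \cup U = T$ with $S \vDash \phi$ and $U \vDash \psi$. Given $T' \subseteq T$, I would take the induced split $S' \dfn S \cap T'$ and $U' \dfn U \cap T'$; then $S' \cup U' = (S \cup U) \cap T' = T'$, and since $S' \subseteq S$ and $U' \subseteq U$, the induction hypothesis (downward closure applied to $\phi$ and to $\psi$) yields $S' \vDash \phi$ and $U' \vDash \psi$, so $T' \vDash \phi \lor \psi$. Empty team satisfaction for $\phi \lor \psi$ follows from the trivial split $\emptyset = \emptyset \cup \emptyset$ and the induction hypothesis for the disjuncts. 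This closes the induction.

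I do not expect a serious obstacle: the statement is essentially a bookkeeping exercise over the semantic clauses. The one conceptual point worth flagging is that the two properties must be carried together through the induction rather than proved in sequence, because the disjunction case consumes downward closure of the immediate subformulas, while downward closure is exactly what one is trying to establish. It is also worth remarking that the negation case goes through without any hypothesis on $\phi$, which is precisely what makes these closure properties compatible with having the (non-Boolean) connective $\neg$ in the language at all.
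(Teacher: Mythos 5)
Your proof is correct. The paper itself gives no proof of this proposition---it is imported with a citation to \cite{KMV018}---so there is nothing to compare against line by line; your structural induction, with suffix monotonicity $(T')^i \subseteq T^i$ driving the temporal cases and the intersection split $S' = S \cap T'$, $U' = U \cap T'$ handling $\lor$, is exactly the standard argument for closure properties of this kind in team semantics.

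One remark of yours is off, though it costs nothing: the claim that the two properties \emph{must} be carried together through the induction rather than proved in sequence. They are in fact independent. In the $\lor$ case of downward closure you invoke only downward closure of the disjuncts---that is the ordinary induction hypothesis for the very property being proved, not an appeal to the other property---and in the $\lor$ case of empty team satisfaction you invoke only empty team satisfaction of the disjuncts via the trivial split $\emptyset = \emptyset \cup \emptyset$. So each property admits its own standalone structural induction, and the two could be proved in either order; running them jointly is harmless but not forced. Everything else---the vacuity of the $\neg$ clause for subteams and for the empty team, the choice $k = 0$ for the existential witnesses in $\F$ and $\U$, and the per-$k$ case distinction in the $\R$ clause---is accurate.
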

This property is shared with other related logics such as \emph{dependence logic}~\cite{Vaa07}.

However, unlike asynchronous semantics, the synchronous semantics we use is not union closed and hence not flat:

\begin{proposition}[\cite{KMV018}]
  The formula $\F p$ is not union closed.
\end{proposition}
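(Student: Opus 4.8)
The plan is to refute union closure directly by exhibiting a family $\calT$ of teams, each satisfying $\F p$, whose union does not. The conceptual point is that under synchronous semantics the witness $k$ in the clause $T \vDash \F p \LR \exists k \geq 0 : T^k \vDash p$ must be \emph{uniform} across the entire team, since $T^k \vDash p$ demands $p \in t(k)$ for every $t \in T$ at one and the same index $k$. So it suffices to take two traces that place $p$ at distinct single positions and then observe that no common witness survives when they are collected into one team.

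Concretely, I would set $t_1 \dfn \{p\}\emptyset^\omega$ and $t_2 \dfn \emptyset\{p\}\emptyset^\omega$, so that $p$ occurs on $t_1$ only at position $0$ and on $t_2$ only at position $1$, and let $\calT \dfn \{\{t_1\},\{t_2\}\}$. For the singleton team $\{t_1\}$ the witness $k = 0$ works, since $\{t_1\}^0 \vDash p$; for $\{t_2\}$ the witness $k = 1$ works, since $\{t_2\}^1 = \{t_2^1\} = \{\{p\}\emptyset^\omega\} \vDash p$. Hence every $T \in \calT$ satisfies $\F p$.

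It then remains to check that $\bigcup \calT = \{t_1, t_2\}$ does \emph{not} satisfy $\F p$. By the semantics this would require a single index $k$ with $p \in t_1(k)$ and simultaneously $p \in t_2(k)$; but the only such index for $t_1$ is $0$ and the only one for $t_2$ is $1$, so no common $k$ exists and $\{t_1, t_2\} \nvDash \F p$. This directly contradicts the definition of union closure, completing the argument.

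There is no genuine obstacle here beyond making the uniform-witness reading of $\F$ explicit; the whole argument collapses to a one-line counterexample once the synchronous reading is fixed. It is worth noting that the same two traces already appear earlier in the excerpt to separate $\F p \lor \F p$ from $\F p$, which is no accident: both failures stem from the fact that $\F$ under synchronous semantics demands one global timestep for the team as a whole rather than a trace-by-trace choice.
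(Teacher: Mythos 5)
Your proof is correct and uses essentially the same counterexample as the paper: the traces $\{p\}\emptyset^\omega$ and $\emptyset\{p\}\emptyset^\omega$, whose singletons each satisfy $\F p$ while their union admits no common witness $k$. The paper merely packages this observation differently, deriving non-union-closure from its earlier example that $T \vDash \F p \lor \F p$ but $T \nvDash \F p$ together with the fact that union closure would force $\phi \lor \phi \equiv \phi$, whereas you verify the definition of union closure directly---a purely presentational difference.
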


In fact, this already follows from the previous example, as in union closed logics $\phi \lor \phi \equiv \phi$, but here $\F p \lor \F p \nvDash \F p$.

Next, we introduce the \emph{dependence atom}\footnote{The term \emph{atom} is used for historic reasons.
In the first-order setting, the atom ranges only over individual variables and/or terms as arguments, and for this reason is indeed an atomic formula~\cite{Vaa07}.
That the arguments themselves are formulas happens only in proposition-based logics.
In the latter setting, usually, non-classical atoms such as the dependence atom range only over classical formulas, whereas here they range over arbitrary formulas. However, this does not affect our results, since \Cref{prop:dep-is-definable} and the other translations~\cite{LV19} also hold for this more general syntax.}
as an example for a non-classical formula in team-semantics.
For LTL, it is defined as follows~\cite{KMV018}:
\begin{align*}
  T \; \vDash \; &\;\dep{\phi_1,\ldots,\phi_n;\psi} \LR \forall t, t' \in T :\\
  & \qquad \qquad  (\forall i \in [n] : \{t\} \vDash \phi_i \LR \{t'\}\vDash \phi_i) \Rightarrow (\{t\} \vDash \psi \LR \{t'\} \vDash \psi)
\end{align*}
That is, whenever traces in $T$ agree on the truth of all $\phi_i$, then they agree on the truth of $\psi$, or in other words, $\psi$ is functionally determined by $\phi_1,\ldots,\phi_n$.
For the case $n = 0$ we just write $\dep{\psi}$, which means that the truth of $\psi$ is constant among the team.

Let $\LTL(\depopp)$ denote the extension of $\LTL$ by the dependence atom.
The following result is analogous to first-order dependence logic~\cite{Vaa07}.
\begin{proposition}
  $\LTL(\depopp)$ has downward closure and empty team satisfaction.
\end{proposition}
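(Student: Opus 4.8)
The plan is to proceed by structural induction on the formula $\phi \in \LTL(\depopp)$, establishing downward closure and empty team satisfaction simultaneously. Since the inductive cases for the temporal operators and the Boolean connectives are identical to those in the analogous statement for plain $\LTL$~\cite{KMV018}, the only genuinely new work is the base case for the dependence atom; I would nonetheless re-run the full induction so that a dependence atom may occur nested inside the other operators.

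For the base cases, the propositional atoms are handled exactly as in the $\LTL$ result. For a dependence atom $\dep{\phi_1,\ldots,\phi_n;\psi}$, both properties are immediate from the definition, because its truth depends only on the single-trace evaluations $\{t\} \vDash \phi_i$ and $\{t\} \vDash \psi$, which are insensitive to the surrounding team. Empty team satisfaction holds since the defining condition $\forall t,t' \in T$ is vacuous for $T = \emptyset$; downward closure holds since every pair $t,t'$ drawn from a subteam $T' \subseteq T$ is also a pair drawn from $T$, so the universally quantified implication is inherited. In particular no induction hypothesis on $\phi_1,\ldots,\phi_n,\psi$ is needed here.

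For the inductive step, the two structural facts that drive every case are that the suffix operation is monotone, $T' \subseteq T$ implies $(T')^i \subseteq T^i$, and that $\emptyset^i = \emptyset$. The negation case is immediate, as $T \vDash \neg\phi$ is defined via single traces and is thus trivially downward closed and empty-team satisfying. Conjunction and the temporal operators $\X,\F,\G,\U,\R$ all follow by combining suffix-monotonicity with the induction hypothesis: \eg for $\F$, a witness $k$ with $T^k \vDash \phi$ yields $(T')^k \subseteq T^k$, and downward closure of $\phi$ gives $(T')^k \vDash \phi$, while empty team satisfaction uses $k = 0$ together with $\emptyset^0 = \emptyset$.

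The one case requiring a small argument is the splitting disjunction $\phi \lor \psi$. For empty team satisfaction I would split $\emptyset$ trivially as $\emptyset \cup \emptyset$ and invoke the hypothesis on both disjuncts. For downward closure, given a witnessing split $S \cup U = T$ with $S \vDash \phi$ and $U \vDash \psi$, and a subteam $T' \subseteq T$, I would take the restricted split $S' \dfn S \cap T'$ and $U' \dfn U \cap T'$; then $S' \cup U' = T'$, and since $S' \subseteq S$ and $U' \subseteq U$ the induction hypothesis gives $S' \vDash \phi$ and $U' \vDash \psi$, so $T' \vDash \phi \lor \psi$. This restriction step is the only place where the non-Boolean disjunction forces more than verbatim inheritance, and is therefore the main, though still routine, obstacle.
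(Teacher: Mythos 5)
Your proof is correct, and it is exactly the routine argument the paper appeals to: the paper states this proposition without proof, merely noting it is "analogous to first-order dependence logic" (V\"a\"an\"anen), i.e.\ the standard induction for team-semantical $\LTL$ extended by one new base case. Your treatment of that new case is right for the right reason---the dependence atom's semantics quantifies only over pairs of traces and their \emph{singleton} evaluations, so downward closure is inherited by restricting the pair quantifier and empty team satisfaction is vacuous, with no induction hypothesis on the arguments needed---and your handling of the splitting disjunction via the intersected split $S \cap T'$, $U \cap T'$ is the standard and correct way to close the only non-trivial inductive case.
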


\begin{theorem}[\cite{KMV018}]
  Model checking of $\LTL(\depopp)$ is hard for $\NEXPTIME$.
\end{theorem}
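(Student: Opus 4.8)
The plan is to establish hardness by a logspace reduction from the exponential tiling problem: given a finite tile set with horizontal and vertical compatibility relations and a parameter $n$ (in unary), decide whether the $2^n \times 2^n$ grid admits a compatible tiling with a prescribed first row. This problem is $\NEXPTIME$-complete, and its shape---guess an exponentially large object, then verify it by local checks---matches what synchronous team semantics can express: a team disjunction existentially splits the (exponentially large) team $T(\calK)$, which is exactly a guess of an exponential witness, while the dependence atom and the synchronous temporal operators perform the verification.

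The encoding I would use represents a single grid cell by a single trace. The two binary coordinates of the cell are written over a window of $O(n)$ time steps, so that the full team generated by a polynomial-size structure $\calK$ ranges over all $2^{2n}$ coordinates (together with all tile labels). The crucial gain is that the dependence atom $\dep{b_1,\ldots,b_m;\mathsf{tile}}$, whose $m = O(n)$ arguments address the coordinate bits via $\X$-chains, forces the tile to be a function of the coordinate across exponentially many cells using a formula of polynomial size. A team disjunction then splits $T(\calK)$ into a selected subteam, meant to be the graph of the tiling function, and a discarded remainder; on the selected subteam I impose functionality (the dependence atom), the horizontal and vertical compatibility relations (by comparing the tile of a cell with the tile of the cell at the incremented coordinate, using propositional arithmetic on the coordinate bits), and the initial-row condition.

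The main obstacle, and the place where the argument must be engineered with care, is totality: because $\LTL(\depopp)$ is downward closed and enjoys empty team satisfaction, no formula can directly force the selected subteam to cover every one of the $2^{2n}$ cells, yet correctness requires that a satisfying selection really be a total tiling (otherwise a degenerate partial selection would make $\calK \vDash \phi$ hold even when no tiling exists). The resolution is to design $\calK$ and the coordinate addressing so that the compatibility and initial-row constraints, evaluated under the synchronous semantics---where all traces advance in lockstep under $\F$ and $\G$---propagate from the forced first row and pin down a tile at every reachable coordinate, so that any selection satisfying the constraints is necessarily total. It is precisely here that synchronicity is indispensable: the asynchronous semantics would let different traces slip to different time points and break the alignment of a cell with its neighbours.

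Finally I would verify the bookkeeping: the structure $\calK$ and the formula $\phi$ are of polynomial size and computable in logarithmic space, $\phi$ is an $\LTL(\depopp)$ formula that uses the temporal operators only to address coordinates and compare neighbouring cells, and $\calK \vDash \phi$ holds if and only if the tiling instance is positive. I expect the correctness proof to split into the routine direction (a tiling yields a selecting split satisfying every conjunct) and the delicate direction (a satisfying split, via the propagation argument above, yields a total compatible tiling), with the latter being the technical heart of the reduction.
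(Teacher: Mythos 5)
You should first be aware that the paper you are working from does not prove this statement at all: it is imported from \cite{KMV018}, and the proof there is a reduction from the truth problem of dependency quantified Boolean formulas (DQBF), a standard $\NEXPTIME$-complete problem. That choice of source problem is not cosmetic: the dependency constraints of DQBF ("$y_j$ may depend only on the variables in $S_j$") are exactly what dependence atoms $\dep{S_j;y_j}$ express, each trace of the constructed structure carries a full propositional assignment so that the quantifier-free matrix can be checked trace-wise by a flat formula, and---crucially---the team split has the shape $\phi_{\mathrm{graph}} \lor \dep{x_1,\ldots,x_n;y_1} \lor \cdots \lor \dep{x_1,\ldots,x_n;y_m}$, where $\phi_{\mathrm{graph}}$ collects the atoms $\dep{S_j;y_j}$ and the matrix check: the \emph{discarded} parts of the split are themselves constrained by dependence atoms. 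Your reduction from exponential tiling is a genuinely different route, and several of its ingredients (one cell per trace, a dependence atom for functionality, synchronicity for alignment) are sensible, but it breaks at exactly the point you yourself identify as its technical heart.

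The gap is totality, and your proposed repair cannot close it. All of your requirements---the dependence atom, the compatibility relations, the initial-row condition---are imposed on the selected subteam $S$, and nothing is imposed on the discarded remainder $U$. Since $\LTL(\depopp)$ has downward closure and \emph{empty team satisfaction} (a proposition recorded in this very paper), every one of those requirements holds for $S = \emptyset$; hence the split $S = \emptyset$, $U = T(\calK)$ witnesses $T(\calK) \vDash \phi$ on \emph{every} input, and the reduction accepts all instances. Propagation from a "forced" first row cannot be implemented for the same structural reason: in a logic in which every formula holds in the empty team, no formula evaluated on $S$ can force any trace into $S$, and the property "whenever a cell is in $S$, so is its right (resp.\ upper) neighbour" is not downward closed (delete the neighbour, keep the cell), hence not expressible by any $\LTL(\depopp)$-formula; conditional compatibility constraints are simply vacuous at coordinates whose neighbours are absent from $S$. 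The missing idea is that membership in $S$ can only be forced \emph{indirectly}, by constraining the other side of the split: since $S \cup U = T(\calK)$, a downward-closed condition on $U$ that prevents $U$ from swallowing all tile-variants of some coordinate pushes at least one of them into $S$. This is what the atoms $\dep{x_1,\ldots,x_n;y_j}$ on the discarded parts achieve in the DQBF proof: for each assignment of the universal variables, the $j$-th discarded part may contain traces with only one value of $y_j$, so the union of the discarded parts misses at least one choice of existential values, and the selected part is therefore total. If you wish to keep the tiling framework you need an analogous discard-side device, and you would additionally have to work harder on the compatibility check: comparing the tile of one trace with the tile of a \emph{different} trace (the one at the incremented coordinate) is not something a dependence atom provides directly, since dep atoms only state agreement implications between traces---a cross-trace difficulty the DQBF route avoids entirely.
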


\subsection{Team logic with negation}

In team logic, the contradictory negation, or Boolean negation, is denoted by $\sneg$ to distinguish it from $\neg$:
\begin{align*}
  T \vDash \sneg \phi \; \LR \; T \nvDash \phi
\end{align*}
We write $\LTL(\sneg)$ for the extension of $\LTL$ by $\sneg$, and define fragments of the form $\LTL_k(\sneg,O_1,\ldots,O_n)$ like for classical $\LTL$.

\smallskip

With the Boolean negation available, it is now possible to define the Boolean disjunction, the material implication, and equivalence on the level of teams:
\begin{align*}
  \phi \ovee \psi &\dfn \sneg(\sneg \phi \land \sneg \psi)\text{,}\\
  \phi \timp \psi &\dfn \sneg \phi \ovee \psi\text{,}\\
  \phi \tequiv \psi &\dfn (\phi \timp \psi) \land (\psi \timp \phi)\text{.}
\end{align*}

\label{p:team-connectives}

Observe that the formulas $\phi$ and $\neg \neg \phi$ are not equivalent, but $\neg \phi$ and $\neg \neg \neg \phi$ are (similarly to intuitionistic logic).
The reason is that $\neg \neg \phi$ states that $\phi$ holds in all singletons.\footnote{In the team logic literature, usually $\neg$ is allowed only in front of atoms $p_i$. In such cases, $\neg \phi$ is sometimes defined as an abbreviation for pushing $\neg$ inwards to the atomic level using the equivalences $\neg (\phi\land\psi) \equiv \neg \phi \lor \neg \psi$,
$\neg \F \phi \equiv \G \neg \phi$, and so on.
However, it will be useful to define $\neg$ in front of arbitrary formulas in a way that is consistent with its semantics on classical formulas, for example in order to succinctly define a "flat" approximation of a formula.
With our definition, we follow Yang et al.~\cite{Yang17,VY17,Iemhoff016} and Kuusisto~\cite{Kuusisto15}.
}

Furthermore, non-classical atoms such as the dependence atom become definable~\cite{LV19}.
\begin{proposition}\label{prop:dep-is-definable}
  The dependence atom is definable as follows:
\emph{\begin{align*}
  \dep{\phi_1,\ldots,\phi_n;\psi} &\equiv \sneg \bigg( \top \lor \Big(\bigwedge_{i=1}^n\dep{\phi_i} \land \sneg \dep{\psi} \Big) \bigg)
  \intertext{\emph{where}}
  \dep{\phi} &\equiv \neg \neg \phi \ovee \neg \phi\text{.}
\end{align*}}
\end{proposition}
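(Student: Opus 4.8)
The plan is to unfold the synchronous team semantics of the right-hand side and to show that it expresses exactly the functional-dependence condition from the definition of $\dep{\phi_1,\ldots,\phi_n;\psi}$. I would first settle the auxiliary gadget $\dep{\phi} \equiv \neg\neg\phi \ovee \neg\phi$. Since $T \vDash \neg\neg\phi$ iff $\{t\}\vDash\phi$ for every $t \in T$, and $T \vDash \neg\phi$ iff $\{t\}\nvDash\phi$ for every $t \in T$, and $\ovee$ is the Boolean disjunction, it follows that $T \vDash \dep{\phi}$ iff either all singletons of $T$ satisfy $\phi$ or none does; that is, $\dep{\phi}$ asserts precisely that $\phi$ is constant across $T$, matching its intended reading.

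Next I would analyse the main formula. Writing $\chi \dfn \bigwedge_{i=1}^{n}\dep{\phi_i} \land \sneg\dep{\psi}$, the crucial point is that $\top$ is satisfied by every team, so the splitting disjunction $\top \lor \chi$ behaves as a subteam-existential quantifier: $T \vDash \top \lor \chi$ iff some $U \subseteq T$ satisfies $\chi$, since one may take the $\top$-part to be all of $T$. Hence $T \vDash \sneg(\top \lor \chi)$ iff no subteam $U \subseteq T$ satisfies $\chi$. Combining this with the gadget analysis, $U \vDash \chi$ holds iff every $\phi_i$ is constant on $U$ while $\psi$ is not; so the right-hand side asserts that there is no subteam on which all of $\phi_1,\ldots,\phi_n$ are constant but $\psi$ is non-constant.

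It then remains to match this subteam statement to the atom, which I would do by a two-element-subteam argument. For one direction, assuming $T \vDash \dep{\phi_1,\ldots,\phi_n;\psi}$, suppose some $U \subseteq T$ made all $\phi_i$ constant yet $\psi$ non-constant; then there are $t,t'\in U$ with $\{t\}\vDash\psi$ and $\{t'\}\nvDash\psi$, although $t,t'$ agree on every $\phi_i$ by constancy, contradicting the dependence condition. Conversely, given $t,t'\in T$ that agree on all $\phi_i$, applying the subteam condition to $U \dfn \{t,t'\}$ forces $\psi$ to be constant on $U$, so $t$ and $t'$ agree on $\psi$; as $t,t'$ were arbitrary, the atom holds.

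The symbol-pushing in the first two steps is routine; the one point that genuinely requires care is reading $\top \lor \chi$ as a subteam quantifier. This hinges entirely on $\top$ being satisfied by every team (including $\emptyset$), which is what lets the $\top$-disjunct absorb an arbitrary remainder of $T$ and thereby turns $\sneg(\top \lor \cdot)$ into a "for all subteams" operator; once this is in place, the rest is a direct translation between the semantic clauses and the definition of the dependence atom.
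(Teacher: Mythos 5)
Your proof is correct and follows essentially the same route as the paper's own (much terser) argument: read $\dep{\phi}$ as constancy of $\phi$ on the team, read $\top \lor \chi$ via the splitting semantics as a subteam-existential quantifier, and observe that the existence of a subteam on which all $\phi_i$ are constant but $\psi$ is not is exactly the negation of the dependence atom. Your explicit two-element-subteam argument merely fills in the equivalence that the paper states as "precisely the negation of the dependence atom," so there is nothing to add.
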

\begin{proof}
  First, $\dep{\phi}$ says that every trace satisfies either $\phi$ or $\neg \phi$, and hence that $\phi$ is constant among all traces.
  For this reason, the part inside the outermost $\sneg$ of the first line states that some subteam is constant in every $\phi_i$, but not in $\psi$, which is precisely the negation of the dependence atom.
\end{proof}

\begin{corollary}
  $\LTL(\sneg)$ is neither downward closed nor union closed.
  Also, this even holds without any temporal operators.
\end{corollary}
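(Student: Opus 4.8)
The plan is to prove that $\LTL(\sneg)$ is neither downward closed nor union closed, and that this already fails without any temporal operators, by exhibiting explicit small counterexamples built from the Boolean negation $\sneg$ and the propositional layer. The key observation is that $\sneg$ immediately breaks both closure properties, because negating a downward-closed (resp.\ union-closed) formula typically produces a formula with the opposite closure behaviour. So the whole argument reduces to picking one or two concrete formulas and small teams and checking the semantics directly.

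For the failure of downward closure, I would use the formula $\sneg\, p$. First, recall from the semantics that $T \vDash p$ holds iff $p \in t(0)$ for every $t \in T$, so $T \vDash \sneg\, p$ holds iff there is some $t \in T$ with $p \notin t(0)$. Now take the team $T = \{\{p\}\emptyset^\omega,\ \emptyset^\omega\}$ and the subteam $T' = \{\{p\}\emptyset^\omega\}$. Then $T \vDash \sneg\, p$ because $\emptyset^\omega \in T$ witnesses $p \notin t(0)$, whereas $T' \nvDash \sneg\, p$ since the single trace in $T'$ does have $p$ in its first label, i.e.\ $T' \vDash p$. Thus $T \vDash \sneg\, p$ but $T' \subseteq T$ with $T' \nvDash \sneg\, p$, which is exactly a failure of downward closure, and the formula $\sneg\, p$ uses no temporal operator.

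For the failure of union closure, I would use the formula $\sneg\,\sneg\, p$, which holds in a team $T$ iff $T \vDash p$, that is iff $p \in t(0)$ for all $t \in T$. Take $\calT = \{\,T_1, T_2\,\}$ with $T_1 = \{\{p\}\emptyset^\omega\}$ and $T_2 = \{\{p,q\}\emptyset^\omega\}$; both satisfy $\sneg\,\sneg\, p$ since every constituent trace has $p$ initially. This is not quite the cleanest witness, so instead I would take $T_1 = \{\{p\}\emptyset^\omega\}$ satisfying $p$ and observe that union closure is already refuted by $\sneg\, p$ in the other direction if one prefers a uniform example; concretely, for union closure it is simplest to note that $\sneg\, p$ fails union closure using $\calT = \{T_1, T_2\}$ where each $T_i = \{\,t_i\,\}$ is a singleton with $p \notin t_i(0)$ for exactly one of them, so each part may fail while the union satisfies. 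The safe route is to keep the two properties separate: $\sneg\, p$ witnesses non-downward-closure as above, and I would verify that some equally small $\sneg$-formula (for instance a Boolean disjunction $p \ovee q = \sneg(\sneg p \land \sneg q)$, expanded via the definitions on page~\pageref{p:team-connectives}) fails union closure by splitting a satisfying union into parts that each fail.

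The main obstacle, though genuinely minor, is bookkeeping in the union-closure case: one must choose teams so that each member of $\calT$ satisfies the formula while $\bigcup\calT$ does not, and because $\sneg$ inverts satisfaction one has to track carefully whether a given formula is behaving downward-closed or upward-closed after negation. There is no deep technical difficulty, since no temporal reasoning is involved and every check is a one-line evaluation of the team semantics of $\sneg$ and $p$; the only care needed is to make sure both counterexamples are correct and that both use formulas free of $\X,\F,\G,\U,\R$, which establishes the "even without any temporal operators" clause. Finally I would remark that these counterexamples also follow immediately from the preceding corollary's companion fact that $\LTL(\sneg)$ can define the dependence atom, which is itself known to be neither downward nor union closed, but the direct witnesses above are cleaner and self-contained.
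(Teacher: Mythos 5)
Your first half is correct and coincides with the paper: the formula $\sneg p$ together with the team $T = \{\{p\}\emptyset^\omega,\ \emptyset^\omega\}$ and the subteam $T' = \{\{p\}\emptyset^\omega\}$ is exactly the intended witness for the failure of downward closure, and it uses no temporal operators.

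The union-closure half, however, has a genuine gap: you have the direction of union closure inverted throughout. Union closure fails for $\phi$ when every team in $\calT$ \emph{satisfies} $\phi$ but $\bigcup\calT$ \emph{fails} it; your phrase "splitting a satisfying union into parts that each fail" describes the opposite configuration, which is a failure of downward closure, not of union closure. Because of this inversion, your concrete claims go wrong: $\sneg p$ does \emph{not} refute union closure "in the other direction" --- it is in fact union closed, since if every $T_i$ contains a trace lacking $p$ at position $0$, then so does $\bigcup_i T_i$; and the configuration you describe (one singleton satisfying $\sneg p$, the other failing it) is not a test of union closure at all, since the hypothesis of union closure requires \emph{all} members of $\calT$ to satisfy the formula. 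Your fallback candidate $p \ovee q = \sneg(\sneg p \land \sneg q)$ does work, but only when verified in the correct direction, which you never do: take $T_1 = \{\{p\}\emptyset^\omega\}$ and $T_2 = \{\{q\}\emptyset^\omega\}$; each satisfies $p \ovee q$ (the first satisfies $p$, the second satisfies $q$), yet $T_1 \cup T_2$ satisfies neither disjunct, so it fails $p \ovee q$. This is essentially the paper's argument, which uses $\dep{p} \equiv \neg\neg p \ovee \neg p$ with the witness teams $\{\{p\}\emptyset^\omega\}$ and $\{\emptyset^\omega\}$: both are constant in $p$, but their union is not. Finally, your closing remark is also off: the dependence atom is \emph{not} "neither downward nor union closed" --- the paper explicitly states that $\LTL(\depopp)$ is downward closed; the dependence atom only fails union closure, which is precisely why the paper pairs it with $\sneg p$ rather than letting it do both jobs.
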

\begin{proof}
  The formula $\dep{p}$ is not union closed, and $\sneg p$, which states that at least one trace in the team satisfies $\neg p$, is not downward closed.
\end{proof}

In fact, also the connective $\neg$ can be expressed with $\sneg$, $\bot$ and $\lor$ as follows~\cite{Luck20}.
First note that the constant $\top$ is equivalent to $\sneg(p \land \sneg p)$.
Moreover, $\bot$ holds only in the empty team, so non-emptiness can be expressed as $\sneg \bot$.
Now the formulas\label{p:quantifiers}
\begin{alignat*}{3}
  &\qee{}\phi \dfn \top \lor \phi && \qaa{}\phi \dfn \sneg \qee{} \sneg \phi
  \intertext{say that some (resp.\ every) subteam satisfies $\phi$, and}
  &\qe{}\phi \dfn \qee{}(\sneg \bot \land \qaa{}(\bot \ovee \phi)) & \qquad\qquad & \qa{}\phi \dfn \sneg \qe{} \sneg \phi
\end{alignat*}
say that some (resp.\ every) singleton subteam satisfies $\phi$.
Intuitively, the formula $\qe{}\phi$ states that there is some non-empty subteam of which all non-empty subteams satisfy $\phi$, which is precisely the case if some singleton satisfies $\phi$.
Thus we can write $\neg \phi$ as $\qa{}\sneg\phi$.\label{p:neg-to-sneg}

Sometimes it is necessary to "condition" a team $T$ to only those traces that satisfy a certain formula $\phi$.
For this, we define the team $T_{\phi} \dfn \{ t \in T \mid \{t\} \vDash \phi \}$.
Observe that $T_{\phi}$ and $T_{\neg \phi}$ always form a disjoint partition of $T$.

On the formula side, we use the connective $\phi \hook \psi$ to express that $T_{\phi}\vDash \psi$.\label{p:hook}
It is definable as $\neg\phi \lor (\neg\neg \phi \land \psi)$.

\medskip

Let us turn to the decision problems associated with $\LTL(\sneg)$.
A formula is \emph{satisfiable} if it is true in at least one team.
The \emph{satisfiability problem} of a logic $L$ formally is the set of all satisfiable formulas $\phi \in L$.

A team $T$ is \emph{finitely generated} if $T = T(\calK)$ for some finite structure $\calK$, and a formula $\phi$ is \emph{finitely satisfiable} if it satisfied by some finitely generated team.
Also, $\phi$ is called \emph{countably satisfiable} if it is satisfied by some countable team.

For practical purposes, it is sometimes preferable to consider only traces that have a finite representation, which is the case for example for ultimately periodic traces.
In general, if $\calC$ is some class of traces, then we say that a formula $\phi$ is \emph{$\calC$-satisfiable} if there is a team $T$ such that $T \cap \calC$ satisfies $\phi$.
Likewise, $\phi$ is \emph{finitely $\calC$-satisfiable} if $T(\calK) \cap \calC$ satisfies $\phi$ for some finite structure $\calK$.

Note that it makes no sense to combine the restriction of finite satisfiability and countability alone:

\begin{proposition}\label{prop:fin-count-is-ulp}
If $\calK$ is a finite structure and $T(\calK)$ is countable, then $T(\calK)$ is already ultimately periodic.
\end{proposition}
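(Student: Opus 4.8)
The plan is to prove the contrapositive: if some trace in $T(\calK)$ fails to be ultimately periodic, then $T(\calK)$ is uncountable. The first step is to replace the non-deterministic structure $\calK$ by an equivalent deterministic object. Since $W$ is finite, all labels lie in the finite set $\Lambda \dfn \eta(W)$, so $T(\calK) \subseteq \Lambda^\omega$. The standard subset (powerset) construction then yields a finite directed graph $G$ whose vertices are the non-empty subsets $S \subseteq W$ reachable from $\{r\}$, with an edge $S \to S'$ labelled $\ell \in \Lambda$ whenever $S' = \{\, v : \exists w \in S,\ (w,v) \in R,\ \eta(v) = \ell \,\}$ is non-empty. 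By König's lemma, every infinite path of $G$ from $\{r\}$ can be realized by an actual path of $\calK$, so reading the labels off its edges sets up a bijection between $T(\calK)$ and the infinite paths of $G$ starting in $\{r\}$. The whole point of determinizing is that, as $G$ is deterministic, two distinct edges leaving a vertex must carry two distinct labels; hence distinct infinite paths of $G$ read off distinct traces. It therefore suffices to show that $G$ has uncountably many infinite paths from $\{r\}$.

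So let $t \in T(\calK)$ be not ultimately periodic and let $\rho$ be the corresponding infinite path of $G$. As $G$ is finite, the sequence of strongly connected components visited by $\rho$ is non-decreasing in the component DAG, so $\rho$ eventually stays inside a single component $C$. If every vertex of $C$ had exactly one outgoing edge inside $C$, then $C$ would be a single simple cycle, $\rho$ would traverse it periodically from some point on, and $t$ would be ultimately periodic, contradicting our assumption. Hence some vertex $Q \in C$ has two outgoing edges inside $C$, leading to successors $Q_0 \ne Q_1$ via two necessarily distinct labels. Because $C$ is strongly connected, fix finite paths inside $C$ from $Q_0$ and from $Q_1$ back to $Q$, giving two loops $\gamma_0,\gamma_1$ at $Q$ that begin with different labels. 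Now, for every $x \in \{0,1\}^\omega$, concatenate a fixed path from $\{r\}$ to $Q$ with $\gamma_{x_0}\gamma_{x_1}\gamma_{x_2}\cdots$; this is a well-defined infinite path of $G$. If $x \ne y$ first differ at position $i$, the two resulting paths agree up to the $i$-th return to $Q$, which they reach at the same time step, and then follow $\gamma_{x_i}$ versus $\gamma_{y_i}$, whose first labels differ, so the associated traces already differ there. This yields an injection $\{0,1\}^\omega \hook T(\calK)$, so $T(\calK)$ is uncountable, completing the contrapositive.

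The step I expect to be the genuine obstacle is precisely the one the determinization is designed to handle: in $\calK$ itself a branching need not be observable, since two successor states may carry the same label, so producing many distinct \emph{paths} would not by itself produce many distinct \emph{traces}. Passing to $G$ closes this gap, because there every real branching changes the current label. The remaining ingredients are routine: that an infinite path in a finite graph eventually remains in one strongly connected component, that a simple-cycle component forces ultimate periodicity, and that the spliced paths are genuine paths of $G$ and hence correspond to traces of $\calK$. Finiteness of $\calK$ is used throughout — to keep $\Lambda$ and $G$ finite, to invoke König's lemma, and to force the recurrent component $C$.
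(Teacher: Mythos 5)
Your proof is correct, but it takes a genuinely different route from the paper's. The paper also argues by contraposition, but works directly on $\calK$: the path $\pi$ inducing a non-ultimately-periodic trace $t$ visits some state $w$ infinitely often, the label blocks of $t$ between consecutive visits cannot all be equal (else $t$ would be ultimately periodic), and from the resulting two cycles through $w$ with distinct label words it concludes at once that $T(\calK)$ is uncountable. Your subset construction is aimed exactly at the step this one-liner glosses over: not only can distinct paths of $\calK$ induce the same trace, but even two cycles with \emph{distinct} label words $u \neq v$ need not generate uncountably many traces, since $u$ and $v$ may commute (e.g.\ $u = ab$, $v = abab$, where every interleaving reads $(ab)^\omega$); making the paper's argument fully rigorous requires choosing two \emph{non-commuting} blocks (possible, as pairwise commuting words share a primitive root, forcing ultimate periodicity) plus the combinatorics-on-words fact that non-commuting words yield injective interleavings. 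Your determinized graph $G$ dissolves all of this, because branching there is visible in the labels, so the injection from $\{0,1\}^\omega$ into $T(\calK)$ is immediate; the price is the extra machinery (subset construction, König's lemma to realize $G$-paths as traces of $\calK$, and the SCC analysis to locate a branching vertex). In short, the paper's proof is shorter but informal at its key step; yours is longer but airtight. One cosmetic slip: the two outgoing edges of $Q$ inside $C$ need not have distinct targets $Q_0 \neq Q_1$ (they might share a target and differ only in their labels), but your argument only uses that their first labels differ, which determinism guarantees in either case.
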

\begin{proof}
  Proof by contraposition:
  Suppose some $t \in T(\calK)$ is not ultimately periodic.
  The path $\pi$ that induces $t$ must visit some state $w$ in $\calK$ infinitely often, say at positions $i_0,i_1,\ldots$.
  If we now consider the intervals $t(i_0)\cdots t(i_1-1)$, $t(i_1)\cdots t(i_2-1), \ldots$, then at least two of them are distinct, since $t$ is not ultimately periodic.
  But then there are at least two cycles through $w$ with distinct labels, and hence $T(\calK)$ is uncountable.
\end{proof}

\smallskip

The \emph{model checking problem} of $L$ is the set of all pairs $(\calK,\phi)$ such that $\phi \in L$ and $T(\calK) \vDash \phi$ in team semantics.\footnote{Here, formulas are written as strings over a suitable finite alphabet, with propositions encoded as binary numbers.
A finite Kripke structure $\calK = (W,R,\eta,r)$ is encoded by a number $n \dfn \size{W}$, lists of pairs of numbers (in binary) for with $R$ and $\eta$, and a single number for $r$.
}

Again, for a class $\calC$ of traces, the problem of $\calC$-model checking is the set of pairs $(\calK,\phi)$ where $T(\calK) \cap \calC \vDash \phi$.
Recall that $T_\ulp(\calK)$ is the subteam of all ultimately periodic traces in $T(\calK)$ (and analogously $T_\ulc(\calK)$), and can thus be written as $T(\calK)\cap \calC$ for appropriate $\calC$.\footnote{Classically, there is no difference between full and ultimately periodic model checking.
The reason is that every satisfiable $\LTL$-formula is satisfied in some ultimately periodic trace~\cite{SC85}, and that the traces of a structure $\calK$ are definable by a formula $\chi_\calK$~\cite{Sch02}.
So if some trace $t$ in $\calK$ satisfies $\neg \phi$, then $\chi_\calK \land \neg \phi$, and hence $\neg \phi$, holds in some ultimately periodic trace in $\calK$.}
For a class $\calC$ of traces, the corresponding decision problems are called \emph{$\calC$-restricted} model checking, satisfiability, etc.
If $\calC$ is the class of ultimately periodic (constant) traces, then we just call the $\calC$-restricted model checking problem \emph{ultimately periodic} and \emph{ultimately constant} model checking, respectively.

\smallskip

For flat formulas, the complexity of these problems coincides with their classical counterparts.
The underlying reductions for their lower bounds are all computable in logspace.

\begin{proposition}[\cite{SC85}]
  Model checking and satisfiability of classical $\LTL$ is $\PSPACE$-complete.
\end{proposition}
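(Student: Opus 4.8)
The plan is to prove a $\PSPACE$ upper bound and a matching $\PSPACE$-hardness for both problems, using the automaton-theoretic method for the upper bounds and a polynomial-space Turing-machine simulation for the lower bounds. For the upper bound on satisfiability, given $\phi \in \LTL$ I would form its closure $\mathrm{cl}(\phi)$ (all subformulas of $\phi$ together with their negations) and build a generalized Büchi automaton $\calA_\phi$ whose states are the maximal locally consistent subsets ("atoms") of $\mathrm{cl}(\phi)$: those that respect the Boolean connectives and the one-step unfoldings $\F\psi \equiv \psi \lor \X\F\psi$, $\G\psi \equiv \psi \land \X\G\psi$, and $\psi\U\chi \equiv \chi \lor (\psi \land \X(\psi\U\chi))$. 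The transition relation is fixed by the $\X$-demands of each atom, and one acceptance set per eventuality in $\mathrm{cl}(\phi)$ forbids runs that promise some $\psi\U\chi$ yet never realize $\chi$. The accepting runs of $\calA_\phi$ are then exactly the traces satisfying $\phi$, so $\phi$ is satisfiable iff $\calA_\phi$ is nonempty.

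Although $\calA_\phi$ has $2^{O(\size{\phi})}$ states, a single atom is a subset of $\mathrm{cl}(\phi)$ and hence stored in $O(\size{\phi})$ bits, so the automaton is never built explicitly. Instead nonemptiness is tested on the fly by guessing a reachable accepting lasso — a prefix feeding into a cycle that meets every acceptance set — while keeping in memory only the current atom and a constant number of polynomially-bounded counters. This runs in nondeterministic polynomial space, and $\complClFont{NPSPACE} = \PSPACE$ by Savitch's theorem. Model checking reduces to the same test: to decide $\calK \vDash \phi$ I would search for an accepting lasso in the synchronous product of $\calK$ with $\calA_{\neg\phi}$, whose states pair a world of $\calK$ with an atom of $\calA_{\neg\phi}$ and are thus again of polynomial size; such a lasso exists iff some trace of $\calK$ satisfies $\neg\phi$, so $\calK \vDash \phi$ holds iff the product is empty.

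For $\PSPACE$-hardness I would reduce from the word problem of a fixed polynomial-space Turing machine $M$. On input $x$ the computation of $M$ visits at most $p(\size{x})$ tape cells, so I encode each configuration as a block of $p(\size{x})$ consecutive trace positions carrying propositions for the scanned symbol, head position, and control state, and construct in logspace a formula $\phi_{M,x}$ stating that the trace spells a legal accepting run: the first block is the initial configuration on $x$, successive blocks obey the transition function, and an accepting state eventually appears. The crucial local-consistency constraint linking a cell to the same cell one configuration later is written by advancing exactly one block length through $p(\size{x})$ nested $\X$-operators, which keeps $\size{\phi_{M,x}}$ polynomial, and $\phi_{M,x}$ is satisfiable iff $M$ accepts $x$. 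Hardness of model checking follows from an analogous simulation (or by reading the candidate run off a fixed constant-size universal structure and model checking $\neg\phi_{M,x}$, using that $\PSPACE$ is closed under complement).

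The main obstacle is the upper bound rather than the reduction: any explicit construction or determinization of $\calA_\phi$ already costs exponential space, so the entire argument hinges on carrying out the nonemptiness test incrementally, never storing more than one atom at a time, and on discharging the generalized Büchi condition correctly — cycling through the pending eventualities during the lasso search so that every promised until is fulfilled exactly once per period. By contrast the Turing-machine reduction is conceptually routine; its only real care is the fully explicit formula bookkeeping needed to keep all the block-alignment constraints polynomial in size.
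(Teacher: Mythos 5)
This proposition is imported verbatim from Sistla and Clarke~\cite{SC85} as background; the paper gives no proof of it, so there is nothing in-paper to compare you against, and your attempt has to stand on its own. Judged that way, it is essentially correct, but it is worth noting that your upper bound is not the historical \cite{SC85} argument: Sistla and Clarke proved membership in \PSPACE{} via a small-model property (every satisfiable \LTL{}-formula has an ultimately periodic model whose prefix and period lengths are at most exponential) followed by a nondeterministic guess-and-check procedure with polynomially many counters, whereas your closure/atom automaton with an on-the-fly lasso search is the later Vardi--Wolper automata-theoretic route. Both land in $\complClFont{NPSPACE} = \PSPACE$, and your version is the one in common use today; the hardness part (block-encoded configurations of a polynomial-space machine, with transition consistency enforced by $p(\size{x})$ nested $\X$-operators) does match the original reduction.

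One genuine, though repairable, flaw: the parenthetical claim that model-checking hardness follows by \enquote{reading the candidate run off a fixed constant-size universal structure and model checking $\neg\phi_{M,x}$}. A universal structure must generate all traces over the propositions occurring in $\phi_{M,x}$, and since that formula uses polynomially many propositions, the natural universal structure (one state per label) has exponentially many states --- this is not a logspace reduction as stated. It can be repaired either by serializing labels (one proposition value per physical time step, with the formula rewritten to address bits by offset, which is how fixed-alphabet hardness proofs actually work) or, as your main sentence proposes, by a direct analogous simulation in which the structure itself supplies the tape alphabet and the formula enforces the transition constraints; that primary route is sound, so the proposition survives. A last cosmetic point: in the lasso search each generalized Büchi acceptance set must be visited \emph{at least} once per cycle; requiring it \enquote{exactly once per period} is neither needed nor in general achievable.
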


\begin{proposition}[\cite{KMV018}]
The satisfiability problem for $\LTL$-formulas in team semantics, restricted to non-empty teams, is $\PSPACE$-complete.
\end{proposition}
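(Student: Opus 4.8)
The plan is to show that, once we restrict to non-empty teams, the team-semantical satisfiability problem for $\LTL$ is, as a set of formulas, \emph{identical} to the classical satisfiability problem, so that the $\PSPACE$-completeness transfers verbatim from the classical result~\cite{SC85}.

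First I would point out why the non-emptiness restriction is essential: since $\LTL$ with team semantics has empty team satisfaction, the empty team $\emptyset$ satisfies every formula, and hence without the restriction the satisfiability problem would simply be the set of \emph{all} $\LTL$-formulas and thus trivial. The whole content of the statement therefore lies in the non-empty case, and the restriction is exactly what rescues the problem from triviality.

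For the core argument I would establish both directions of the equivalence ``$\phi$ is satisfied by some non-empty team'' $\LR$ ``$\phi$ is classically satisfiable''. If $\phi$ is satisfied by a non-empty team $T$, I pick any trace $t \in T$; by downward closure of $\LTL$ in team semantics we obtain $\{t\} \vDash \phi$, and by the coincidence of classical and team semantics on singletons this is equivalent to $t \vDash \phi$ in the classical sense, so $\phi$ is classically satisfiable. Conversely, if $t \vDash \phi$ classically for some trace $t$, then the same coincidence gives $\{t\} \vDash \phi$, and $\{t\}$ is a non-empty team. Hence $\phi$ is satisfiable by a non-empty team if and only if it is classically satisfiable, i.e.\ the two decision problems are the same set of strings.

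Since classical $\LTL$ satisfiability is $\PSPACE$-complete, both the upper and the lower bound follow immediately: the identity map witnesses a logspace reduction in each direction. I expect no genuine obstacle here; the only delicate point is to invoke downward closure and the singleton-coincidence proposition in the correct order, and to make explicit that the restriction to non-empty teams is precisely what prevents the problem from collapsing to triviality via empty team satisfaction.
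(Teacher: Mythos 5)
Your proof is correct and is essentially the argument behind this result: the paper states it as a cited proposition from Krebs et al.~\cite{KMV018} without reproducing a proof, but the ingredients you use (downward closure, the coincidence of classical and team semantics on singletons, and $\PSPACE$-completeness of classical $\LTL$ satisfiability~\cite{SC85}) are exactly the propositions the paper lists alongside it, and your reduction of non-empty-team satisfiability to classical satisfiability is the standard way they combine. Your remark on why the non-emptiness restriction is needed also matches the paper's own comment that otherwise the problem is trivial by empty team satisfaction.
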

Here, the empty team needs to be excluded since otherwise the problem becomes trivial due to empty team satisfaction.
With negation $\sneg$, this distinction becomes redundant, since $\phi$ is satisfiable iff $\phi \lor \top$ is satisfiable in a non-empty team, and $\phi$ is satisfiable in a non-empty team iff $\phi \land \sneg\bot$ is satisfiable.

\begin{proposition}[\cite{KMV018}]
  The model checking problem for $\LTL$-formulas in team semantics is $\PSPACE$-hard.
\end{proposition}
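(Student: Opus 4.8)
The plan is to reduce from the classical model checking problem of $\LTL$, which is $\PSPACE$-complete by~\cite{SC85}, using the map $(\calK,\phi)\mapsto(\calK,\neg\neg\phi)$ whose second component is now read under team semantics. Since this only prepends two symbols to $\phi$ and copies $\calK$ verbatim, it is computable in logarithmic space, so it suffices to verify that the instances are equivalent.

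The point to watch is that one cannot simply send $\phi$ to itself: a plain $\LTL$-formula read under team semantics is downward closed but not union closed, so $T(\calK)\vDash\phi$ is in general strictly stronger than the classical requirement that every individual trace of $\calK$ satisfy $\phi$; the formula $\F p$ already witnesses this gap. Prefixing $\neg\neg$ repairs this by flattening the formula. Concretely, I would first record the behaviour of $\neg\neg\phi$ on singletons: by the semantic clause for $\neg$, a singleton satisfies $\neg\psi$ exactly when it fails $\psi$, so applying the clause twice gives $\{t\}\vDash\neg\neg\phi$ iff $\{t\}\vDash\phi$, which by the coincidence of classical and team semantics on singletons holds iff $t\vDash\phi$.

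It then remains to unwind the outer negation over an arbitrary team $T$:
\[
  T\vDash\neg\neg\phi \;\LR\; \forall t\in T:\{t\}\nvDash\neg\phi \;\LR\; \forall t\in T:\{t\}\vDash\phi \;\LR\; \forall t\in T: t\vDash\phi .
\]
Specialising $T$ to $T(\calK)$ (which is non-empty, as $\calK$ is serial and rooted) yields $T(\calK)\vDash\neg\neg\phi$ iff $t\vDash\phi$ for every $t\in T(\calK)$, that is, iff $\calK\vDash\phi$ in the classical sense. This establishes the equivalence required of the reduction, and $\PSPACE$-hardness transfers from classical to team-semantic model checking. The only genuinely non-routine step is the middle equivalence — the flatness of $\neg\neg\phi$ — which forces a purely pointwise reading and thereby cancels both the team-splitting behaviour of disjunction and the synchronous behaviour of $\F$; the remainder is bookkeeping.
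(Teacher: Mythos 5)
Your reduction is correct under this paper's definitions, but note that the paper itself does not prove this proposition at all --- it imports it from Krebs et al.~\cite{KMV018} --- so what you have produced is a genuinely different (and self-contained) argument. Its soundness rests entirely on the fact that this paper's grammar allows $\neg$ in front of \emph{arbitrary} formulas, with the clause $T \vDash \neg\psi \LR \forall t \in T : \{t\} \nvDash \psi$; granting that, your chain of equivalences is exactly right, and indeed the paper itself remarks that $\neg\neg\phi$ "states that $\phi$ holds in all singletons." (Your appeal to non-emptiness of $T(\calK)$ is superfluous: the equivalence holds vacuously for the empty team.) The caveat is that in the cited source the syntax is in negation normal form, with $\neg$ permitted only in front of atomic propositions, so the $\neg\neg$-flattening trick is unavailable there and the original result is strictly stronger: it gives hardness already for the negation-atomic fragment, whereas your hard instances essentially exploit the liberal negation. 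If you wanted to recover that stronger statement, you would instead restrict to a fragment that is flat under synchronous team semantics --- e.g.\ formulas built from literals, $\land$, $\lor$, $\X$, $\G$ --- where team satisfaction coincides with pointwise classical satisfaction, and then observe that classical model checking remains $\PSPACE$-hard for that fragment (dually to the $\PSPACE$-hardness of satisfiability for $\F$, $\X$ formulas). So: correct as a proof of the proposition as stated in this paper, but weaker than, and methodologically disjoint from, the proof in the work it is attributed to.
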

 
\section{A Stutter-Invariant Fragment of $\LTL(\sneg)$}\label{sec:stutter}

We begin by investigating the concept of \emph{stutter-equivalence} and lift the classical definition to team semantics.
We follow Peled and Wilke~\cite{PW97}.
Two traces $t,t'$ are called \emph{stutter-equivalent} if there are two sequences $0 = i_0 < i_1 < i_2 < \cdots$ and $0 = j_0 < j_1 < j_2 < \cdots$ of indices such that, for all $k \geq 0$, it holds that $t(i_k) = \cdots = t(i_{k+1}-1) = t'(j_k) = \cdots t'(j_{k+1}-1)$.

Intuitively, two traces are stutter-equivalent if one can be obtained from the other by adding and removing consecutive copies of labels (while leaving at least one copy), \eg, $\{p\}\{p\}\{q\}\{z\}^\omega$ is stutter-equivalent to $\{p\}\{q\}\{q\}\{z\}^\omega$, but $\{p\}\{z\}^\omega$ is not.

Every $\X$-free $\LTL$-formula $\phi$ defines a \emph{stutter-invariant} property in the sense that it cannot distinguish stutter-equivalent traces $t$ and $t'$, \ie, $t \vDash \phi \LR t' \vDash \phi$.
Indeed, Peled and Wilke~\cite{PW97} show that the stutter-invariant properties definable in $\LTL$ are exactly those definable without $\X$.

\subsection{Stutter-equivalence in team semantics}

The first step in this section is to generalize the definition of stutter-equivalence to teams.
For this, note that the above definition of stutter-equivalence can also be written as follows:
For all $k$, it holds that $t(i_k) = \cdots = t(i_{k+1}-1)$ and $t'(j_k) = \cdots = t'(j_{k+1}-1)$, and additionally the "filtered" traces $(t_{i_k})_{k\geq 0} = t(i_0)t(i_1)t(i_2)\cdots$ and $(t'_{j_k})_{k \geq 0} =  t'(j_0)t'(j_1)t'(j_2)\cdots$ are identical.
We generalize this equivalent definition to teams:

\begin{definition}[Stuttering functions]
 A \emph{stuttering function of a trace $t$} is a strictly increasing function $f \colon \N \to \N$ such that $f(0) = 0$ and $t(f(k)) = \cdots = t(f(k+1)-1)$ for all $k$.
 A \emph{stuttering function of a team $T$} is a function $f$ that is a stuttering function for each trace $t \in T$.
\end{definition}

In other words, the range of a stuttering function includes at least zero and all positions on which the trace differs from the predecessor.

\paragraph{Example}
Every trace of the form $\emptyset^n\{p\}\emptyset^\omega$ has infinitely many stuttering functions, whose range only needs to include zero, $n+1$, and an arbitrary infinite subset of $\{n+2,\ldots\}$.
On the other hand, when combined to a team, the only stuttering function of $\{ \emptyset^n\{p\}\emptyset^\omega \mid n \geq 0 \}$ is the identity;
intuitively, there exists no position $i$ on which \emph{every} trace stays unchanged compared to the previous position.

If $f \colon \N \to \N$, then let $t[f]$ denote the trace $t(f(0))t(f(1))t(f(2))\cdots$, and $T[f]$ the team $\{ t[f] \mid t \in T\}$.

\begin{definition}[Stutter-equivalence]
 Teams $T,T'$ are \emph{stutter-equivalent}, in symbols $T \steq T'$, if there are stuttering functions $f$ of $T$ and $f'$ of $T'$ such that $T[f] = T'[f']$.
\end{definition}
See also \Cref{fig:def-stutter-equiv} for two example teams that are stutter-equivalent.

\begin{figure}
 \centering
 \begin{tikzpicture}[node distance=1,inner sep=1mm,scale=1,thick,->,>=stealth]
  \tikzset{w/.style={draw,circle,inner sep=.8mm,black,fill}}
  \tikzset{a/.style={fill=black!20!white}}
  \tikzset{b/.style={fill=black!90!white}}
  \tikzset{fun/.style={node distance=1.8mm}}
  \tikzset{sec/.style={draw,rounded corners,inner sep=2mm,dotted,fill=black,fill opacity=.07}}

  \node[w,b] (t0) at (0,-2) {};
  \foreach \i\p in {1/b, 2/a, 3/a, 4/b, 5/b, 6/a}{
    \pgfmathtruncatemacro{\pre}{\i - 1}
    \node[w,\p,right=of t\pre] (t\i) {};
    \draw[-] (t\pre) to (t\i);
  }
  \node[right=of t6] (t7) {$\cdots$};
  \draw[->] (t6) to (t7);

  \node[w,a] (s0) at (0,-1){};
  \foreach \i\p in {1/a, 2/a, 3/a, 4/a, 5/a, 6/b}{
    \pgfmathtruncatemacro{\pre}{\i - 1}
    \node[w,\p,right=of s\pre] (s\i) {};
    \draw[-] (s\pre) to (s\i);
  }
  \node[right=of s6] (s7) {$\cdots$};
  \draw[->] (s6) to (s7);

  \node[w,b] (u0) at (0,1){};
  \foreach \i\p in {1/a, 2/a, 3/b, 4/a, 5/a, 6/a}{
    \pgfmathtruncatemacro{\pre}{\i - 1}
    \node[w,\p,right=of u\pre] (u\i) {};
    \draw[-] (u\pre) to (u\i);
  }
  \node[right=of u6] (u7) {$\cdots$};
  \draw[->] (u6) to (u7);

  \node[w,a] (v0) at (0,2){};
  \foreach \i\p in {1/a, 2/a, 3/a, 4/b, 5/b, 6/b}{
    \pgfmathtruncatemacro{\pre}{\i - 1}
    \node[w,\p,right=of v\pre] (v\i) {};
    \draw[-] (v\pre) to (v\i);
  }
  \node[right=of v6] (v7) {$\cdots$};
  \draw[->] (v6) to (v7);

  {\scriptsize
  \node[fun,above=of s0] (n0) {$0\mapsto 0$};
  \node[fun,above=of s2] {$1 \mapsto 2$};
  \node[fun,above=of s4] {$2 \mapsto 4$};
  \node[fun,above=of s6] {$3 \mapsto 6$};
  \node[fun,below=of u0] (n1) {$0 \mapsto 0$};
  \node[fun,below=of u1] {$1 \mapsto 1$};
  \node[fun,below=of u3] {$2 \mapsto 3$};
  \node[fun,below=of u4] {$3 \mapsto 4$};
  \node[left=-2mm of n1] {$f\colon$};
  \node[left=-2mm of n0] {$g\colon$};
  }

  \node[sec,fit=(u0)(v0)] {};
  \node[sec,fit=(t0)(s1)] {};
  \node[sec,fit=(u1)(v2)] {};
  \node[sec,fit=(s2)(t3)] {};
  \node[sec,fit=(u3)(v3)] {};
  \node[sec,fit=(t4)(s5)] {};
  \node[sec,fit=(t6)(s6)] {};
  \node[sec,fit=(u4)(v6)] {};

  \node at (-1,1.5) {\Large$T$};
  \node at (-.95,-1.5) {\Large$T'$};

 \end{tikzpicture}
 \caption{Stutter-equivalence of teams witnessed by stuttering functions $f$ and $g$.
 Distinct propositional labels are encoded by different colors.\label{fig:def-stutter-equiv}}
\end{figure}
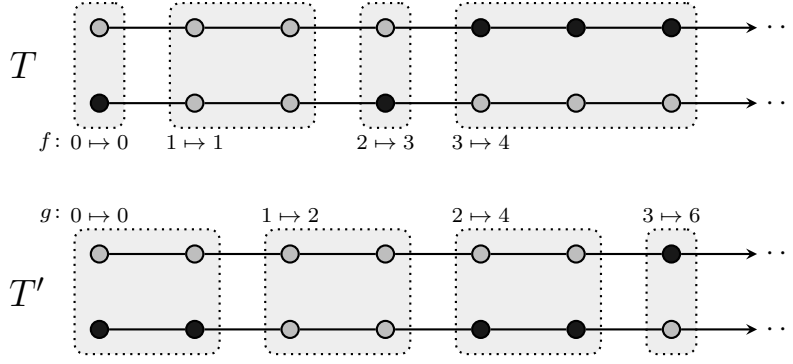

Next, we show that stutter-equivalent teams always have the same cardinality.

\begin{proposition}
 If $f$ is a stuttering function of $T$, then $\size{T} = \size{T[f]}$.
\end{proposition}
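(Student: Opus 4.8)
The plan is to show that the map $\Phi \colon T \to T[f]$ defined by $\Phi(t) \dfn t[f]$ is a bijection; the cardinality equality $\size{T} = \size{T[f]}$ then follows immediately. Surjectivity is built into the definition $T[f] = \{ t[f] \mid t \in T \}$, so the only real work is to establish injectivity, \ie, that $t[f] = t'[f]$ implies $t = t'$ for $t,t' \in T$.

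The key observation is that a stuttering function $f$ lets one reconstruct a trace from its sampled version. Since $f$ is strictly increasing on $\N$ with $f(0) = 0$, it is unbounded, and hence the half-open intervals $[f(k),f(k+1))$ for $k \geq 0$ partition $\N$. Given any position $n \in \N$, there is therefore a unique $k$ with $f(k) \leq n < f(k+1)$. By the defining property of a stuttering function, $t$ is constant on this interval, so $t(n) = t(f(k)) = t[f](k)$. In other words, $t(n)$ depends only on $t[f]$ and on the fixed function $f$.

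First I would fix $t,t' \in T$ with $t[f] = t'[f]$. Because $f$ is a stuttering function of the whole team $T$, it is in particular a stuttering function of both $t$ and $t'$, so the reconstruction above applies uniformly to both traces. For each $n$, letting $k$ be the unique index with $f(k) \leq n < f(k+1)$, we obtain $t(n) = t[f](k) = t'[f](k) = t'(n)$. Hence $t = t'$, which proves injectivity, and so $\Phi$ is a bijection.

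I do not expect a genuine obstacle here. The only subtlety worth checking is that $f$ being strictly increasing indeed forces $f(k) \to \infty$, so that the intervals $[f(k),f(k+1))$ cover all of $\N$; this is what guarantees that every position of a trace is reconstructed from the sampled values, and it is also what makes the reconstruction rely only on $f$ being a \emph{common} stuttering function of the team rather than on the individual trace.
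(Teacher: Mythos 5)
Your proof is correct and takes essentially the same approach as the paper: both establish that the surjective map $t \mapsto t[f]$ is injective by using the stuttering property to reconstruct $t(n)$ from $t[f](k)$ on each interval $f(k) \leq n < f(k+1)$. Your explicit remark that strict monotonicity of $f$ makes these intervals cover all of $\N$ is a minor point the paper leaves implicit, but the argument is the same.
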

\begin{proof}
We show that the surjective map $t \mapsto t[f]$ from $T$ to $T[f]$ is also injective.
Let $t,t' \in T$ such that $t[f] = t'[f]$.
 Then $t(f(k)) = t'(f(k))$ for all $k$.
 As $f$ is a stuttering function, also $t(j) = t(f(k)) = t'(f(k)) = t'(j)$ for all $j$ such that $f(k) \leq j < f(k+1)$ for some $k$, and hence for all $j$.
 Consequently, $t = t'$.
\end{proof}

\begin{corollary}
 Any two stutter-equivalent teams have the same cardinality.
\end{corollary}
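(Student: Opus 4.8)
The plan is to derive this immediately from the preceding proposition, since the real combinatorial work—establishing that passing to the filtered team $T[f]$ preserves cardinality—has already been carried out there. The only task remaining is to chain two applications of that proposition through the common filtered team guaranteed by stutter-equivalence.

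First I would unfold the definition of $\steq$: assuming $T \steq T'$, there exist a stuttering function $f$ of $T$ and a stuttering function $f'$ of $T'$ with $T[f] = T'[f']$. The point is that each side of this equation is obtained from its respective team by the cardinality-preserving filtering operation, so the shared value $T[f] = T'[f']$ acts as a bridge between the two cardinalities.

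Next I would apply the preceding proposition twice. Since $f$ is a stuttering function of $T$, it yields $\size{T} = \size{T[f]}$, and since $f'$ is a stuttering function of $T'$, it yields $\size{T'} = \size{T'[f']}$. Because $T[f]$ and $T'[f']$ are literally the same team, their cardinalities coincide, $\size{T[f]} = \size{T'[f']}$. Combining the three equalities gives
\begin{align*}
  \size{T} = \size{T[f]} = \size{T'[f']} = \size{T'}\text{,}
\end{align*}
which is the claim.

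I do not expect any real obstacle here: the nontrivial injectivity argument (that $t \mapsto t[f]$ is injective on $T$, using that a stuttering function freezes each trace between consecutive points of its range) lives entirely in the proposition, and the corollary is a purely formal consequence once stutter-equivalence is spelled out in terms of its two witnessing stuttering functions. The only thing to be careful about is to apply the proposition to \emph{both} teams with their own respective stuttering functions, rather than assuming a single function works for both.
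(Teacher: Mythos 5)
Your proof is correct and follows exactly the route the paper intends: the corollary is stated without proof precisely because it follows from the preceding proposition by unfolding $T \steq T'$ into witnessing stuttering functions $f, f'$ with $T[f] = T'[f']$ and applying the proposition to each team separately. Your closing caution about using each team's own stuttering function is also the right point of care, since a single function need not be a stuttering function of both teams.
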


In our definition, a team is a set of traces, which consist of labels indexed by natural numbers.
However, a team $T$ also intuitively corresponds to a \emph{single} sequence which maps to each position a "snapshot" of all traces, indexed by the elements of $T$.
Let us relax the necessity of $\AP$ being countable in this section, then the resulting sequence is indeed a trace.

Fix an index set $I$.
The \emph{snapshot} trace $\mathsf{sn}(T)$ of an $I$-indexed team $\{t_i\}_{i \in I}$ is a trace in $(\wp(\AP \times I))^\omega$ and is defined by $(p,i) \in \mathsf{sn}(T)(n) \LR p \in t_i(n)$.

\begin{proposition}
 Let $T = \{t_i\}_{i \in I}$ and $T' = \{t'_i\}_{i \in I}$ be $I$-indexed teams.
 Then $T$ and $T'$ are stutter-equivalent teams if and only if $\mathsf{sn}(T)$ and $\mathsf{sn}(T')$ are stutter-equivalent traces.
\end{proposition}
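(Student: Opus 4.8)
\medskip

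The plan is to exploit the fact that the single snapshot trace $\mathsf{sn}(T)$ records simultaneously the stuttering pattern of the whole team and the identity of each indexed trace, so that stutter-equivalence transports back and forth by an essentially definitional unfolding. The first step I would carry out is the key observation that a function $f\colon\N\to\N$ is a stuttering function of the team $T=\{t_i\}_{i\in I}$ if and only if it is a stuttering function of the single trace $\mathsf{sn}(T)$. This follows directly from the definition of the snapshot: for positions $m,n$ we have $\mathsf{sn}(T)(m)=\mathsf{sn}(T)(n)$ iff $p\in t_i(m)\LR p\in t_i(n)$ for all $p\in\AP$ and all $i\in I$, that is, iff $t_i(m)=t_i(n)$ for every $i$. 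Hence $\mathsf{sn}(T)$ is constant on an interval exactly when every $t_i$ is, and a strictly increasing $f$ with $f(0)=0$ satisfies $\mathsf{sn}(T)(f(k))=\cdots=\mathsf{sn}(T)(f(k+1)-1)$ for all $k$ precisely when it does so for each $t_i$; the former is the definition of a stuttering function of $\mathsf{sn}(T)$, the latter of $T$.

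Next I would record two routine identities obtained by unfolding definitions. Writing $T[f]\dfn\{t_i[f]\}_{i\in I}$ for the filtered $I$-indexed team, both $\mathsf{sn}(T)[f]$ and $\mathsf{sn}(T[f])$ have $(p,i)$ in their $k$-th label iff $p\in t_i(f(k))$, so $\mathsf{sn}(T)[f]=\mathsf{sn}(T[f])$. Moreover the snapshot determines an $I$-indexed family index-wise: for $I$-indexed teams $S=\{s_i\}_{i\in I}$ and $S'=\{s'_i\}_{i\in I}$ one has $\mathsf{sn}(S)=\mathsf{sn}(S')$ iff $s_i=s'_i$ for all $i\in I$, again straight from the definition.

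Putting this together yields the biconditional. The traces $\mathsf{sn}(T)$ and $\mathsf{sn}(T')$ are stutter-equivalent iff there are stuttering functions $f$ of $\mathsf{sn}(T)$ and $f'$ of $\mathsf{sn}(T')$ with $\mathsf{sn}(T)[f]=\mathsf{sn}(T')[f']$. By the first observation these are exactly the stuttering functions of $T$ and $T'$, and by the two identities the equation $\mathsf{sn}(T)[f]=\mathsf{sn}(T')[f']$ is equivalent to $\mathsf{sn}(T[f])=\mathsf{sn}(T'[f'])$ and hence to $t_i[f]=t'_i[f']$ for all $i\in I$; this last condition is precisely $T[f]=T'[f']$, i.e.\ stutter-equivalence of the teams.

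The step that needs genuine care, and which I regard as the main obstacle, is the reading of the team equality $T[f]=T'[f']$. Since $\mathsf{sn}$ remembers the index, equality of snapshot traces forces the index-wise equality $t_i[f]=t'_i[f']$, which is strictly stronger than equality of the two underlying \emph{sets} of traces; for a general index set the purely set-based reading would even make the statement fail, as two constant teams that coincide as sets but carry their labels under swapped indices have non-equivalent snapshots. I would therefore make explicit that $\steq$ on $I$-indexed teams is to be read index-respectingly — equivalently, that each trace indexes itself, so that $I=T$ — which is exactly the setting the proposition is about and under which the chain above goes through.
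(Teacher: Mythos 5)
Your proof is correct and follows essentially the same route as the paper: the paper's proof consists of exactly your two observations---that $f$ is a stuttering function of $T$ iff it is one of $\mathsf{sn}(T)$, and that $\mathsf{sn}(T)[f]=\mathsf{sn}(T[f])$---combined to conclude that $T[f]=T'[g]$ holds for stuttering functions $f,g$ iff $\mathsf{sn}(T)[f]=\mathsf{sn}(T')[g]$. Your closing caveat is a genuine refinement rather than a digression: the paper's definition of $\steq$ is set-based, and under that reading with arbitrary indexings the statement does fail (your swapped-index example of two constant traces is a correct counterexample, since stutter-equivalent traces must agree on their first label), so the equality $T[f]=T'[f']$ must indeed be read index-wise; the paper leaves this implicit behind the word "clearly," and your making it explicit, together with the injectivity of $\mathsf{sn}$ on $I$-indexed families, is the more careful treatment.
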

\begin{proof}
 Clearly, a function $f$ is a stuttering function of a team $T$ if and only if it is one of the trace $\mathsf{sn}(T)$.
 Also, clearly $\mathsf{sn}(T)[f] = \mathsf{sn}(T[f])$.
 As a consequence, $T[f] = T'[g]$ for stuttering functions $f,g$ if and only if $\mathsf{sn}(T)[f] = \mathsf{sn}(T')[g]$.
\end{proof}

The above statement can be extended to all teams, since stutter-equivalent teams have the same cardinality and thus can be indexed by the same set $I$.

A team $T$ has a \emph{stuttering position $i$} if $t(i) = t(i+1)$ for all $t \in T$, but there are $t \in T$ and $j > i+1$ such that $t(i) \neq t(j)$.
(So constant suffixes are not considered stuttering.)
A trace or team with no stuttering positions is called \emph{stutter-free}.

Next, we prove the team analogs of several classical properties of stutter-equivalence.

\begin{theorem}\label{thm:stutter-properties}
 Stutter-equivalence on teams satisfies the following properties.
 \begin{enumerate}
  \item If $T$ is a stutter-free team, then $T = T[f]$ for every stuttering function $f$.
  \item Every team $T$ has a unique stutter-free team that is stutter-equivalent to it.
  \item Stutter-equivalence is reflexive, transitive and symmetric.
  \item Every stutter-equivalence class contains exactly one stutter-free team.
 \end{enumerate}
\end{theorem}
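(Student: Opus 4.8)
The plan is to route everything through a single concrete construction: for each team $T$ I will define a canonical \emph{stutter-free reduction} $\hat T$, show it is stutter-free and stutter-equivalent to $T$, and then establish one \emph{invariance lemma} — namely that $\widehat{T[f]} = \hat T$ for every stuttering function $f$ of $T$ — from which properties (2), (3) and (4) all fall out. Concretely, let $C_T \dfn \{0\} \cup \{\, i \geq 1 \mid \exists t \in T : t(i-1) \neq t(i)\,\}$ be the set of \emph{change positions} of $T$, and let $f^*_T$ enumerate $C_T$ in increasing order, padding the tail with $\max C_T + 1, \max C_T + 2, \ldots$ when $C_T$ is finite so that the domain stays $\N$; set $\hat T \dfn T[f^*_T]$. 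A short computation shows $f^*_T$ is a stuttering function of $T$ and that $\hat T$ is stutter-free (between consecutive change positions every trace is constant, and at each change position some trace genuinely changes).

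For property (1) I would prove the stronger pointwise statement $t[f] = t$ for every $t \in T$, i.e.\ $t(f(n)) = t(n)$ for all $n$. Since $f$ is strictly increasing with $f(0) = 0$ we have $f(n) \geq n$; if $n$ is the least index with $f(n) > n$, then $t$ is constant on the block $[n-1, f(n))$, so $t(n-1) = t(n)$ for \emph{all} $t \in T$. Here the crucial point is the "constant suffix" clause in the definition of a stuttering position: because $T$ is stutter-free, $n-1$ is \emph{not} a stuttering position, so the traces that pause at $n-1$ must in fact stay constant forever after $n-1$. This forces every trace to be constant from $n-1$ on, whence $t(f(m)) = t(m)$ both for $m < n$ (where $f(m)=m$) and for $m \geq n-1$ (where both sides equal the tail value).

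The heart of the argument is the invariance lemma, and this is where I expect the main difficulty. Two ingredients feed into it. First, a \emph{composition lemma}: if $f$ is a stuttering function of $T$ and $g$ one of $T[f]$, then $f \circ g$ is a stuttering function of $T$ and $T[f][g] = T[f \circ g]$; this is the unfolding $t[f][g](k) = t(f(g(k)))$ together with the observation that each block of $g$ pulls back to a union of blocks of $f$, on which $t$ is therefore constant. Second, a \emph{change-position calculus}: every change position of $T$ lies in $\ran f$ (a change cannot sit in the interior of a constant block), and conversely $k \geq 1$ is a change position of $T[f]$ exactly when $f(k) \in C_T$ (using $t(f(k-1)) = t(f(k)-1)$ by block-constancy). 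Hence $C_{T[f]} = f^{-1}(C_T)$, and composing $f$ with the enumeration of $f^{-1}(C_T)$ reproduces precisely the increasing enumeration of $C_T$, i.e.\ $f \circ f^*_{T[f]} = f^*_T$ on the change part; with the composition lemma this yields $\widehat{T[f]} = \hat T$. The obstacle I anticipate is the eventually-constant case, where $C_T$ is finite: there the tails of $f^*_T$ and $f \circ f^*_{T[f]}$ need not coincide, and I would argue separately that compression anywhere inside the common constant suffix leaves the reduced trace unchanged, so both reductions still equal the distinct-consecutive-values prefix followed by the repeated final label.

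Finally I would assemble the remaining properties. Reflexivity and symmetry in (3) are immediate, since the identity is a stuttering function and the definition of $\steq$ is symmetric, while transitivity follows from invariance: if $T[f] = T'[g]$ and $T'[g'] = T''[h]$, then $\hat T = \widehat{T[f]} = \widehat{T'[g]} = \hat{T'} = \widehat{T'[g']} = \widehat{T''[h]} = \hat{T''}$, and the common reduction witnesses $T \steq T''$ via $f^*_T$ and $f^*_{T''}$. For (2), existence is $\hat T$ itself, and uniqueness combines (1) with invariance: any stutter-free $S \steq T$, witnessed by stuttering functions $f$ of $S$ and $g$ of $T$ with $S[f] = T[g]$, satisfies $S = \hat S = \widehat{S[f]} = \widehat{T[g]} = \hat T$, pinning $S$ down to $\hat T$. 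Property (4) is then just the restatement of (2) once (3) has made $\steq$ an equivalence relation: each class contains $\hat T$ and no other stutter-free team.
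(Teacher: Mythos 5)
Your proof is correct, but it takes a genuinely different route from the paper's. The paper argues each item largely on its own: item (1) by a minimal-index contradiction; existence in (2) by essentially your construction (sampling $T$ at $0$ and at the positions just after non-stuttering positions); uniqueness in (2) by a direct minimal-counterexample argument comparing two distinct stutter-free reducts $T[f_1] \neq T[f_2]$ of the same team; and transitivity in (3) by showing that stutter-equivalent teams are stutter-equivalent to a common stutter-free team, using the same composition fact (that $f \circ g$ is again a stuttering function) that you isolate as a lemma. What you do differently is to make the canonical reduct $\widehat{T} = T[f^*_T]$ the central object and to prove one confluence lemma, $\widehat{T[f]} = \widehat{T}$ for every stuttering function $f$ of $T$, via the change-position calculus $C_{T[f]} = f^{-1}(C_T)$; uniqueness in (2), transitivity in (3), and item (4) then become short equational computations with hats. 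Your approach costs somewhat more bookkeeping (the calculus itself, plus the eventually-constant tail case, which you correctly flag and whose fix --- both reductions sample inside the common constant suffix --- does work), but it buys a single reusable lemma in place of two separate ad hoc arguments. A further point in your favor: your treatment of (1) is more careful than the paper's. The paper derives $t(i-1) = t(i)$ for all $t \in T$ and immediately claims a contradiction with stutter-freeness, but by the paper's own definition a stuttering position additionally requires some trace to change strictly later; in the case where $T$ is constant from position $i-1$ onward there is no stuttering position and hence no contradiction, and one must instead conclude $T = T[f]$ directly --- which is exactly what your constant-suffix argument does.
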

\begin{proof}
 1.: Suppose $T \neq T[f]$.
 Then there exists a minimal $i$ such that $f(i) \neq i$.
 In particular, $f(i) > i > 0$.
 Let $t \in T$ be arbitrary.
 By minimality of $i$, $t(i-1) = t(f(i-1))$.
 By definition of stuttering function, $t(f(i-1)) = \cdots = t(f(i)-1)$.
 Since $f(i-1) \leq i \leq f(i)-1$, it follows that $t(i-1) = t(i)$.
 But $T$ was stutter-free, contradiction.

 2.: Let $j_0,j_1,\ldots$ be all non-stuttering positions of $T$  in ascending order.
 Then $f(0) \dfn 0, f(i+1) \dfn j_i+1$ is a stuttering function of $T$ (see \Cref{fig:def-stuttering-function}).
 So $T \steq T[f]$ via the stuttering function $f$ for $T$ and the identity for $T[f]$.
 Moreover, the team $T[f]$ is stutter-free.
For the uniqueness part, suppose $T_1 \steq T$ and $T_2 \steq T$ for stutter-free $T_1 \neq T_2$ via stuttering functions $f_1$ for $T_1$ and $f_2$ for $T_2$.
 By 1., then $T[f_1] = T_1$ and $T[f_2] = T_2$.
 By distinctness, there is $t^\star \in T$ and a minimal $i > 0$ with $t^\star(f_1(i)) \neq t^\star(f_2(i))$, hence $f_1(i) \neq f_2(i)$.
 \Wloss $f_1(i) > f_2(i)$.
 By minimality of $i$, $f_1(i-1)=f_2(i-1)<f_2(i)\leq f_1(i)-1$, and by definition of stuttering function, $t(f_2(i-1))=t(f_2(i))$ for all $t \in T$.
 As $T[f_2]$ is stutter-free, the suffix $T[f_2]^{i-1}$ must already be constant.
 But then also the suffix $T^{f_2(i-1)}$ of $T$ is constant, contradiction to $t^\star(f_1(i)) \neq t^\star(f_2(i))$.

 3.: We show that two teams are stutter-equivalent if and only if they are stutter-equivalent to a common stutter-free team.
 This relation is transitive due to 2., and it is clearly reflexive and symmetric.
 The direction "$\Leftarrow$" follows from 1.
 For "$\Rightarrow$", suppose $T_1 \steq T_2$, and $T' \dfn T_1[f_1] = T_2[f_2]$ for stuttering functions $f_1,f_2$.
 By 1. and 2., there exist $g$ and a stutter-free team $T^\star$ with $T^\star \steq T'$ via $T[g] = T^\star$.
 Now, it is not hard to check that $f_i \circ g$ is a stuttering function of $T_i$, $i \in \{1,2\}$.
 As a result, $T_1\steq T^\star \steq T_2$.

 4.: Follows from 2. and 3.
\end{proof}

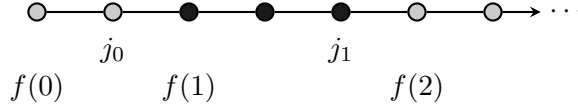
\begin{figure}
 \centering
 \begin{tikzpicture}[node distance=5mm,inner sep=1mm,scale=1,thick,->,>=stealth]
  \tikzset{w/.style={draw,circle,inner sep=.8mm,black,fill}}
  \tikzset{a/.style={fill=black!20!white}}
  \tikzset{b/.style={fill=black!90!white}}
  \node[w,a] (1) at (0,0){};
  \node[w,a] (2) at (1,0){};
  \node[w,b] (3) at (2,0){};
  \node[w,b] (4) at (3,0){};
  \node[w,b] (5) at (4,0){};
  \node[w,a] (6) at (5,0){};
  \node[w,a] (7) at (6,0){};
  \node (8) at (7,0){$\cdots$};

  \node[below= 1mm of 2]{$j_0$};
  \node[below= 1mm of 5]{$j_1$};

  \draw[->]
  (1) to (2)
  (2) to (3)
  (3) to (4)
  (4) to (5)
  (5) to (6)
  (6) to (7)
  (7) to (8)
  ;
  \node at (0,-1) {$f(0)$};
  \node at (2,-1) {$f(1)$};
  \node at (5,-1) {$f(2)$};
 \end{tikzpicture}
 \caption{Defining a stuttering function $f$ from the non-stuttering positions $j_i$.\label{fig:def-stuttering-function}}
\end{figure}

\subsection{The stutter-invariance of $\X$-free formulas}

In the remainder of this section, we prove that the $\X$-free formulas of $\LTL(\sneg)$ are indeed stutter-invariant, where a formula $\phi \in \LTL(\sneg)$ is \emph{stutter-invariant} if, for all teams $T,T'$, $T \steq T'$ implies $T \vDash \phi \LR T' \vDash \phi$.

The proof requires a series of technical lemmas.

\begin{lemma}[Splitting preserves stutter invariance]\label{lem:stutter-split}
 Let $T,S$ be teams.
 Then the following are equivalent:
 \begin{enumerate}
  \item $T \steq S$.
  \item For all $T_1,T_2$ such that $T = T_1 \cup T_2$, there are teams $S_1,S_2$ such that $S = S_1 \cup S_2$ and $T_i \steq S_i$ for $i \in \{1,2\}$.
 \end{enumerate}
\end{lemma}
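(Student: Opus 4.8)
The plan is to prove the two implications separately, with essentially all of the work in $(1) \Rightarrow (2)$ and the reverse direction being a one-line specialization.

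For $(2) \Rightarrow (1)$, I would instantiate the universally quantified split with $T_1 \dfn T$ and $T_2 \dfn \emptyset$. The hypothesis then yields teams $S_1,S_2$ with $S = S_1 \cup S_2$, $T \steq S_1$, and $\emptyset \steq S_2$. Since any two stutter-equivalent teams have the same cardinality (by the corollary preceding this lemma), $\emptyset \steq S_2$ forces $S_2 = \emptyset$, whence $S = S_1 \steq T$, as desired.

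For $(1) \Rightarrow (2)$, the key observation is that a stutter-equivalence $T \steq S$ carries a \emph{canonical bijection} between $T$ and $S$. Fix stuttering functions $f$ of $T$ and $g$ of $S$ with $T[f] = S[g]$. By the proposition asserting $\size{T} = \size{T[f]}$ — whose proof establishes that $t \mapsto t[f]$ is injective — both $t \mapsto t[f]$ and $s \mapsto s[g]$ are bijections onto the common team $T[f] = S[g]$. Composing one with the inverse of the other yields a bijection $\beta \colon T \to S$ characterized by $\beta(t)[g] = t[f]$ for every $t \in T$.

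Given an arbitrary cover $T = T_1 \cup T_2$, I would then set $S_i \dfn \{\beta(t) \mid t \in T_i\}$ for $i \in \{1,2\}$. Surjectivity of $\beta$ gives $S_1 \cup S_2 = S$, and this step is insensitive to whether the two pieces overlap. It remains to verify $T_i \steq S_i$: since $T_i \subseteq T$ and $S_i \subseteq S$, the function $f$ is still a stuttering function of $T_i$ and $g$ one of $S_i$ (a stuttering function of a team is precisely one that stutters every member trace), and by the defining property of $\beta$,
\[ T_i[f] = \{t[f] \mid t \in T_i\} = \{\beta(t)[g] \mid t \in T_i\} = S_i[g]\text{,} \]
so $f$ and $g$ witness $T_i \steq S_i$.

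I do not anticipate a serious obstacle: once the bijection $\beta$ is isolated, everything is formal. The only two points requiring care are the handling of the empty team in the easy direction (dispatched by the cardinality corollary) and the remark that $f,g$ remain valid stuttering functions after restriction to the subteams $T_i$ and $S_i$, which is immediate from the definition.
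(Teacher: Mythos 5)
Your proposal is correct and follows essentially the same route as the paper: the easy direction via $T = T \cup \emptyset$ and the cardinality corollary is identical, and in the hard direction your bijection $\beta$ yields exactly the paper's subteams, since $\beta(T_i) = \{ s \in S \mid s[g] \in T_i[f] \}$ is precisely the preimage construction the paper uses, verified by the same two facts (stuttering functions restrict to subteams, and $T_i[f] = S_i[g]$).
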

\begin{proof}
 2.\ to 1.\ is easy:
 Obviously $T = T \cup \emptyset$, but $\emptyset$ is only stutter-equivalent to itself, so $T \steq S$.

 We proceed with 1.\ to 2.
 By assumption, there are stuttering functions $f$ of $T$ and $g$ of $S$ such that $T[f] = U = S[g]$.
 Let $U_i \dfn T_i[f]$ for $i \in \{1,2\}$, and let $S_i \dfn \{ t \in S \mid t[g] \in U_i \}$.
 Then $S_i[g] = U_i = T_i[f]$.
 Clearly $f$ and $g$ are stuttering functions for $T_i$ and $S_i$ as well, which yields $T_i \steq S_i$.

 Obviously $S_1 \cup S_2 \subseteq S$, so it remains to show that $S \subseteq S_1 \cup S_2$.
 Let $t \in S$.
 Then $t[g] \in S[g] = T[f]$, so there is $t' \in T$ such that $t'[f] = t[g]$.
 As $T \subseteq T_1 \cup T_2$, there is $i \in \{1,2\}$ such that $t' \in T_i$, consequently $t[g] = t'[f] \in U_i$.
 From $t \in S$ and $t[g] \in U_i$ we conclude $t \in S_i$.
\end{proof}

\begin{lemma}[Future preserves stutter invariance]\label{lem:mapping-stutter}
 Let $T,S$ be teams.
 Then the following are equivalent:
 \begin{enumerate}
  \item $T \steq S$.
  \item There are surjective, non-decreasing $\mu,\nu \colon \N \to \N$ such that $T^{\mu(n)} \steq S^{\nu(n)}$ for all $n \geq 0$.
 \end{enumerate}
\end{lemma}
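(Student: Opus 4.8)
The plan is to prove the two implications separately, noting at the outset that essentially all of the work lies in $(1) \Rightarrow (2)$, while $(2) \Rightarrow (1)$ is immediate. For the easy direction, observe that any non-decreasing surjection $\mu \colon \N \to \N$ must satisfy $\mu(0) = 0$, since $\mu(0)$ is its minimum value and $0$ lies in its range; likewise $\nu(0) = 0$. Instantiating the hypothesis of $(2)$ at $n = 0$ then gives $T = T^{\mu(0)} \steq S^{\nu(0)} = S$, which is exactly $(1)$.

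For $(1) \Rightarrow (2)$ I would first fix stuttering functions $f$ of $T$ and $g$ of $S$ with $T[f] = S[g] =: U$, which exist by the definition of $T \steq S$. The crucial auxiliary fact is that an \emph{arbitrary} suffix of $T$ is stutter-equivalent to a suffix of the common reduct $U$: namely, for the unique index $k$ with $f(k) \leq i < f(k+1)$ one has $T^i \steq U^k$. To prove this I would exhibit the stuttering function $h$ of $T^i$ given by $h(0) \dfn 0$ and $h(m) \dfn f(k+m) - i$ for $m \geq 1$. Strict monotonicity and $h(0) = 0$ are clear, and the stuttering condition for $h$ reduces block-by-block to that of $f$, the initial block using that $t(i) = \cdots = t(f(k+1)-1)$ since $f(k) \leq i$. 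A direct computation shows $(T^i)[h] = U^k$, and since the identity is a stuttering function of $U^k$, this yields $T^i \steq U^k$. Applying the same statement to $S$ and $g$ gives $S^j \steq U^k$ whenever $g(k) \leq j < g(k+1)$.

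With this sub-lemma, every position of $T$ falls into a unique block $[f(b), f(b+1))$ and every position of $S$ into a unique block $[g(b), g(b+1))$, and the two families are synchronized by the common index $b$ because $T[f] = S[g]$. Within block $b$, all suffixes $T^i$ and $S^j$ are stutter-equivalent to the single team $U^b$, hence mutually stutter-equivalent by transitivity (\Cref{thm:stutter-properties}). I would therefore build $\mu$ and $\nu$ block by block: sweeping $b = 0,1,2,\dots$, for each $b$ I emit $\max(f(b+1)-f(b),\, g(b+1)-g(b))$ pairs, letting the first coordinate run through $f(b),\dots,f(b+1)-1$ and the second through $g(b),\dots,g(b+1)-1$, padding the shorter side by repeating its last position. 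Both $\mu$ and $\nu$ are then non-decreasing, since each coordinate increases within a block and strictly jumps up at block boundaries; both are surjective, since within each block the relevant coordinate runs through all positions of that block; and $T^{\mu(n)} \steq S^{\nu(n)}$ holds for every $n$ because the two matched suffixes always lie in the same block $b$ and are thus both $\steq U^b$.

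The step I expect to be the main obstacle is arranging $\mu$ and $\nu$ to be \emph{simultaneously} surjective while remaining non-decreasing and correctly matched: a naive choice such as $\mu = \mathrm{id}$ together with a $\nu$ picked to match values would generally fail to be surjective on the $S$-side. The block-wise pairing resolves this precisely because the sub-lemma collapses each block to the single reduced suffix $U^b$, so within a block the pairing of $T$- and $S$-positions is entirely free, and I can afford to let both coordinates traverse all of their respective positions.
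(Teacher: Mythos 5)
Your proposal is correct and takes essentially the same approach as the paper: your shifted function $h(m) = f(k+m) - i$ is precisely the paper's $f_k(i) = f(i+k) - f(k)$ generalized to an arbitrary starting position inside a block, and your explicit block-by-block construction of $\mu,\nu$ with padding is a repackaging of the paper's inductive construction, which advances both sides in lockstep at block boundaries and lets one side wait while the other stutters. The only difference is organizational: you factor the block computation into a standalone sub-lemma and conclude via transitivity (\Cref{thm:stutter-properties}), whereas the paper interleaves the same steps inside an induction maintaining the invariant $f(k) \leq \mu(n) < f(k+1)$, $g(k) \leq \nu(n) < g(k+1)$.
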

In particular, 2.\ implies that every suffix of $T$ is stutter-equivalent to a suffix of $S$ and vice versa, which will be necessary for stutter invariance of the temporal operators.

\begin{proof}
 Again, 2.\ to 1.\ is easy, since necessarily $\mu(0) = \nu(0) = 0$.
For 1.\ to 2., suppose $T[f] = S[g]$ for stuttering functions $f,g$.
 We construct the functions $\mu,\nu$ inductively such that always $T^{\mu(n)} \steq S^{\nu(n)}$.
At the same time, we will ensure another invariant necessary for showing the correctness of the construction, namely that for every $n$ there is $k$ such that $f(k) \leq \mu(n) < f(k+1)$ and $g(k) \leq \nu(n) < g(k+1)$.

 Intuitively, the functions $\mu$ and $\nu$ "run along" $T$ and $S$, with $\mu$ "waiting" on $T$ if $S$ stutters and vice versa (recall that $\mu$ and $\nu$ are only non-decreasing, not necessarily increasing).
 Positions of the form $f(k)$ and $g(k)$, in the range of the stuttering functions, are then crossed in lockstep, so to speak.

 We give the formal definition.
 First, let $\mu(0) = \nu(0) = 0$.
 For the inductive step to $n+1$, we distinguish two cases.
First, if either $\mu(n) + 1 \notin \ran f$ or $\nu(n) + 1 \notin \ran g$ or both holds, let
 \[
  \mu(n+1) \dfn \begin{cases}
   \mu(n)   & \text{if }\mu(n)+1 \in \ran f    \\
   \mu(n)+1 & \text{if }\mu(n)+1 \notin \ran f
  \end{cases}
 \]
 and analogously for $\nu$ and $g$.
 In other words, we can "advance" $\mu$ and/or $\nu$ as long as neither crosses the range of $f$ or $g$.
 In either case, clearly $T^{\mu(n)}\steq T^{\mu(n+1)}$ and $S^{\nu(n)} \steq S^{\nu(n+1)}$, so $T^{\mu(n+1)} \steq S^{\nu(n+1)}$ by induction hypothesis.

 In the other case, both $\mu(n) + 1 \in \ran f$ and $\nu(n) + 1 \in \ran g$.
 Then we advance both $\mu$ and $\nu$ in lockstep:
 let $\mu(n+1) \dfn \mu(n)+1$ and $\nu(n+1) \dfn \nu(n)+1$.
 By induction hypothesis, there is a common $k$ such that $\mu(n+1) = \mu(n)+1 = f(k)$ and $\nu(n+1) = \nu(n)+1 = g(k)$.
 It remains to show that $T^{f(k)} \steq S^{g(k)}$.

 For this, we define stuttering functions $f_k$ of $T$ and $g_k$ of $S$ such that $T^{f(k)}[f_k] = S^{g(k)}[g_k]$ by $f_k(i) \dfn f(i + k) - f(k)$ and $g_k(i) \dfn g(i+k) - g(k)$.
 Clearly $f_k(0) = g_k(0) = 0$ and both are strictly increasing.

 We show that these are stuttering functions of $T^{f(k)}$ and $S^{g(k)}$.
 For the sake of contradiction, suppose $f_k$ is not a stuttering function of $T^{f(k)}$ (the proof for $g_k$ and $S^{g(k)}$ is analogous).
 Then there exist a position $0 < i \notin \ran f_k$ and a trace $t \in T^{f(k)}$ such that $t(i-1) \neq t(i)$, or equivalently, $t \in T$ such that $t(f(k)+i-1) \neq t(f(k)+i)$.

 However, this implies that $f(k) + i  \in \ran f$ as $f$ is a stuttering function of $T$, so $f(k) + i = f(\ell)$ for some $\ell$.
 In particular, $\ell > k$.
 But then $f_k(\ell - k) = f(\ell - k + k) - f(k) = f(\ell)-f(k) = i$, so $i \in \ran f_k$, contradiction.

 With $f_k$ and $g_k$ being stuttering functions, it remains to show $T^{f(k)}[f_k] = S^{g(k)}[g_k]$.
 For this we prove $T^{f(k)}[f_k] = T[f]^k$ and $S^{g(k)}[g_k] = S[g]^k$, since $T[f] = S[g]$ by assumption, which implies $T[f]^k = S[g]^k$.
 We show $T^{f(k)}[f_k] = T[f]^k$ (the proof is again analogous for $S$).
 For this it suffices that for every trace $t$,
 \begin{align*}
  t^{f(k)}[f_k] \,=\, (t(f_k(i)+f(k)))_{i \geq 0} \,=\, (t(f(i+k)))_{i \geq 0} \,=\, t[f]^k\text{.}\tag*{\qedhere}
 \end{align*}
\end{proof}

\begin{theorem}\label{thm:stutter-invariance-ltl}
 Every $\X$-free $\LTL(\sneg)$-formula is stutter-invariant.
\end{theorem}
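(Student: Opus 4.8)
The plan is to proceed by structural induction on the $\X$-free formula $\phi$, proving for every such $\phi$ and all teams $T \steq T'$ that $T \vDash \phi \LR T' \vDash \phi$. Fix once and for all stuttering functions $f$ of $T$ and $g$ of $T'$ with $T[f] = T'[g]$. The atomic case $\phi = p$ is immediate: since $f(0) = g(0) = 0$, the set of initial labels $\{t(0) \mid t \in T\}$ coincides with $\{u(0) \mid u \in T[f]\} = \{t'(0) \mid t' \in T'\}$, so $p$ belongs to every initial label of $T$ iff it does so for $T'$. The Boolean cases $\sneg \psi$ and $\psi \land \chi$ reduce directly to the induction hypothesis, since their truth depends only Boolean-combinatorially on the truth of the immediate subformulas evaluated on the \emph{same} teams.

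The weak negation $\neg \psi$ needs slightly more care, since $T \vDash \neg \psi$ means $\{t\} \nvDash \psi$ for every $t \in T$. Here I would use that $t \mapsto t[f]$ and $t' \mapsto t'[g]$ are bijections onto the common team $T[f] = T'[g]$, so they induce a bijection $\beta \colon T \to T'$ characterized by $\{t\}[f] = \{\beta(t)\}[g]$; as $f$ and $g$ restrict to stuttering functions of the singletons, this gives $\{t\} \steq \{\beta(t)\}$ for each $t$. The induction hypothesis applied to $\psi$ on these singleton teams yields $\{t\} \vDash \psi \LR \{\beta(t)\} \vDash \psi$, and the claim for $\neg \psi$ follows. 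For the splitting disjunction $\psi \lor \chi$ I would invoke \Cref{lem:stutter-split}: from a witnessing split $T = T_1 \cup T_2$ with $T_1 \vDash \psi$ and $T_2 \vDash \chi$, the lemma produces $T' = S_1 \cup S_2$ with $T_i \steq S_i$, and the induction hypothesis transfers satisfaction to $S_1$ and $S_2$; the converse direction is symmetric.

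For the temporal operators I would apply \Cref{lem:mapping-stutter} to obtain surjective, non-decreasing $\mu, \nu \colon \N \to \N$ with $T^{\mu(n)} \steq T'^{\nu(n)}$ for all $n$. The cases $\F \psi$ and $\G \psi$ are then routine: for $\F \psi$, if $T^k \vDash \psi$ then surjectivity of $\mu$ gives $n$ with $\mu(n) = k$, so $T^k \steq T'^{\nu(n)}$ and the induction hypothesis yields $T'^{\nu(n)} \vDash \psi$; for $\G \psi$, surjectivity of $\nu$ matches every suffix of $T'$ to a suffix of $T$ that satisfies $\psi$. The delicate cases, and the main obstacle, are $\psi \U \chi$ and $\psi \R \chi$, whose mixed "$\exists k\,\forall j < k$" structure means a careless choice of preimage could send a position $j < k$ on the $T$-side to a position that is only $\leq k'$, rather than strictly below $k'$, on the $T'$-side, breaking the strict-inequality requirement.

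To resolve $\psi \U \chi$, given a witness $k$ (so $T^k \vDash \chi$ and $T^j \vDash \psi$ for all $j < k$), I would let $n^\star$ be the \emph{smallest} index with $\mu(n^\star) = k$ and set $k' \dfn \nu(n^\star)$, which gives $T'^{k'} \steq T^k \vDash \chi$. For any $j' < k'$, surjectivity yields $m$ with $\nu(m) = j'$; monotonicity of $\nu$ forces $m < n^\star$, whereupon minimality of $n^\star$ together with monotonicity of $\mu$ forces $\mu(m) < k$, so $T'^{j'} \steq T^{\mu(m)} \vDash \psi$ by the induction hypothesis. The release case $\psi \R \chi$ is dual: for an arbitrary $k'$ take the smallest $m^\star$ with $\nu(m^\star) = k'$, then distinguish whether $T^{\mu(m^\star)} \vDash \chi$ or some $T^j \vDash \psi$ with $j < \mu(m^\star)$, again using the smallest-preimage choice to preserve the strict inequality when pulling $j$ back across $\mu$. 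Both converses then follow from the symmetry of $\steq$ established in \Cref{thm:stutter-properties}.
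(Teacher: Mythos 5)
Your proposal is correct and follows essentially the same route as the paper: the same structural induction, using \Cref{lem:stutter-split} for the splitting disjunction and \Cref{lem:mapping-stutter} with the minimal-preimage trick to preserve the strict inequality in the $\U$-case. The only differences are local and cosmetic: the paper first eliminates $\F$, $\G$, $\R$ via the equivalences $\G\psi \equiv \sneg\F\sneg\psi$, $\F\psi\equiv\top\U\psi$, $\psi\R\theta\equiv\sneg(\sneg\psi\U\sneg\theta)$ and rewrites $\neg\psi$ as $\qa{}\sneg\psi$, whereas you handle these connectives directly --- your bijection argument for $\neg$ (using injectivity of $t \mapsto t[f]$) and your dual smallest-preimage argument for $\R$ are both sound.
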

\begin{proof}
 Due to the equivalences $\G\psi \equiv \sneg \F\sneg\psi$, $\F\psi \equiv \top\U\psi$ and $\psi\R\theta\equiv\sneg(\sneg\psi\U\sneg\theta)$, it suffices to consider formulas $\phi \in \LTL(\sneg,\U)$.
 We have to show, for teams $T$ and $S$, that $T \steq S$ implies $T \vDash \phi \LR S \vDash \phi$.
 Hence let $T$ and $S$ be stutter-equivalent teams via $T[f] = S[g]$.
The proof is now by induction on $\phi$.

 \begin{itemize}
  \item For all propositional formulas, \eg, $\phi = p, \neg p, \top, \bot$ for $p  \in \AP$, this is clear:
        Since $f(0) = 0 = g(0)$, the teams $T$ and $S$ agree on the first position of traces.
  \item For the Boolean connectives, $\land$ and $\sneg$, the induction step is clear.
  \item For the case $\phi = \neg \psi$, recall from p.~\pageref{p:neg-to-sneg} that $\neg \psi \equiv \qa{}\sneg\psi$, which can be reduced to the connectives $\bot$, $\sneg$ and $\lor$.
  \item The $\lor$-case follows by induction hypothesis and \Cref{lem:stutter-split}.
  \item For the $\U$-case, we show "$\Rightarrow$", as "$\Leftarrow$" is symmetric.
        Suppose $T \vDash \psi \U \theta$, so there is $k \geq 0$ such that $T^k \vDash \theta$ and $T^j\vDash \psi$ for all $j < k$.
        We show that $S \vDash \psi \U \theta$.
        By \Cref{lem:mapping-stutter}, there are surjective, non-decreasing $\mu,\nu$ such that $T^{\mu(n)}\steq S^{\nu(n)}$ for all $n$.
        In particular, $\mu(n) = k$ for some $n$, so by induction hypothesis $S^{\nu(n)} \vDash \theta$.
It remains to show that $S^\ell \vDash \psi$ for all $\ell < \nu(n)$.
        Choose $n$ minimal, \ie, such that $n = 0$ or $\mu(n-1) < k$.
        If $n = 0$, then $\nu(n) = 0$ and we are done.
        Otherwise let $\ell < \nu(n)$ be arbitrary; then there exists $m < n$ such that $\nu(m) = \ell$ since $\nu$ is surjective and non-decreasing.
        As $\mu(m) \leq \mu(n-1) < k$, by assumption $T^{\mu(m)} \vDash \psi$.
        But then also $S^{\nu(m)} = S^{\ell} \vDash \psi$ by induction hypothesis.
        \qedhere
 \end{itemize}
\end{proof}

\section{From LTL($\sneg$) to third-order arithmetic}\label{sec:upper}

In this section, we translate formulas of $\LTL(\sneg)$ into arithmetic formulas and by this obtain the upper complexity bounds for \Cref{thm:uncountable-main-thm} and \ref{thm:countable-main-thm}.
Before we start, let us briefly introduce third-order arithmetic.
For a full formal definition in the context of higher order logic, we refer the reader to the literature~\cite{Leivant94}; here we need only a small part of it.
In this context, we write $\N$ for the standard model of arithmetic with the usual interpretations for $+$, $\times$ and so on.

\smallskip

A (third-order) \emph{type} is a tuple $\tau = (n_1,\ldots,n_k)$, where $k,n_1,\ldots,n_k \geq 1$.
For each type $\tau$ there is a set of (third-order) \emph{$\tau$-variables} $\calV_\tau \dfn \{\fraka,\frakb,\frakc, \ldots \}$.
Syntactically, third-order logic extends second-order logic as follows.
We include all atomic formulas, connectives and quantifiers from second-order logic.
Moreover, if $\fraka \in \calV_\tau$ is a third-order variable of type $\tau = (n_1,\ldots,n_k)$ and for each $i \in \{1,\ldots,k\}$, $A_i$ is a second-order relation symbol of arity $n_i$, then $\fraka(A_1,\ldots,A_k)$ is an atomic formula.\footnote{We leave out higher-order functions here, as they easily can be represented by a relation encoding their graph, analogously to the second-order case.}
Finally, if $\phi$ is a formula and $\fraka$ is as above, then $\exists \fraka \phi$ and $\forall \fraka \phi$ are formulas.
A formula is \emph{closed} if it has no free variables.

If $\tau = (n_1,\ldots,n_k)$ is a type, then elements $\calA \in \wp(\wp\N^{n_1}\times \cdots \times \wp\N^{n_k})$ are called \emph{$\tau$-objects}.
For example, the subset relation $\subseteq$ is a $(1,1)$-object, and "${\subseteq}(A,B)$" would be an atomic formula if $A$ and $B$ are unary (set) variables.
An example of a $(2,2)$-object is the relation "\emph{is the transitive closure of}".

We extend the usual Tarski semantics to third-order logic, \ie, interpretations $\calI$ map $\tau$-variables to $\tau$-objects.
An interpretation $\calI$ satisfies $\fraka(A_1,\ldots,A_k)$ if $(\calI(A_1),\ldots,\calI(A_k)) \in \calI(\fraka)$.
The quantifiers work as expected.

The set of all formulas of third-order arithmetic, that is, third-order formulas over the vocabulary $(+,\times,0,1,=,\leq)$, is written $\Delta^3_0$.
Likewise, the set of formulas of second-order arithmetic is $\Delta^2_0$.
The subset of closed formulas that are true in $\N$ is denoted by a boldface letter, \ie, $\mathbf{\Delta^3_0}$ and $\mathbf{\Delta^2_0}$, respectively.\footnote{With the notation, we follow the convention for first-order arithmetic $\Delta^1_0$ and second-order arithmetic $\Delta^2_0$; an equivalent notation for $\Delta^3_0$ would be $\Delta^2_\omega$~\cite{0066920}.
}

\subsection{The uncountable cases}

We reduce $\LTL(\sneg)$ to third-order arithmetic as follows.
The propositions $p_1,p_2,\ldots$ are identified with numbers $1,2,\ldots$.
The idea is now that a trace $t$ can be encoded as a binary relation $S$ such that $S(j,k)$ is true iff $p_k \in t(j)$, and vice versa.
Finally, a team is encoded as a unary third-order relation $\calA$ of binary relations $S$ such that $S \in \calA$ iff $S$ represents some trace $t \in T$.
Now, we define a translation $\rho_\fraka(\phi) \in \Delta^3_0$ of the $\LTL(\sneg)$-formula $\phi$ with one free third-order variable $\fraka$ of type $(2)$.
The translation is faithful in the sense that a team $T$ satisfies $\phi$ if and only if $(\N,\calA) \vDash \rho_\fraka(\phi)$, where $\calA$ is the third-order relation encoded by $T$.

In what follows, we restrict ourselves to the temporal operators $\X$ and $\U$, since the remaining ones are expressible as $\G \phi \equiv \sneg \F \sneg \phi$, $\F \phi \equiv \top \U \phi$ and $\phi \R \psi \equiv \sneg (\sneg \phi \U \sneg \psi)$.
Moreover, we can assume that $\neg$ occurs only in front of atomic propositional formulas, by an argument as in the proof of \Cref{thm:stutter-invariance-ltl}.

Let $\underline{m}$ denote the term $\underbrace{1+1+\cdots+1}_{m\text{ times}}$ for $m \geq 1$, or $0$ if $m = 0$, respectively.

The atomic formulas and Boolean connectives are straightforward:
\begin{align*}
  \rho_\fraka(p_k) &\dfn \forall S (\fraka(S) \imp S(0,\underline{k}))\\
  \rho_\fraka(\neg p_k) & \dfn \forall S (\fraka(S) \imp \neg S(0,\underline{k}))\\
  \rho_\fraka(\psi \land \theta) & \dfn \rho_\fraka(\psi) \land \rho_\fraka(\theta)\\
  \rho_\fraka(\sneg \psi) & \dfn \neg \rho_\fraka(\psi)
\intertext{For the splitting connective, we existentially quantify two subteams:}
  \rho_\fraka(\psi \lor \theta) & \dfn \exists \frakb \exists \frakc \big( \forall S (\fraka(S) \equi (\frakb(S) \lor \frakc(S))) \land \rho_\frakb(\psi) \land \rho_\frakc(\theta) \big)
\intertext{For the temporal operators, we define an auxiliary formula $\fraka \xrightarrow{d} \frakb$ with a free first-order variable $d$ and two free third-order variables $\fraka$ and $\frakb$, stating that $\frakb$ represents the team $T^d$ when $\fraka$ represents the team $T$:}
  \fraka \xrightarrow{d} \frakb & \dfn \forall S \big( \frakb(S) \equi \exists S' (\fraka(S') \land \forall j \forall k (S(j,k) \equi S'(j+d,k)))    \big)
  \intertext{Then we can translate $\X$ and $\U$ as follows:}
  \rho_{\fraka}(\X \psi) & \dfn \exists \frakb (\fraka \xrightarrow{1} \frakb \land \rho_\frakb(\psi))\\
 \rho_\fraka(\psi \U \theta) & \dfn \exists d \, \forall e \, \exists \frakb \, \exists \frakc \big(\fraka \xrightarrow{d} \frakb \land \fraka \xrightarrow{e}\frakc \land \rho_\frakb(\theta) \land (e < d \imp \rho_{\frakc}(\psi))\big)
\end{align*}

\begin{lemma}\label{lem:delta3-equiv}
 Let $\phi \in \LTL(\sneg)$ and let $T$ be a team.
 Let
 \[
  \calA_T \dfn \Big\{ \big\{ (j,k) \in \N^2 \mid p_k \in t(j) \big\} \mid t \in T\Big\}\text{.}
 \]
 Then $T \vDash \phi$ if and only if $(\N, \calA_T) \vDash \rho_\fraka(\phi)$.
\end{lemma}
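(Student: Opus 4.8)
The plan is to argue by structural induction on $\phi$, exploiting the exact correspondence between teams and their encodings. For a trace $t$ write $S_t \dfn \{(j,k) \mid p_k \in t(j)\}$, so that $\calA_T = \{ S_t \mid t \in T\}$. The crucial preliminary observation is that $t \mapsto S_t$ is a bijection between traces and binary relations in $\wp(\N^2)$: it is injective since $S_t = S_{t'}$ forces $p_k \in t(j) \LR p_k \in t'(j)$ for all $j,k$, hence $t = t'$; and it is surjective since every $S \subseteq \N^2$ equals $S_t$ for the trace $t$ with $t(j) = \{p_k \mid (j,k) \in S\}$. Consequently teams correspond bijectively to sets of binary relations, and one records for later the identity $\calA_{T_1 \cup T_2} = \calA_{T_1} \cup \calA_{T_2}$. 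By the reductions already made, the induction need only treat the atoms $p_k$, $\neg p_k$, the connectives $\sneg,\land,\lor$, and the operators $\X,\U$; throughout, I interpret $\fraka$ by $\calA_T$.

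The base cases and the Boolean connectives are immediate. For $\phi = p_k$ one unwinds: $T \vDash p_k$ iff $p_k \in t(0)$ for all $t \in T$, iff $(0,k) \in S_t$ for all $S_t \in \calA_T$, which is exactly $(\N,\calA_T) \vDash \forall S(\fraka(S) \imp S(0,\underline{k}))$; the case $\neg p_k$ is symmetric. For $\sneg \psi$ and $\psi \land \theta$ the claim follows directly from the induction hypothesis and the definitions of $\rho_\fraka(\sneg\psi)$ and $\rho_\fraka(\psi \land \theta)$.

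The splitting disjunction is the first case requiring care. I would show that, under $\fraka = \calA_T$, the pairs $(\frakb,\frakc)$ of $(2)$-objects satisfying $\forall S(\fraka(S) \equi (\frakb(S) \lor \frakc(S)))$ are exactly the pairs $(\calA_{T_1},\calA_{T_2})$ arising from splittings $T = T_1 \cup T_2$. Indeed, the covering condition forces $\frakb,\frakc \subseteq \calA_T$ and $\frakb \cup \frakc = \calA_T$; setting $T_1 \dfn \{ t \in T \mid S_t \in \frakb\}$ and $T_2 \dfn \{t \in T \mid S_t \in \frakc\}$ and using injectivity of $t \mapsto S_t$ gives $\calA_{T_1} = \frakb$, $\calA_{T_2} = \frakc$, and $T_1 \cup T_2 = T$. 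Conversely, any splitting yields such a pair via $\calA_{T_1 \cup T_2} = \calA_{T_1} \cup \calA_{T_2}$. With this correspondence, $\rho_\fraka(\psi \lor \theta)$ matches the team semantics of $\lor$ by applying the induction hypothesis to $\psi$ at $\frakb = \calA_{T_1}$ and to $\theta$ at $\frakc = \calA_{T_2}$.

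For the temporal operators I would first prove the auxiliary fact that, under $\fraka = \calA_T$, the formula $\fraka \xrightarrow{d} \frakb$ holds precisely when $\frakb = \calA_{T^d}$: unfolding the definition, $S \in \frakb$ iff there is $S' \in \calA_T$ with $S(j,k) \LR S'(j+d,k)$, and if $S' = S_{t'}$ this says $S$ encodes $(t')^d$, so $\frakb$ encodes $\{ (t')^d \mid t' \in T\} = T^d$. The case $\X\psi$ then follows, since $\frakb$ is forced to encode $T^1$ and $T \vDash \X\psi \LR T^1 \vDash \psi$ is matched by the induction hypothesis. The main obstacle is the $\U$ case, where one must check that the quantifier block $\exists d \, \forall e \, \exists \frakb \exists \frakc$ faithfully renders $\exists k\,( T^k \vDash \theta \land \forall j < k\, T^j \vDash \psi)$: here $d$ plays the role of $k$; for this $d$ the uniquely determined $\frakb$ encodes $T^d$, so $\rho_\frakb(\theta)$ becomes $T^d \vDash \theta$ by the induction hypothesis; and for each $e$ the uniquely determined $\frakc$ encodes $T^e$, with the guard $e < d \imp \rho_\frakc(\psi)$ requiring $T^e \vDash \psi$ exactly when $e < d$ and being vacuously true otherwise. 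Matching these quantifier by quantifier against the team semantics of $\U$ completes the induction.
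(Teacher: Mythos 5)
Your proof is correct and takes essentially the same approach as the paper: the paper's own proof is literally just ``straightforward induction on the syntax of $\phi$,'' and you carry out exactly that induction, handling the two points that need care (the splitting correspondence between coverings $T = T_1 \cup T_2$ and coverings $\frakb \cup \frakc = \calA_T$, and the fact that $\fraka \xrightarrow{d} \frakb$ uniquely forces $\frakb = \calA_{T^d}$) correctly. One pedantic remark: since propositions are indexed from $1$, the map $t \mapsto S_t$ is \emph{not} surjective onto $\wp(\N^2)$ (no encoding contains a pair $(j,0)$), but this is harmless, as your argument only ever uses injectivity together with the inclusions $\frakb, \frakc \subseteq \calA_T$, never surjectivity.
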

\begin{proof}
Straightforward induction on the syntax of $\phi$.
\end{proof}

\begin{theorem}\label{thm:ltlsat-to-delta3}
Satisfiability and finite satisfiability of $\LTL(\sneg)$ are logspace-reducible to $\mathbf{\Delta^3_0}$.
\end{theorem}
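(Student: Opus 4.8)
The plan is to reduce both problems by existentially quantifying over the third-order encoding of a satisfying team and invoking \Cref{lem:delta3-equiv}. For plain satisfiability, I set $\Phi_\phi \dfn \exists \fraka\, \rho_\fraka(\phi)$, where $\fraka$ has type $(2)$, and claim that $\phi$ is satisfiable if and only if $\N \vDash \Phi_\phi$. The map $\phi \mapsto \Phi_\phi$ is computable in logspace, since $\rho_\fraka(\phi)$ is defined by a local structural recursion that introduces only a bounded number of fresh third-order variables per connective (these can be named by the position of the subformula, which is a logspace-computable index), and for satisfiability we merely prepend a single $\exists \fraka$.

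For the forward direction, if $T \vDash \phi$ then by \Cref{lem:delta3-equiv} the object $\calA_T$ witnesses $(\N, \calA_T) \vDash \rho_\fraka(\phi)$, hence $\N \vDash \Phi_\phi$. For the backward direction, let $\calA$ be a type-$(2)$ object with $(\N, \calA) \vDash \rho_\fraka(\phi)$, and decode it into the team $T \dfn \{\, t_S \mid S \in \calA \,\}$, where $t_S(j) \dfn \{\, p_k \mid k \geq 1,\ (j,k) \in S \,\}$. I would then observe that the truth value of $\rho_\fraka(\phi)$ depends only on the restriction of the relations in $\fraka$ to strictly positive second coordinates: every atomic subformula only ever reads entries $S(j,\underline{k})$ with $k \geq 1$, and the shift relations $\fraka \xrightarrow{d} \frakb$ act uniformly in the first coordinate, so positive columns are mapped to positive columns. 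Hence replacing $\calA$ by the canonical encoding $\calA_T$ does not change the value, and \Cref{lem:delta3-equiv} applied to $\calA_T$ yields $T \vDash \phi$.

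For finite satisfiability, I would conjoin $\rho_\fraka(\phi)$ with a predicate $\mathrm{fin}(\fraka)$ asserting that the team decoded from $\fraka$ is $T(\calK)$ for some finite Kripke structure $\calK$, and set $\Phi_\phi^{\mathrm{fin}} \dfn \exists \fraka\, (\rho_\fraka(\phi) \land \mathrm{fin}(\fraka))$. A finite $\calK$ is encoded by a size $n \in \N$, a root $r \leq \underline{n}$, a transition relation $R \subseteq [n]^2$, and a labeling $L \subseteq [n] \times \N$ (with $L(w,k)$ meaning $p_k \in \eta(w)$), all of which are first- or second-order objects. The crucial point is that an infinite path through $\calK$ is itself a second-order object $P \subseteq \N \times [n]$ (a function with $P(0) = r$ respecting $R$), so membership of a trace in $T(\calK)$ is expressible with a single inner second-order quantifier. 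Thus $\mathrm{fin}(\fraka)$ becomes
\[
 \exists n\, \exists r\, \exists R\, \exists L\; \forall S\Big(\fraka(S) \equi \exists P\,\big(\text{$P$ is a path in }\calK \land \text{$S$ is the trace of }P\big)\Big),
\]
which, apart from the outer $\forall S$ and the occurrence of $\fraka$, is purely second-order and hence keeps $\Phi_\phi^{\mathrm{fin}}$ within $\Delta^3_0$. The correctness argument then runs exactly as for plain satisfiability.

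The step I expect to be the main obstacle is the soundness of the backward directions, specifically ensuring that an arbitrary third-order witness really decodes to a genuine satisfying team: this is where the observation that $\rho_\fraka(\phi)$ ignores the zeroth column — and is insensitive to splitting-irrelevant duplicates among the encoded relations, which have identical positive columns — must be made precise, rather than simply asserting a bijection between teams and type-$(2)$ objects. For finite satisfiability, the remaining care is in verifying that $\mathrm{fin}(\fraka)$ faithfully captures $T(\calK)$, in particular that the inner second-order quantifier ranges over exactly the infinite paths from the root; but this is a routine encoding once the second-order representation of paths is fixed.
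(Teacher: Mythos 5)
Your overall route is the same as the paper's: for plain satisfiability you prepend $\exists\fraka$ to $\rho_\fraka(\phi)$ and invoke \Cref{lem:delta3-equiv}; for finite satisfiability you existentially quantify a second-order encoding of a finite structure, represent infinite paths as second-order objects, and force $\fraka$ to consist exactly of (the encodings of) the traces of those paths. Your added care in the backward direction -- column-$0$ junk in the witness relations, and distinct relations decoding to the same trace -- addresses a point the paper passes over with ``clearly'', and your argument for it is sound: $\rho_\fraka$ only ever reads entries $S(0,\underline{k})$ with $k\geq 1$, shifts preserve columns, and splits of an arbitrary witness $\calA$ project to splits of the decoded team and conversely lift back via preimages; making this precise amounts to a short induction strengthening \Cref{lem:delta3-equiv} to non-canonical interpretations of $\fraka$.

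There is, however, one concrete gap in the finite-satisfiability part: your displayed $\mathrm{fin}(\fraka)$ never requires the quantified relation $R$ to be serial on $[n]$, whereas the paper's structures are serial by definition, so $T(\calK)$ is always non-empty and the empty team is \emph{not} finitely generated. Without that conjunct the reduction fails on a concrete input: take $\phi = \bot = p\land\neg p$, which is satisfied only by the empty team and hence is not finitely satisfiable; yet choosing $n=1$ and $R=\emptyset$ makes the inner $\exists P$ false for every $S$, so the biconditional forces $\fraka=\emptyset$, and $\rho_\fraka(p)$ and $\rho_\fraka(\neg p)$ both hold vacuously for $\fraka=\emptyset$, making $\Phi^{\mathrm{fin}}_\bot$ true. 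The repair is exactly what the paper's $\psi_{\mathrm{frame}}$ is for: conjoin that every state of the (non-empty, bounded) state set has an $R$-successor inside that set, i.e.\ a guarded totality clause of the form $\forall m\,(W(m)\to\exists n\,(W(n)\land R(m,n)))$. With that conjunct added, any witness decodes to $T(\calK)$ for a genuine finite serial structure, and the rest of your correctness argument goes through as for plain satisfiability.
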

\begin{proof}
 We apply the previous lemma.
 Assume $\phi \in \LTL(\sneg)$.
Clearly, by the above lemma, $\phi$ is satisfiable if and only if $\exists \fraka \, \rho_\fraka(\phi)$ holds in $\N$.

  The case where the team is finitely generated is more complicated.
  Suppose that $T = T(\calK)$, where $\calK = (W,R,\eta,r)$ is a finite structure.
  Essentially, the idea is to first quantify the components of $\calK$ as second-order objects, and then to express in the logic that we are looking precisely at the traces in $T(\calK)$.
For this, \wloss $W \subseteq \N$, $R \subseteq \N \times \N$ and $r = 0$.
  The propositional assignment $\eta$ can equivalently be represented as a relation $\hat{\eta} \dfn \{ (n,k) \mid p_k \in \eta(n) \}$.
We can then talk about $\calK$ inside arithmetic.
 First, the set $W$ should be finite and non-empty.
 Also, $R$ should be total and a subset of $W \times W$:
  \begin{align*}
    \psi_{\mathrm{frame}}(W,R) &\dfn W(0) \land \exists n \, \forall m \, (W(m) \to m < n) \land \forall m \exists n R(m,n)\\
    & \qquad \qquad \land\, \forall m \forall n (R(m,n) \to (W(m) \land W(n)))
  \end{align*}
 A path $\pi$ through $\calK$ is then simply a function $\N \to W$, hence a second-order object.
 We assert that $\pi$ is an $R$-path starting at the root,
 \begin{align*}
   \psi_{\mathrm{path}}(W,R,\pi) & \dfn \pi(0) = 0 \land \forall j \, \big(W(\pi(j))\land R(\pi(j),\pi(j+1))  \big)\text{.}
 \end{align*}
 We model a trace $t$ as the relation $S \dfn \{ (j,k) \mid p_k \in t_j \}$.
We can state that a trace is in $T(\calK)$ by saying that it is the trace of some path from $w$:
 \begin{align*}
   \psi_{\mathrm{trace}}(W,R,\hat{\eta},S) & \dfn \exists \pi \big( \psi_{\mathrm{path}}(W,R,\pi) \land \forall j \forall k (S(j,k) \leftrightarrow \hat{\eta}(\pi(j),k)) \big)
 \end{align*}
The formula
  \begin{align*}
   \psi_{\mathrm{generated}}(W,R,\hat{\eta},\fraka) \dfn \; & \forall S (\fraka(S) \equi \psi_{\mathrm{trace}}(W,R,\hat{\eta},S))
  \end{align*}
  expresses that $\fraka$ contains precisely all traces in $T(\calK)$.
Now $\phi$ is satisfied by the team $T(\calK)$, for some finite $\calK$,
if and only if
 \begin{align*}
\N \vDash \exists W \exists R \exists \hat{\eta} \exists \fraka \, \big( \psi_{\mathrm{frame}}(W,R) \land \psi_{\mathrm{generated}}(W,R,\hat{\eta},\fraka) \land \rho_{\fraka}(\phi) \big)
 \end{align*}

 The formula constructed in the above reduction consists of the inductive translation of $\phi$ as well as a constant part.
 The terms $\underline{k}$ have length linear in $k$, but \wloss $k$ is bounded by $\size{\phi}$.
 For this reason, it is straightforward to show that $\rho_{\fraka}(\phi)$, and hence the whole formula, is computable in logarithmic space.
\end{proof}

In the remainder of this section, we will present similar reductions, which all are logspace-computable as well.

\begin{theorem}\label{thm:ltlmc-to-delta3}
Model checking of $\LTL(\sneg)$ is logspace-reducible to $\mathbf{\Delta^3_0}$.
\end{theorem}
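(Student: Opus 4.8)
The plan is to reuse the arithmetic description of the traces of a finite structure that was developed in the proof of \Cref{thm:ltlsat-to-delta3}. The crucial difference is that for model checking the structure $\calK$ is part of the input and is no longer existentially quantified. Hence, instead of quantifying the components $W$, $R$, $\hat{\eta}$ of $\calK$ as second-order objects, I would hard-code them directly from the concrete data of $\calK$. This makes the reduction strictly easier than the finite-satisfiability case, since the only remaining ingredient is the faithful translation $\rho_\fraka(\phi)$ together with \Cref{lem:delta3-equiv}.

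Concretely, I would first fix the same encoding as in \Cref{thm:ltlsat-to-delta3}, so that \wloss $W \subseteq \N$, $R \subseteq \N \times \N$, $r = 0$, and $\hat{\eta} \dfn \{ (n,k) \mid p_k \in \eta(n) \}$. Because $\calK$ is finite and given explicitly, each of these relations is a finite set, so I can write down first-order formulas $\delta_W(x)$, $\delta_R(x,y)$, $\delta_{\hat{\eta}}(x,y)$ that enumerate the finitely many elements and pairs as explicit disjunctions of numeral equalities, \eg\ $\delta_R(x,y) \dfn \bigvee_{(m,m') \in R}(x = \underline{m} \land y = \underline{m'})$. Substituting these defining formulas for the free relation symbols $W$, $R$, $\hat{\eta}$ in the formula $\psi_{\mathrm{generated}}(W,R,\hat{\eta},\fraka)$ from \Cref{thm:ltlsat-to-delta3} yields a formula $\psi^{\calK}_{\mathrm{generated}}(\fraka)$ with the single free third-order variable $\fraka$, asserting that $\fraka$ encodes precisely the team $T(\calK)$.

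The reduction then maps $(\calK,\phi)$ to the closed formula
\[
  \Phi_{\calK,\phi} \dfn \exists \fraka\,\big(\psi^{\calK}_{\mathrm{generated}}(\fraka) \land \rho_\fraka(\phi)\big)\text{.}
\]
For correctness, observe that $\psi^{\calK}_{\mathrm{generated}}$ pins $\fraka$ down to the unique third-order object $\calA_{T(\calK)}$ encoding $T(\calK)$ in the sense of \Cref{lem:delta3-equiv}. Thus $\N \vDash \Phi_{\calK,\phi}$ iff $(\N,\calA_{T(\calK)}) \vDash \rho_\fraka(\phi)$, which by \Cref{lem:delta3-equiv} holds iff $T(\calK) \vDash \phi$, i.e.\ iff $(\calK,\phi)$ is a positive instance of model checking. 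Since $\fraka$ is uniquely determined, one may equivalently use $\forall \fraka$.

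The step requiring most care is not conceptual but is the verification of the resource bounds. The inductive part $\rho_\fraka(\phi)$ is logspace-computable exactly as already argued in \Cref{thm:ltlsat-to-delta3}, and the defining formulas $\delta_W,\delta_R,\delta_{\hat{\eta}}$ are produced by a single pass over the input lists for $\calK$. The numerals $\underline{m}$ occurring in them have length linear in the encoded states and propositions, hence are bounded by $\size{(\calK,\phi)}$, so the whole formula $\Phi_{\calK,\phi}$ is computable in logarithmic space. This completes the reduction.
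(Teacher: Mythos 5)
Your proposal is correct and follows essentially the same route as the paper: both reduce model checking by hard-coding the input structure $\calK$ in arithmetic via explicit disjunctions of numeral equalities and then reusing $\psi_{\mathrm{generated}}$ and $\rho_\fraka(\phi)$ from \Cref{thm:ltlsat-to-delta3} together with \Cref{lem:delta3-equiv}. The only cosmetic difference is that the paper keeps the second-order quantifiers $\exists W \exists R \exists \hat{\eta}$ and pins them down with constraint formulas $\psi_{=W'}, \psi_{=R'}, \psi_{=\eta'}$, whereas you substitute the defining formulas directly into $\psi_{\mathrm{generated}}$; the two are trivially equivalent and both logspace-computable.
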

\begin{proof}
  We proceed as in the proof for satisfiability restricted to finitely generated teams, but additionally have to claim on the level of formulas that the relations $W$, $R$, and $\hat{\eta}$, which are quantified in the logic, are equal to the structure which is given as the input instance, say, $(W', R', \eta', r', \phi)$.
\Wloss the input structure is of the form $W' = \{0,1,\ldots,m-1\}$, for some $m > 0$, and $r' = 0$.

  Then the conjunction of the formulas
  \begin{align*}
    \psi_{=W'}(W) &\dfn \forall n (W(n) \equi n < \underline{m})\\
\psi_{=R'}(R) &\dfn \forall n \forall m (R(n,m) \equi \bigvee_{\mathclap{(i,j) \in R}}(n = \underline{i} \land m = \underline{j}))\\
\psi_{=\eta'}(\hat{\eta}) & \dfn \forall n \forall k (\hat{\eta}(n,k) \equi \bigvee_{\mathclap{\substack{i < m\\p_j \in \eta'(i)}}} (n = \underline{i} \land k = \underline{j})
\end{align*}
asserts that $(W', R', \eta') = (W, R, \eta)$.
Hence in the reduction of the previous theorem, we can replace the subformula $\psi_{\mathrm{generated}}(W,R,\hat{\eta},\fraka)$ by
\begin{align*}
\psi_{\mathrm{generated}}(W,R,\hat{\eta},\fraka) \land \psi_{=W'}(W) \land \psi_{=R'}(R) \land \psi_{=\eta'}(\hat{\eta})
\end{align*}

which proves the theorem.
\end{proof}

\subsection{The countable cases}

Next, we proceed with the decision problems that are reducible to $\mathbf{\Delta^2_0}$, \ie, second-order arithmetic.
Here, we can only manage countable teams, since the interpretations of second-order variables are always countable objects.
Given such a team $T$, we can assume that it is of the form $\{ t_i\}_{i \in I}$ for some $I \subseteq \N$.

We encode $T$ as a pair $(I,P) \in \wp\N \times \wp\N^3$ (\emph{indices} and \emph{propositions}) such that $I \subseteq \N$ is as in the subscript of $\{t_i\}_{i \in I}$, and $P$ describes the traces in the sense that $P(i,j,k)$ is true iff $p_k \in t_i(j)$.

The temporal operators are implemented by "shifting" all entries in $P$ by an offset $d$.
This works similarly to the $\Delta^3_0$ case.
The set $I$ is unaffected by this.
The atomic formulas and Boolean connectives are again straightforward:
\begin{align*}
  \rho_{I,P}(p_k) &\dfn \forall i (I(i) \imp P(i,0,\underline{k}))\\
  \rho_{I,P}(\neg p_k) & \dfn \forall i (I(i) \imp \neg P(i,0,\underline{k}))\\
  \rho_{I,P}(\psi \land \theta) & \dfn \rho_{I,P}(\psi) \land \rho_{I,P}(\theta)\\
  \rho_{I,P}(\sneg \psi) & \dfn \neg \rho_{I,P}(\psi)
\intertext{For the splitting connective, we need to divide $I$ only:}
  \rho_{I,P}(\psi \lor \theta) & \dfn \exists I' \exists I'' \big( \forall i (I(i) \equi (I'(i) \lor I''(i))) \land \rho_{I',P}(\psi) \land \rho_{I'',P}(\theta) \big)
\intertext{For the temporal operators, we again define an auxiliary formula:}
  P \xrightarrow{d} Q & \dfn \forall i \forall j \forall k \big( Q(i,j,k) \equi P(i,j+d,k) \big)\\
  \rho_{I,P}(\X \psi) & \dfn \exists P' (P \xrightarrow{1} P' \land \rho_{I,P'}(\psi))\\
 \rho_{I,P}(\psi \U \theta) & \dfn \exists d \, \forall e \, \exists P' \, \exists P'' \big(P \xrightarrow{d} P' \land P \xrightarrow{e}P'' \\
 & \qquad \qquad \land \rho_{I,P'}(\theta) \land (e < d \imp \rho_{I,P''}(\psi))\big)
\end{align*}

\begin{lemma}\label{lem:delta2-equiv}
 Let $\phi$ be a formula and $T = \{t_i\}_{i \in I}$ a team, where $I \subseteq \N$.
 Let $P \dfn \{ (i,j,k) \mid i \in I \text{ and } p_k \in t_i(j) \}$.
Then $T \vDash \phi$ if and only if $(\N, I, P) \vDash \rho_{I,P}(\phi)$ is true.
\end{lemma}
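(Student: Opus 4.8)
The statement is the exact second-order analogue of \Cref{lem:delta3-equiv}, and its proof is claimed there to be a ``straightforward induction on the syntax of $\phi$.'' I would prove it the same way: by structural induction on $\phi$, showing for each team $T = \{t_i\}_{i \in I}$ and its encoding $(I,P)$ that $T \vDash \phi \LR (\N, I, P) \vDash \rho_{I,P}(\phi)$. The base and the Boolean cases are immediate from the definitions, and the interesting work is in making the correspondence between teams and encodings \emph{uniform in the index set}, so that the inductive hypothesis can be reapplied to the subteams and suffix-teams produced by the connectives.

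First I would fix the encoding invariant carefully: for a team $T = \{t_i\}_{i \in I}$ with $I \subseteq \N$ we set $P \dfn \{(i,j,k) \mid i \in I \text{ and } p_k \in t_i(j)\}$, and I would note that the pair $(I,P)$ determines $T$ via $p_k \in t_i(j) \LR P(i,j,k)$ for $i \in I$. The base cases $\phi = p_k$ and $\phi = \neg p_k$ then unwind directly: $T \vDash p_k$ means $p_k \in t_i(0)$ for every $i \in I$, which is exactly $\forall i (I(i) \imp P(i,0,\underline k))$ since $\underline k$ evaluates to $k$ in $\N$. The cases $\phi = \sneg \psi$ and $\phi = \psi \land \theta$ follow from the induction hypothesis together with the clauses $\rho_{I,P}(\sneg\psi) = \neg\rho_{I,P}(\psi)$ and $\rho_{I,P}(\psi\land\theta) = \rho_{I,P}(\psi)\land\rho_{I,P}(\theta)$, using that $\sneg$ is Boolean negation on teams and $\land$ is classical conjunction of team satisfaction.

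For the splitting disjunction $\phi = \psi \lor \theta$, I would observe that in the countable encoding the team is split by dividing only its \emph{index set}, not $P$: a split $T = S \cup U$ corresponds to choosing $I', I'' \subseteq \N$ with $I = I' \cup I''$ and letting $S = \{t_i\}_{i \in I'}$, $U = \{t_i\}_{i \in I''}$, whose encodings are $(I',P)$ and $(I'',P)$ with the \emph{same} $P$. Because the inductive hypothesis is stated for an arbitrary index set against a fixed proposition relation, it applies verbatim to $(I',P)$ and $(I'',P)$, matching the quantification over $I', I''$ in $\rho_{I,P}(\psi\lor\theta)$. For the temporal operators I would first verify the auxiliary clause: if $Q$ satisfies $P \xrightarrow{d} Q$, i.e. $Q(i,j,k) \LR P(i,j+d,k)$, then $(I,Q)$ is precisely the encoding of the suffix team $T^d = \{t_i^d\}_{i \in I}$, since shifting every trace by $d$ shifts the position index by $d$ while leaving $I$ fixed. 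Then $\rho_{I,P}(\X\psi)$ and $\rho_{I,P}(\psi\U\theta)$ translate the team-semantic clauses $T^1 \vDash \psi$ and ``$\exists d\,(T^d \vDash \theta \wedge \forall e < d\,T^e \vDash \psi)$'' directly, with each suffix handled by the induction hypothesis applied to $(I,Q)$ for the appropriate shifted $Q$.

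I would remark, as in the $\Delta^3_0$ case, that it suffices to treat $\X$ and $\U$ together with $\neg$ pushed to the atomic propositions, since $\G$, $\F$, $\R$ are definable from these. The only point demanding a little care, and the step I expect to be the mildest obstacle, is ensuring that every object quantified in the second-order translation genuinely exists as a countable set: the subteams $(I',P),(I'',P)$ and the shifted relations $Q$ are all definable from $(I,P)$ and $d$ and hence lie in the second-order universe, so no appeal to uncountable objects is needed -- this is exactly the feature that keeps the whole reduction inside $\Delta^2_0$ rather than $\Delta^3_0$.
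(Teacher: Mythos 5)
Your proposal is correct and follows exactly the route the paper takes: the paper's proof is simply ``easy induction on the syntax of $\phi$ analogous to \Cref{lem:delta3-equiv},'' and your structural induction -- with the key observations that splitting acts only on the index set $I$ while $P$ stays fixed, and that the shifted relation $Q$ with $P \xrightarrow{d} Q$ encodes the suffix team $T^d$ -- is precisely the argument being alluded to. The details you supply (including the remark that all quantified objects remain second-order) are accurate fillings-in of that one-line proof, not a different approach.
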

\begin{proof}
  Easy induction on the syntax of $\phi$ analogous to \Cref{lem:delta3-equiv}.
\end{proof}

\begin{theorem}\label{thm:ltlsat-to-delta2}
  Countable satisfiability, $\calC$-restricted satisfiability and $\calC$-restricted finite satisfiability of $\LTL(\sneg)$ are logspace-reducible to $\mathbf{\Delta^2_0}$ if $\calC$ is the class of ultimately periodic traces or the class of ultimately constant traces.
\end{theorem}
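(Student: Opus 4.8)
The plan is to build on \Cref{lem:delta2-equiv}, which already gives a faithful translation $\rho_{I,P}(\phi)$ of any $\LTL(\sneg)$-formula into $\Delta^2_0$ for a \emph{countable} team encoded by a pair $(I,P)$. The three cases in the statement then differ only in how the witnessing team is quantified and constrained inside the arithmetic formula. For plain \emph{countable satisfiability}, the reduction is immediate: $\phi$ is countably satisfiable if and only if $\N \vDash \exists I \, \exists P \, \rho_{I,P}(\phi)$, since every satisfying countable team can be reindexed by some $I \subseteq \N$, and conversely every pair $(I,P)$ witnessing the formula encodes such a team.

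For the $\calC$-restricted satisfiability cases, I would add a definability constraint on $P$ asserting that each encoded trace lies in $\calC$. When $\calC$ is the class of ultimately periodic traces, the property "$t_i$ is ultimately periodic" is expressible by a first-order arithmetic statement over $P$, namely $\exists c \, \exists d \, (d > 0 \land \forall j \, \forall k \, (j \geq c \imp (P(i,j,k) \equi P(i,j+d,k))))$; quantifying this over all $i \in I$ gives a formula $\psi_{\mathrm{ulp}}(I,P)$. For ultimately constant traces one simply fixes $d = 1$, yielding $\psi_{\mathrm{ulc}}(I,P)$. Because the $\calC$-restricted notion asks that $T \cap \calC$ satisfy $\phi$, and since the encoding lets us directly quantify a team all of whose members lie in $\calC$, the reduction becomes $\N \vDash \exists I \, \exists P \, (\psi_{\mathrm{ulp}}(I,P) \land \rho_{I,P}(\phi))$, and analogously with $\psi_{\mathrm{ulc}}$.

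For \emph{$\calC$-restricted finite satisfiability}, I would adapt the finitely-generated construction from the proof of \Cref{thm:ltlsat-to-delta3}, but carry it out entirely within second-order arithmetic. The components $W$, $R$, $\hat{\eta}$ of a finite structure $\calK$ are second-order objects, a path $\pi$ is a function $\N \to W$ (a second-order object), and the formulas $\psi_{\mathrm{frame}}$, $\psi_{\mathrm{path}}$, $\psi_{\mathrm{trace}}$ all use only first- and second-order quantification. The one genuine subtlety — and what I expect to be the main obstacle — is that the encoding $\psi_{\mathrm{generated}}$ used in the third-order setting quantifies over \emph{all} traces of $\calK$, which may be uncountable, and that is exactly what second-order arithmetic cannot express directly. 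Here I would exploit \Cref{prop:fin-count-is-ulp}: since we are intersecting with $\calC$ and demanding a finite structure, the relevant traces are precisely the ultimately periodic ones in $T(\calK)$, and these are countably many and each is finitely described by a lasso $\pi$ through $\calK$. So I would quantify an index set $I$ over such lasso-paths and define $P$ from $\hat{\eta}$ along each, asserting via $\psi_{\mathrm{ulp}}$ (or $\psi_{\mathrm{ulc}}$) that exactly the $\calC$-traces of $\calK$ are captured.

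Finally, I would verify the logspace bound exactly as before: $\rho_{I,P}(\phi)$ is produced by a straightforward syntax-directed recursion, the numerals $\underline{k}$ have length linear in $k \leq \size{\phi}$, and the additional constraint formulas $\psi_{\mathrm{ulp}}$, $\psi_{\mathrm{ulc}}$, $\psi_{\mathrm{frame}}$ and so on are of constant or linear size and clearly logspace-computable. Assembling the pieces, each of the three problems reduces to checking truth of a closed $\Delta^2_0$-formula in $\N$, which completes the reduction to $\mathbf{\Delta^2_0}$.
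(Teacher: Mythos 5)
Your proposal is correct and follows essentially the same route as the paper: countable satisfiability reduces via $\N \vDash \exists I\, \exists P\, \rho_{I,P}(\phi)$, the $\calC$-restricted cases add the constraint $\forall i\,(I(i) \imp \psi_{\mathrm{ulp}}(i,P))$ (with $d=1$ for the ultimately constant case), and the finite case quantifies $W$, $R$, $\hat{\eta}$ as second-order objects and forces $(I,P)$ to contain exactly the ultimately periodic traces of the structure. The only cosmetic difference is that the paper characterizes membership in $(I,P)$ by a universally quantified biconditional over single traces $S$ (each a second-order object satisfying $\psi_{\mathrm{trace}}$ and ultimate periodicity), rather than your indexing by lasso paths, which sidesteps having to argue that every ultimately periodic trace of a finite structure is indeed induced by a lasso.
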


Note that we do not consider countable finite satisfiability, as it coincides with finite satisfiability over ultimately periodic teams by \Cref{prop:fin-count-is-ulp}.

\begin{proof}
A formula $\phi$ is satisfied in a countable team if and only if $\N \vDash \exists I \exists P \, \rho_{I,P}(\phi)$.
This immediately follows from \Cref{lem:delta2-equiv}.
Moreover, the trace number $i$ represented in $P$ is ultimately periodic if
 $(\N,i,P) \vDash \psi_{\mathrm{ulp}}(i,P)$, where
 \begin{align*}
  \psi_{\ulp}(i,P) \dfn \; & \exists c \, \exists d (d > 0 \land \forall j \forall k \, (j \geq c \imp (P(i,j,k) \equi P(i,j+d,k)))))\text{.}
 \end{align*}
It follows that
 \begin{align*}
   \N \vDash \exists I \, \exists P\, \big(\rho_{I,P}(\phi) \land \forall i (I(i) \imp \psi_{\ulp}(i,P)) \big)
 \end{align*}
 iff $\phi$ is satisfied by some team of ultimately periodic traces.
 For ultimately constant traces we simply replace $d$ in $\psi_{\ulp}$ by $1$.

\smallskip

 We proceed with the finitely satisfiable cases.
 For this, we use the formulas $\psi_{\mathrm{generated}}$ and $\psi_{\mathrm{frame}}$ from the proof of \Cref{thm:ltlsat-to-delta3}, but now with a pair $(I,P)$ as argument instead of a higher-order relation.
 More precisely, the formula
 \begin{align*}
  \psi_{\mathrm{generated,ulp}}(W,R,\hat{\eta},I,P) \dfn \; & \forall S \Big(\Big(\psi_{\mathrm{trace}}(W,R,\hat{\eta},S)\\
  \land \, \exists c \exists d \big(d > 0 \land \forall j \forall k  &(j\geq c \to (S(j,k) \equi S(j+d,k)) )  \big) \Big) \\
  & \qquad \equi \Big(\exists i \, \forall j \, \forall k (S(j,k) \equi P(i,j,k)) \Big)\Big)
 \end{align*}
stores all ultimately periodic traces in $(I,P)$, with $\psi_{\mathrm{trace}}$ as in \Cref{thm:ltlsat-to-delta3}.
Then
\begin{align*}
\hspace{-2.2ex}\N \vDash \exists W\, \exists R\, \exists \hat{\eta}\, \exists I \, \exists P\,  \Big( \psi_{\mathrm{frame}}(W,R) \land \psi_{\mathrm{generated,ulp}}(W,R,\hat{\eta},I,P)  \land  \rho_{I,P}(\phi) \Big)
\end{align*}
iff $\phi$ is satisfied in the ultimately periodic traces of a finitely generated team.
Again, the proof for the ultimately constant case is similar.
\end{proof}

\begin{theorem}\label{thm:countable-upper-mc}
  Ultimately periodic and ultimately constant model checking of $\LTL(\sneg)$ are reducible to $\mathbf{\Delta^2_0}$.
\end{theorem}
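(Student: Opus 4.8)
The plan is to assemble this reduction from two reductions already established: the reduction of $\calC$-restricted finite satisfiability to $\mathbf{\Delta^2_0}$ in \Cref{thm:ltlsat-to-delta2}, which already encodes the ultimately periodic (resp.\ constant) traces of a finite structure as a second-order pair $(I,P)\in\wp\N\times\wp\N^3$, and the structure-pinning technique of \Cref{thm:ltlmc-to-delta3}, which forces a guessed frame to coincide with the given input. The conceptual point that makes second-order arithmetic suffice is a countability observation: over a finite structure $\calK$, the team $T_\ulp(\calK)$ (and a fortiori $T_\ulc(\calK)$) is countable, since every ultimately periodic trace is determined by a finite prefix together with a finite period over the finite label alphabet of $\calK$, so there are only countably many of them. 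Hence $T(\calK)\cap\calC$ admits an encoding as a pair $(I,P)$, and \Cref{lem:delta2-equiv} applies; it is precisely the failure of this for general teams that forced third-order arithmetic in the unrestricted model checking of \Cref{thm:ltlmc-to-delta3}.

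Given an instance $(\calK,\phi)$ with $\calK=(W',R',\eta',r')$, I would assume \wloss $W'=\{0,\ldots,m-1\}$ and $r'=0$, exactly as in \Cref{thm:ltlmc-to-delta3}. I then reuse the formula $\psi_{\mathrm{generated,ulp}}(W,R,\hat{\eta},I,P)$ from the proof of \Cref{thm:ltlsat-to-delta2}, which asserts that $(I,P)$ stores precisely the ultimately periodic traces of the structure encoded by $(W,R,\hat{\eta})$, and conjoin it with the pinning formulas $\psi_{=W'}(W)$, $\psi_{=R'}(R)$, $\psi_{=\eta'}(\hat{\eta})$ of \Cref{thm:ltlmc-to-delta3}, forcing the guessed frame to equal the input. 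The resulting sentence is
\begin{align*}
  \N \vDash \exists W\,\exists R\,\exists \hat{\eta}\,\exists I\,\exists P\,\big(&\psi_{\mathrm{frame}}(W,R)\land \psi_{=W'}(W)\land\psi_{=R'}(R)\land\psi_{=\eta'}(\hat{\eta})\\
  &\land\,\psi_{\mathrm{generated,ulp}}(W,R,\hat{\eta},I,P)\land\rho_{I,P}(\phi)\big)\text{.}
\end{align*}
Because the pinning constraints determine $W$, $R$, $\hat{\eta}$ uniquely, the quantifiers over them merely introduce the specific input frame, so $(I,P)$ is forced to encode exactly $T_\ulp(\calK)$; by \Cref{lem:delta2-equiv} the sentence then holds iff $T_\ulp(\calK)\vDash\phi$, that is, iff $(\calK,\phi)$ is a positive instance of ultimately periodic model checking. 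For ultimately constant model checking I would instead use the variant of $\psi_{\mathrm{generated,ulp}}$ in which the period $d$ is replaced by $1$, exactly as noted at the end of \Cref{thm:ltlsat-to-delta2}.

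Finally, the reduction is logspace-computable for the same reasons as in \Cref{thm:ltlmc-to-delta3}: the translation $\rho_{I,P}(\phi)$ is produced by a syntax-directed recursion using numerals $\underline{k}$ of length linear in $k\leq\size{\phi}$, and the pinning formulas are disjunctions over the explicitly listed edges of $R'$ and assignments of $\eta'$, all writable in logarithmic space. I do not expect any genuine obstacle beyond the countability argument that legitimizes the passage from third-order to second-order objects; everything else is a direct assembly of formulas already constructed and verified in the two cited theorems.
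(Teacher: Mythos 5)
Your proposal is correct and coincides with the paper's own proof: the paper reduces ultimately periodic model checking to $\mathbf{\Delta^2_0}$ via exactly the sentence you wrote, namely $\exists W\,\exists R\,\exists\hat{\eta}\,\exists I\,\exists P$ applied to the conjunction of $\psi_{\mathrm{frame}}$, $\psi_{\mathrm{generated,ulp}}$, the pinning formulas $\psi_{=W'},\psi_{=R'},\psi_{=\eta'}$, and $\rho_{I,P}(\phi)$, with the ultimately constant case handled by setting the period $d$ to $1$. Your additional remarks (the countability of $T_\ulp(\calK)$ justifying the second-order encoding, and logspace-computability) are correct elaborations of what the paper leaves implicit in its "completely analogous" phrasing.
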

\begin{proof}
 Completely analogous to \Cref{thm:ltlmc-to-delta3} and \ref{thm:ltlsat-to-delta2}.
 The formula is
 \begin{align*}
   \exists W\, \exists R\, \exists \hat{\eta}\, &\exists I \, \exists P\, \exists I \, \big(\psi_{\mathrm{frame}}(W,R) \land \psi_{\mathrm{generated,ulp}}(W,R,\hat{\eta},I,P) \\
   & \qquad \land \psi_{=W'}(W) \land \psi_{=R'}(R) \land \psi_{=\eta'}(\hat{\eta}) \land \rho_{I,P}(\phi)\big)\text{,}
 \end{align*}
 with the ultimately constant case again similar.
\end{proof}
 \section{From third-order arithmetic to model checking}\label{sec:mc}

In this section, we make the next step to prove \Cref{thm:uncountable-main-thm} and \ref{thm:countable-main-thm}: we state the lower bounds of the model checking problem.
For this, a given $\Delta^3_0$-formula $\phi$ is translated to an $\LTL(\sneg)$-formula $\psi$ and a structure $\calK$ such that $\N \vDash \phi \LR T(\calK) \vDash \psi$.
In the first subsection, we begin with some preprocessing on $\phi$.
Mainly, we reduce the maximal arity of quantified relations, which simplifies the subsequent steps significantly.

Hence, let $\phi \in \Delta^3_0$ be closed.
First, we bring $\phi$ into prenex form by a routine transformation.
So \wloss $\phi = Q_1 X_1\cdots Q_n X_n \theta$, where $n \geq 1$, $\theta$ is quantifier-free, $\{Q_1,\ldots,Q_n\}\subseteq \{\exists, \forall\}$, and $X_1,\ldots,X_n$ are pairwise distinct (first-order, second-order, or third-order) variables.

\subsection{A bounded-arity normal form of $\Delta^3_0$}

It is a well-known fact that there are first-order definable \emph{pairing functions}, \ie, bijections $\pi \colon\N \times \N \to \N$.
One example is the Cantor polynomial,
\begin{align*}
  \pi(n,m) \dfn \frac{1}{2}\big((n + m)^2 + 3n + m\big)\text{.}
\end{align*}
It can easily be generalized to arbitrary arities by
\begin{align*}
  \pi^{(\ell)}(n_1,\ldots,n_\ell) \dfn \pi(\pi^{(\ell-1)}(n_1,\ldots,n_{\ell-1}),n_\ell)\text{, }\pi^{(1)}(n) \dfn n\text{.}
\end{align*}
That this allows to reduce quantified relation symbols to unary ones is a standard result in second-order logic.
Here, we prove it for formulas with third-order atoms.

It is routine to simulate all quantified function symbols by relation symbols, so \wloss the only function symbols are $+, \times$ and the numerical constants.
Moreover, we can assume that the built-in symbols $+,\times,0,1,<$ do not occur inside higher-order atoms (otherwise we replace them by quantified copies).

We naturally extend the definition of $\pi$ to relations $A \subseteq \N^\ell$ by
$\pi(A) \dfn \{ \pi^{(\ell)}(\mathbf{n}) \mid \mathbf{n} \in A\}$.
Likewise, for higher-order relations $\fraka \subseteq \wp\N^{\ell_1} \times \cdots \times \wp\N^{\ell_k}$, let
\[
\pi(\fraka) \dfn \{ (\pi^{(\ell_1)}(A_1),\ldots,\pi^{(\ell_k)}(A_k)) \mid (A_1,\ldots,A_k) \in \fraka \}\text{.}
\]
Finally, for an interpretation $\calI$, we write $\pi(\calI)$ for the interpretation that agrees with $\calI$ on first-order variables and has $\pi(\calI)(X) = \pi(\calI(X))$ for each (second- or third-order) variable $X$.
In the interpretation $\pi(\calI)$, now all second-order variables are mapped to unary relations, \ie, sets, and all third-order variables are mapped to relations of type $(1,\ldots,1)$.
\begin{lemma}
  Let $f_\ell$ be a fixed $\ell$-ary function variable not occurring in $\phi$.
  Let $\phi'$ be obtained from $\phi$ by replacing each atomic formula $A(t_1,\ldots,t_\ell)$, where $A$ is a second-order variable, by $A(f_\ell(t_1,\ldots,t_\ell))$, and furthermore changing the arity of each second-order variable to one, and changing the type of each third-order variable to $(1,\ldots,1)$.
  Then $(\N,\calI) \vDash \phi \LR (\N,\pi(\calI)) \vDash \phi'$ for all interpretations $\calI$ in which $\calI(f_\ell) = \pi^{(\ell)}$.
\end{lemma}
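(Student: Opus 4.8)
The plan is to prove the biconditional $(\N,\calI) \vDash \phi \LR (\N,\pi(\calI)) \vDash \phi'$ by structural induction on the formula $\phi$, under the standing assumption that $\calI(f_\ell) = \pi^{(\ell)}$, the fixed $\ell$-ary pairing bijection. The key semantic fact driving the argument is that $\pi^{(\ell)}$ is a bijection, so that the "encoding" maps $A \mapsto \pi(A)$ on second-order objects and $\fraka \mapsto \pi(\fraka)$ on third-order objects are themselves bijections between the corresponding object domains; this is what lets quantifiers be handled cleanly. I would first record this observation as the technical backbone: for any $\ell$-ary relation $A$, membership $(n_1,\ldots,n_\ell) \in A$ is equivalent to $\pi^{(\ell)}(n_1,\ldots,n_\ell) \in \pi(A)$, and analogously $(A_1,\ldots,A_k) \in \fraka$ iff $(\pi^{(\ell_1)}(A_1),\ldots,\pi^{(\ell_k)}(A_k)) \in \pi(\fraka)$.

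With that in hand, the base cases are the atomic formulas. For a built-in atom over $+,\times,0,1,<$, nothing changes since $\pi(\calI)$ agrees with $\calI$ on first-order variables and the built-ins are untouched by assumption, so both sides evaluate identically. For a second-order atom $A(t_1,\ldots,t_\ell)$, the rewriting replaces it by $A(f_\ell(t_1,\ldots,t_\ell))$; here one evaluates the terms $t_1,\ldots,t_\ell$ to numbers $n_1,\ldots,n_\ell$ (their values agree under $\calI$ and $\pi(\calI)$ since only first-order data is involved), and then $(\N,\calI)$ satisfies the original atom iff $(n_1,\ldots,n_\ell) \in \calI(A)$, while $(\N,\pi(\calI))$ satisfies the rewritten atom iff $\pi^{(\ell)}(n_1,\ldots,n_\ell) \in \pi(\calI(A))$, using that $\calI(f_\ell) = \pi^{(\ell)}$. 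These two conditions coincide precisely by the backbone observation. The third-order atom $\fraka(A_1,\ldots,A_k)$ is handled the same way via the bijectivity of each $\pi^{(\ell_i)}$ on the $A_i$.

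The Boolean connectives are immediate from the induction hypothesis. The quantifier cases are where the bijection does the real work: for a first-order quantifier $\exists x$, I would use that $\calI$ and $\pi(\calI)$ range over the same domain $\N$ and that updating the first-order assignment commutes with the $\pi(\cdot)$ operation. For a second-order quantifier $\exists X$ (now rewritten so $X$ is unary), as $A$ ranges over all $\ell$-ary relations the image $\pi(A)$ ranges over all unary relations, bijectively; I would verify that $\pi(\calI[X \mapsto A]) = \pi(\calI)[X \mapsto \pi(A)]$ so the induction hypothesis applies to the updated interpretations. The third-order quantifier case is entirely analogous, using that $\fraka \mapsto \pi(\fraka)$ is a bijection onto the type-$(1,\ldots,1)$ objects. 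The main obstacle, and the one point deserving care rather than routine verification, is precisely this surjectivity of the encoding at the quantifier step: one must check that \emph{every} unary relation arises as $\pi(A)$ for some $\ell$-ary $A$ (immediate since $\pi^{(\ell)}$ is onto $\N$, so $A = (\pi^{(\ell)})^{-1}(\pi(A))$), and likewise that every type-$(1,\ldots,1)$ object is $\pi(\fraka)$ for some $\fraka$ of the original type — so that no models are lost or spuriously gained when passing to $\phi'$. Everything else is bookkeeping that follows the standard second-order argument, extended one level up.
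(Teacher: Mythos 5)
Your proposal is correct and follows essentially the same route as the paper's proof: induction on $\phi$, with the atomic cases handled by the membership equivalence $(n_1,\ldots,n_\ell) \in A \LR \pi^{(\ell)}(n_1,\ldots,n_\ell) \in \pi(A)$ (and its third-order analogue), and the quantifier cases handled by the bijectivity of $A \mapsto \pi(A)$ between $\wp(\N^\ell)$ and $\wp(\N)$. You are in fact somewhat more explicit than the paper, which compresses the quantifier step into "follows from the fact that $\pi$ is a bijection and the induction hypothesis," whereas you spell out the needed commutation $\pi(\calI[X \mapsto A]) = \pi(\calI)[X \mapsto \pi(A)]$ and the surjectivity of the encoding.
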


\begin{proof}
  By induction on $\phi$.
  The only interesting cases are the following:
\begin{itemize}
  \item If $\phi = A(t_1,\ldots,t_\ell)$ is atomic, with $A$ a second-order variable and $t_1,\ldots,t_\ell$ first-order terms, then the equivalence holds by definition, as
  \begin{align*}
    (\N,\calI) \vDash A(t_1,\ldots,t_\ell) &\LR (\calI(t_1),\ldots,\calI(t_\ell)) \in \calI(A)\\ &\LR \pi^{(\ell)}(\calI(t_1),\ldots,\calI(t_\ell)) \in \pi(\calI(A))\\
  & \LR \calI(f_\ell(t_1,\ldots,t_\ell)) \in \pi(\calI(A)) = \pi(\calI)(A)\\
  & \LR (\N,\pi(\calI)) \vDash A(f_\ell(t_1,\ldots,t_\ell))\text{.} \end{align*}
  \item If $\phi$ is atomic third-order, then again by definition
  $(\calI(A_1),\ldots,\calI(A_k)) \in \calI(\fraka) \LR (\pi(\calI(A_1)),\ldots,\pi(\calI(A_k))) \in \pi(\calI(\fraka)) = \pi(\calI)(\fraka)$.
  \item If $\phi = \exists A\, \psi$ and $A$ is $\ell$-ary second-order, then this follows from the fact that $\pi \colon \wp(\N^\ell) \to \wp(\N)$ as defined above is a bijection and by induction hypothesis.
\item If $\phi = \exists \fraka \,\psi$ and $\fraka$ is third-order, then the argument is similar.\qedhere
\end{itemize}
\end{proof}

Next, we aim at eliminating the newly introduced function $f_\ell$.
The graph of the pairing function $\pi^{(\ell)}$ is definable by a formula $\psi_\ell$ with $\ell+1$ arguments, viz.
\begin{align*}
  \psi_\ell(t_1,\ldots,t_\ell,t) &\dfn \exists x_2 \cdots \exists x_{\ell-1} (\psi_2(t_1,t_2,x_2) \land \cdots \land \psi_2(x_{\ell-1},t_\ell,t))
  \intertext{where}
  \psi_2(t_1,t_2,t) & \dfn (\underline{2} \times t) = ((t_1 + t_2)\times(t_1 + t_2)) + (\underline{3}\times t_1) + t_2
\end{align*}
defines the Cantor polynomial.
By this, the formula $A(f_\ell(t_1,\ldots,t_\ell))$ can equivalently be translated to $\exists x (\psi(t_1,\ldots,t_\ell,x) \land A(x))$.
As a consequence, we can assume that all second-order variables are unary.

Next, we reduce the arity of third-order variables.
As they now all have type $(1,\ldots,1)$ this is straightforward; a suitable pairing function $\Pi \colon (\wp\N)^k \to \wp \N$ is
\begin{align*}
  \Pi(A_1, \ldots, A_k) \dfn \bigcup_{i \in [k]} \{ k \cdot n + (i-1) \mid n \in A_i \}\text{,}
\end{align*}
where (the graph of) $\Pi$ is defined by
\begin{align*}
  \theta_k(A_1,\ldots,A_k,B) \dfn \forall m \big(B(m) \equi \exists n \bigvee_{\mathclap{i \in [k]}}(A_i(n) \land m = \underline{k} \cdot n + \underline{i-1}))\big)\text{.}
\end{align*}
On the level of formulas, we replace $\fraka(A_1,\ldots,A_k)$ with $\exists B (\theta_k(A_1,\ldots,A_k,B) \land \fraka(B))$, and make all third-order variables unary, analogously to the second-order case.

Finally, we can assume that the only built-in non-logical symbol is $<$, since $\leq$, $=$, $+$, $\times$ and all numerical constants are easily definable from it as quantified relations.
Observe that this re-introduces binary and ternary relation symbols, but ultimately, we have a constant bound of three on the arity.
In fact, we could either reduce the arity of \emph{all} relations to one and keep $+$ and $\times$ to express the pairing function, or eliminate $+$ and $\times$ but keep relations of arity $\ell > 1$.
But at least in second-order logic it is impossible to achieve both simultaneously, as the $\MSO(<)$-theory of $\N$ is decidable, known as Büchi's theorem \cite{buchi_mso}.
For third-order logic, to the best of the author's knowledge, this is open.

\begin{corollary}\label{cor:delta3-normal-form}
For every closed formula $\phi \in \Delta^3_0$ there is a logspace-computable closed formula $\psi \in \Delta^3_0$ such that $\N \vDash \phi \LR \N \vDash \psi$ and furthermore,
\begin{enumerate}
  \item $\psi$ is in prenex form, \ie, of the form $Q_1 X_1 \cdots Q_n X_n \, \theta$, where $\theta$ is quantifier-free, $\{Q_1,\ldots,Q_n\}\subseteq \{\exists, \forall\}$, and $X_1,\ldots,X_n$ are pairwise distinct (first-order, second-order, or third-order) variables.
  \item All atomic formulas in $\psi$ are of the form
    \begin{itemize}
      \item $\fraka(A)$, with $\fraka$ unary third-order, $A$ unary second-order, and $\{\fraka,A\} \subseteq \{X_1,\ldots,X_n\}$,
\item $A(x_1,\ldots,x_\ell)$, with $\ell \in \{1,2,3\}$, $A$ $\ell$-ary second-order, $x_1,\ldots,x_\ell$ first-order and $\{A,x_1,\ldots,x_\ell\}\subseteq \{X_1,\ldots,X_n\}$,
      \item or $x_1 < x_2$ with $\{ x_1,x_2 \} \subseteq \{X_1, \ldots, X_n \}$.
    \end{itemize}
\end{enumerate}
\end{corollary}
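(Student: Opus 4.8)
The plan is to chain together the rewriting steps developed above into a single truth-preserving, logspace-computable transformation, and then to check that the resulting formula has exactly the prescribed atomic shapes. First I would put $\phi$ into prenex form by the routine procedure of renaming bound variables apart and pulling all quantifiers to the front across the Boolean connectives; this is clearly logspace-computable and already delivers part~1, namely a formula $Q_1 X_1 \cdots Q_n X_n\,\theta$ with $\theta$ quantifier-free and the $X_i$ pairwise distinct.

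Next I would collapse the arity of the second-order variables. Applying the preceding lemma with the pairing function $\pi^{(\ell)}$ for each arity $\ell$ occurring in $\phi$ replaces every atom $A(t_1,\ldots,t_\ell)$ by $A(f_\ell(t_1,\ldots,t_\ell))$, turns every second-order variable unary, and retypes every third-order variable to $(1,\ldots,1)$; since $\pi^{(\ell)}\colon \wp(\N^\ell) \to \wp(\N)$ is a bijection, the lemma guarantees that truth in $\N$ is preserved under the intended reading of $f_\ell$. I would then eliminate each auxiliary function $f_\ell$ by rewriting $A(f_\ell(t_1,\ldots,t_\ell))$ as $\exists x\,(\psi_\ell(t_1,\ldots,t_\ell,x) \land A(x))$, using that $\psi_\ell$ defines the graph of $\pi^{(\ell)}$. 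After this step every second-order variable is unary.

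To bring the third-order variables down as well, I would use the pairing function $\Pi$ and its defining formula $\theta_k$, rewriting each atom $\fraka(A_1,\ldots,A_k)$ as $\exists B\,(\theta_k(A_1,\ldots,A_k,B) \land \fraka(B))$ and then making every third-order variable unary. At this point the only atoms left are $\fraka(A)$ with $\fraka,A$ unary, applications $A(x)$ of unary second-order variables, and the built-in arithmetic. Finally I would restrict the built-in vocabulary to $<$: the symbols $=$, $\leq$, and the numerical constants are first-order definable from $<$ without introducing any relation symbols, while $+$ and $\times$ are captured by existentially quantifying two ternary relation variables and conjoining a formula (over $<$, using the available second-order quantification to express the relevant induction) that pins them down as the graphs of addition and multiplication. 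Each of these rewritings is local and logspace-computable, so the overall composition is as well.

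The only delicate point --- and the reason the arity bound is three rather than one --- is the tension between the last two goals. The pairing functions that collapse the arities are themselves expressed with $+$ and $\times$, so eliminating $+$ and $\times$ must reintroduce relations: the quantified graphs of addition and multiplication are ternary, which is exactly why part~2 permits atoms $A(x_1,\ldots,x_\ell)$ with $\ell \in \{1,2,3\}$. As noted before the statement, one cannot have it both ways and reduce every arity to one while discarding $+$ and $\times$ entirely, since then we would be left with $\MSO(<)$, whose theory of $\N$ is decidable by Büchi's theorem, whereas second- (and third-) order arithmetic is not. Thus the constant bound of three is both achievable and, for purely unary relations, unavoidable in the second-order fragment. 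Collecting the steps, the final $\psi$ is prenex with precisely the atomic forms listed, and $\N \vDash \phi \LR \N \vDash \psi$ holds throughout, which is the claim.
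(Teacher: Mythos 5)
Your proposal follows the paper's proof essentially step for step: prenexification, collapsing second-order arities via the Cantor-pairing lemma, eliminating the auxiliary functions $f_\ell$ through the graph formulas $\psi_\ell$, collapsing third-order types to unary via $\Pi$ and $\theta_k$, and finally trading $+$ and $\times$ for quantified ternary relations over $<$, which is exactly the source of the arity bound three. The additional detail you supply (pinning down addition and multiplication by their recursion equations over the $<$-definable successor, and the Büchi-theorem remark explaining why the bound cannot be improved to one in the second-order fragment) matches the paper's own discussion, so the argument is correct and takes the same route.
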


\subsection{Representing numbers and relations in traces and teams}

Next, we draw the connection to $\LTL(\sneg)$.
The crucial idea is that ($\ell$-tuples of) numbers, as well as sets thereof, can be encoded on traces.
For this, we use propositions $\Sigma \dfn \{ 0, 1 \}$.
Since a trace consists of countably infinitely many positions in a well-ordered fashion, it is natural to represent a number $n$ by simply setting a bit on the $n$-th position.
These are the traces generated by the structure shown in \Cref{fig:gadget-number}, aside from the single trace that never reaches $1$.
In other words, the structure generates all traces of the form $\emptyset\{0\}^*\{1\}\{0,\vend\}^\omega$ and $\emptyset\{0\}^\omega$.
Note that our encoding does not count the initial state of a trace, because this is fixed by the structure (and in our case labeled with $\emptyset$).

For relations, we simply set more bits to 1, \ie, a trace models a subset of $\N$ by setting a bit on every corresponding position.
These traces are generated by the structure in \Cref{fig:gadget-relation}.

Formally, a trace $t$ now represents the number $n \in \N$ if $\restr{t^1}{\Sigma} = \{0\}^{n}\{1\}\{0\}^\omega$.
The special proposition $\vend$ marks that 1 has been seen earlier on the trace.
This will be necessary later in the reduction.
For sets $A \subseteq \N$, a trace now represents $A$ if it holds for all $n\geq 0$ that $1 \in t(n+1)$ iff $n \in A$.

To account for $\ell$-tuples of numbers, where $\ell \in \{2,3\}$, we introduce copies $\Sigma_k \dfn \{ 0_k, 1_k \}$ of $\Sigma$, where $k \in \{1,2,3\}$.
The propositions $0_1$ and $1_1$ are identified with $0$ and $1$.
A trace $t$ now represents the $\ell$-tuple $(n_1,\ldots,n_\ell)$ if $\restr{t}{\Sigma_k}$ represents $n_k$ for all $k \in [\ell]$.
A structure generating all $\ell$-tuples is obtained from taking the $\ell$-fold product of that in \Cref{fig:gadget-number} and labeling the propositions accordingly, see \Cref{fig:gadget-tuple} for the case of $\ell = 2$.

To encode more complex objects, we require teams.
For binary or ternary second-order relations $A \subseteq \N^\ell$, a team $T$ represents $A$ if, for all tuples $\mathbf{n} = (n_1,\ldots,n_\ell) \in \N^\ell$, we have $\mathbf{n} \in A$ iff $\mathbf{n}$ is represented by some trace $t \in T$.
Finally, a team $T$ represents a third-order relation $\fraka \subseteq \wp \N$ if, for all $A \subseteq \N$, we have $A \in \fraka$ iff $A$ is represented by some trace $t \in T$.

Note that non-unary third-order relations or those with non-unary members could not feasibly be represented as set of traces, so the lengthy preprocessing of the previous subsection is crucial.

Let us stress that "trace-like" objects comprise numbers $n \in \N$, tuples $\mathbf{n} \in \N^\ell$, and sets $A \subseteq \N$, while "team-like" objects comprise sets of tuples $A \subseteq \N^\ell$ and higher-order sets $\fraka \subseteq \wp \N$.
Also, note that all possible logical atoms as in \Cref{cor:delta3-normal-form} boil down to comparing trace-like objects to each other, as well as checking membership of a trace-like object in a team-like object.
This crucially relies on the established normal form.

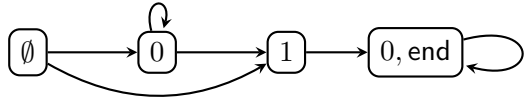
\begin{figure}\centering
  \begin{tikzpicture}[node distance=2mm,scale=.85,thick,->,>=stealth]
    \tikzset{world/.style={draw,rounded corners, rectangle,inner sep=1.5mm,black}}
    \tikzset{dummy/.style={inner sep=.6mm}}
    \tikzset{team/.style={draw,rounded corners,thick,inner sep=2mm}}

    \node[world] (root) at (0,0){$\emptyset$};
    \node[world] (end) at (6,0){$0,\vend$};

    \node[world] (0) at (2, 0) {$0$};
    \node[world] (1) at (4, 0) {$1$};

    \draw
    (root) edge (0)
    (root) edge[bend right] (1)
    (0) edge[loop above] (0)
    (0) edge (1)
    (1) edge (end)
    (end) edge[loop right] (end)
    ;
  \end{tikzpicture}
  \caption{Gadget where every number is represented by a trace $t$. The length of the first cycle between $\emptyset$ and $1$ determines the value, assuming $\restr{t(1)}{\{0,1\}} \in 0^*10^\omega$.}\label{fig:gadget-number}
\end{figure}

\begin{figure}\centering
  \begin{tikzpicture}[node distance=2mm,scale=.85,thick,->,>=stealth]
    \tikzset{world/.style={draw,rounded corners, rectangle,inner sep=1.5mm,black}}
    \tikzset{dummy/.style={inner sep=.6mm}}
    \tikzset{team/.style={draw,rounded corners,thick,inner sep=2mm}}

    \node[world] (root) at (0,0){$\emptyset$};

    \node[world] (0) at (3, 1) {$0$};
    \node[world] (1) at (3, -1) {$1$};

    \draw
    (root) edge (0)
    (root) edge (1)
    (0) edge[loop above] (0)
    (0) edge[<->] (1)
    (1) edge[loop below] (1)
    ;
  \end{tikzpicture}
  \caption{Gadget where every unary relation is represented by a trace. Visiting $1$ after $n$ steps means that the relation contains the number $n - 1$.}\label{fig:gadget-relation}
\end{figure}
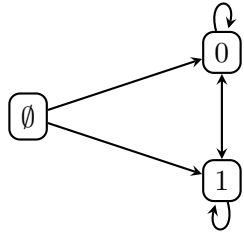

\begin{figure}\centering
  \begin{tikzpicture}[node distance=2mm,scale=.85,thick,->,>=stealth]
    \tikzset{world/.style={draw,rounded corners, rectangle,inner sep=1.5mm,black}}
    \tikzset{dummy/.style={inner sep=.6mm}}
    \tikzset{team/.style={draw,rounded corners,thick,inner sep=2mm}}

    \node[world] (root) at (0,-2){$\emptyset$};
    \node[world] (end) at (9, 0) {$0_1, 0_2, \vend$};

    \node[world] (00) at (0, 2) {$0_1,0_2$};
    \node[world] (01) at (3, 2) {$0_1,1_2$};
    \node[world] (11) at (3, 0) {$1_1,1_2$};
    \node[world] (10) at (3, -2) {$1_1,0_2$};

    \node[world] (00a) at (6, 2) {$0_1,0_2$};
    \node[world] (10a) at (9, 2) {$1_1,0_2$};

    \node[world] (00b) at (6, -2) {$0_1,0_2$};
    \node[world] (01b) at (9, -2) {$0_1,1_2$};

    \draw
    (root) edge (00)
    (00) edge[loop left] (00)
    (00) edge (01)
    (00) edge (10)
    (00) edge (11)
    (root) edge (01)
    (root) edge (10)
    (root) edge (11)
    (11) edge (end)
    (01) edge (00a)
    (01) edge[bend left] (10a)
    (00a) edge (10a)
    (00a) edge[loop below] (00a)
    (10a) edge (end)
    (10) edge (00b)
    (00b) edge[loop above] (00b)
    (10) edge[bend right] (01b)
    (00b) edge (01b)
    (01b) edge (end)
    (end) edge[loop right] (end)
    ;
  \end{tikzpicture}
  \caption{Gadget where every $2$-tuple is represented by a trace $t$ that is a "superposition" of two words of the form $\emptyset\{0\}^*\{1\}\{0\}^\omega$, that is, the projections $\restr{t}{\Sigma_1}$ and $\restr{t}{\Sigma_2}$ are such words.}\label{fig:gadget-tuple}
\end{figure}
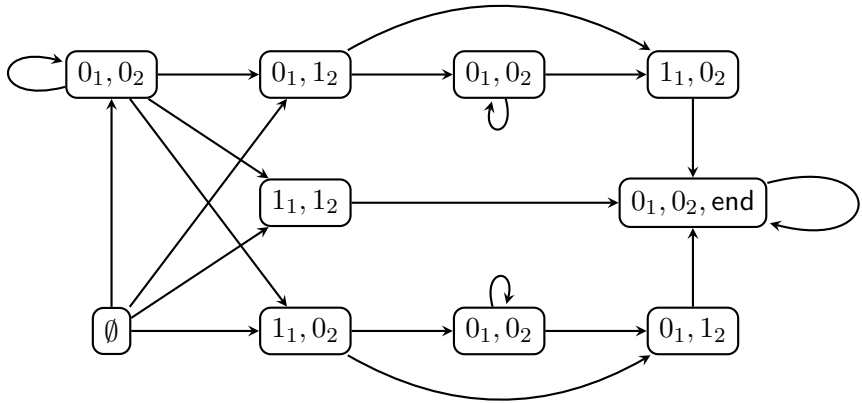

In the \Cref{fig:gadget-number,fig:gadget-relation,fig:gadget-tuple}, we saw structures that generate the corresponding traces for the encoding.
To potentially represent one number, tuple or relation for each variable $X_1,\ldots,X_n$, and to tell these apart, we meld together several instances $\calK_{X_i}$ of these structures, $i \in [n]$, depending on what kind of variable each $X_i$ is.

\begin{itemize}
\item If $X_i$ is first-order, and thus should represent a number, let $\calK_{X_i}$ be the structure shown in \Cref{fig:gadget-number}.
  We will quantify a single trace from it.
  \item If $X_i$ is either second-order or third-order, but unary, let $\calK_{X_i}$ be the structure shown in \Cref{fig:gadget-relation}.
  In the former case, we are interested in single traces, and in the latter case in sets of traces.
  \item Finally, if $X_i$ is second-order and of arity $\ell > 1$, then $\calK_{X_i}$ is constructed by taking the $\ell$-ary product of \Cref{fig:gadget-number} as demonstrated in \Cref{fig:gadget-tuple}.
\end{itemize}

Given a formula $\phi$ with variables $X_1,\ldots,X_n$, we now in general define the structure $\calK^{\phi}$\label{p:merged-structure} to be the disjoint union of the $\calK_{X_1},\ldots,\calK_{X_n}$, except that the roots of the $\calK_{X_i}$ are identified in $\calK$; call this world $r$.
Moreover, all non-root worlds from the respective $\calK_{X_i}$ are marked with the proposition $X_i$ (not shown in the figures), so as to cleanly separate the values represented for each $X_i$ in $\calK_{X_i}$.
With the proposition $X_i$ (or the formula $\F X_i$, if we are still in the root) we can determine whether a trace runs through the respective $\calK_{X_i}$.

A team $T \subseteq T(\calK^{\phi})$ now induces an interpretation $\calI_T$ of the variables $X_i$ as described above, depending on which traces of the respective $\calK_{X_i}$ are in $T$.
Recall that the notation $T_{\phi}$ refers to the subteam $\{ t \in T \mid \{t\} \vDash \phi\}$ of $T$.
\begin{itemize}
  \item If $X_i$ is first-order, $T_{\F X_i} = \{ t \}$ for some trace $t$, and $t \vDash \F \vend$, then $\calI_T(X_i)$ is the unique number $n$ such that $\restr{t^1}{\Sigma} = \{0\}^n\{1\}\{0\}^\omega$.
  \item If $X_i$ is second-order and unary, and $T_{\F X_i} = \{t \}$ for some trace $t$, then $\calI_T(X_i) \dfn \{ n \in \N \mid 1 \in t(n+1) \}$.
  \item If $X_i$ is second-order and binary or ternary, then $\calI_T(X_i) \dfn \{ (n_1,\ldots,n_\ell) \in \N^\ell \mid \exists t \in T_{\F X_i} : \forall j \in [\ell] : \restr{t^1}{\Sigma_j} = \{0\}^{n_j}\{1\}\{0\}^\omega \}$.
  \item If $X_i$ is third-order, then $\calI_T(X_i) \dfn \{ A \subseteq \N \mid \exists t \in T_{\F X_i} : t\text{ represents }A\}$.
\end{itemize}
Next, we explain how the team is manipulated in order to implement arithmetical quantifiers.
Suppose we start with the team $T = T(\calK^{\phi})$ and the outermost quantifier is $\exists X_1$.
The ideas is then to non-deterministically shrink the subteam $T_{X_1}$, where $X_1$ is one of the above cases, to a suitable subteam $U \subseteq T_{X_1}$ representing a value for $X_1$.
If $X_1$ is for example first-order, then $U$ should be a singleton.
Then we proceed with the team $(T \setminus T_{X_1}) \cup U$, and (existentially or universally) quantify an interpretation for $X_2$, and so on.

To confine the manipulation to a certain subteam $T_{X_i}$, we use "subteam quantifiers" $\qee{\phi}$ and $\qe{\phi}$ similar to $\qee{}$ and $\qe{}$ defined on p.~\pageref{p:quantifiers}.
Let $\phi \in \LTL$ and $\psi \in \LTL(\sneg)$, and let
\begin{align*}
  \qee{\phi} \psi &\dfn \neg\neg\phi \lor \psi\\
  \qe{\phi} \psi & \dfn \qee{\phi}((\qe{}\phi) \land \sneg \qee{\phi}((\qe{}\phi) \land \sneg \psi))\text{.}
\end{align*}
Then $\qee{\phi}$ intuitively says that we can shrink the subteam $T_{\phi}$ without touching the subteam $T_{\neg\phi}$.
Likewise, $\qe{\phi}$ says that we can shrink $T_{\phi}$ to a singleton.
The next lemma states this formally.
As before, the duals $\qaa{\phi}\psi \dfn \sneg \qee{\phi}\sneg\psi$ and $\qa{\phi}\psi \dfn \sneg \qe{\phi} \sneg \psi$ work as expected.

\begin{lemma}
A team $T$ satisfies $\qee{\phi} \psi$ if and only if there is a subteam $S \subseteq T_{\phi}$ such that $(T \setminus T_{\phi}) \cup S \vDash \psi$.
A team $T$ satisfies $\qe{\phi}\psi$ if and only if there is a trace $t \in T_{\phi}$ such that $(T \setminus T_{\phi}) \cup \{t\} \vDash \psi$.
\end{lemma}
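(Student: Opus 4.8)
The statement has two halves, one for $\qee{\phi}$ and one for $\qe{\phi}$, and I would prove them in that order, using the first to bootstrap the second. Throughout, I would rely on the semantics of $\lor$ (team splitting), $\neg$, and $\sneg$, and on the already-established facts that $\neg\neg\phi$ holds in a team exactly when every singleton satisfies $\phi$, that $\qe{}\phi$ holds exactly when some singleton satisfies $\phi$, and that $T_\phi$ and $T_{\neg\phi}$ partition $T$ disjointly.

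\textbf{The $\qee{\phi}$ part.}
Unfolding the definition, $T\vDash\qee{\phi}\psi$ means $T\vDash\neg\neg\phi\lor\psi$, \ie\ there is a split $T=S_1\cup S_2$ with $S_1\vDash\neg\neg\phi$ and $S_2\vDash\psi$. The plan is to show this is equivalent to the existence of a subteam $S\subseteq T_{\phi}$ with $(T\setminus T_{\phi})\cup S\vDash\psi$. For the forward direction: since $S_1\vDash\neg\neg\phi$, every trace of $S_1$ satisfies $\phi$, so $S_1\subseteq T_{\phi}$; hence $T\setminus T_{\phi}\subseteq S_2$, and if I set $S\dfn S_2\cap T_{\phi}$ then $S_2=(T\setminus T_{\phi})\cup S$ and $S\subseteq T_{\phi}$, giving the desired witness. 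Conversely, given $S\subseteq T_{\phi}$ with $(T\setminus T_{\phi})\cup S\vDash\psi$, I would split $T$ as $S_1\dfn T_{\phi}$ and $S_2\dfn(T\setminus T_{\phi})\cup S$; then $S_1\vDash\neg\neg\phi$ (all its traces satisfy $\phi$), $S_2\vDash\psi$ by hypothesis, and $S_1\cup S_2=T$, so $T\vDash\neg\neg\phi\lor\psi$. The one point to check carefully is that this is a legitimate split, \ie\ that $S_1\cup S_2=T$; this holds because $S_1=T_\phi\supseteq S$ and $S_2\supseteq T\setminus T_\phi$ together cover $T$.

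\textbf{The $\qe{\phi}$ part.}
Here I would unfold $\qe{\phi}\psi=\qee{\phi}\bigl((\qe{}\phi)\land\sneg\qee{\phi}((\qe{}\phi)\land\sneg\psi)\bigr)$ and apply the first part twice. By the $\qee{\phi}$-characterization, $T\vDash\qe{\phi}\psi$ says there is $S\subseteq T_{\phi}$ with $T'\dfn(T\setminus T_{\phi})\cup S$ satisfying the inner formula; note that $T'_{\phi}=S$ since the traces outside $T_\phi$ do not satisfy $\phi$. The inner conjunct $\qe{}\phi$ forces $S$ to contain at least one trace satisfying $\phi$, hence $S\neq\emptyset$. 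The harder conjunct is $\sneg\qee{\phi}((\qe{}\phi)\land\sneg\psi)$ evaluated in $T'$: applying the first part again, this says there is \emph{no} subteam $S'\subseteq T'_{\phi}=S$ with $(T'\setminus S)\cup S'=(T\setminus T_\phi)\cup S'$ satisfying $(\qe{}\phi)\land\sneg\psi$. The crux is to translate "no proper replacement of $S$ can still contain a $\phi$-trace yet fail $\psi$" into "$S$ is a singleton and the whole team already satisfies $\psi$". I expect the main obstacle to be exactly this step: I would argue that if $S$ had two distinct traces $t_1,t_2$, or if $(T\setminus T_\phi)\cup\{t\}\nvDash\psi$ for the intended singleton, one could choose a nonempty $S'\subseteq S$ witnessing $(\qe{}\phi)\land\sneg\psi$, contradicting the $\sneg$; conversely, minimality forces $S=\{t\}$ for a single $\phi$-trace $t$ with $(T\setminus T_{\phi})\cup\{t\}\vDash\psi$, which is precisely the claim. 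Care is needed because $\sneg\psi$ must be read with full Boolean force and $\qe{}\phi$ only guarantees a $\phi$-trace \emph{somewhere} in $S'$, so I would isolate the offending trace explicitly rather than argue abstractly.
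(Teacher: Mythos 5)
Your proof of the $\qee{\phi}$ half is correct and complete: the back-and-forth translation between splits $T = S_1 \cup S_2$ with $S_1 \vDash \neg\neg\phi$ and subteams $S \subseteq T_{\phi}$ with $(T\setminus T_{\phi})\cup S \vDash \psi$ works exactly as you describe. (For what it is worth, the paper itself gives no argument here at all; it defers to Proposition 5.3 of a companion paper on model team logic, so a self-contained proof is genuinely useful.)

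The $\qe{\phi}$ half contains a step that fails. Your unfolding is right up to and including the observations $T'_{\phi} = S$ and $T'\setminus T'_{\phi} = T \setminus T_{\phi}$, and the inner conjunct $\sneg\qee{\phi}((\qe{}\phi)\land\sneg\psi)$ indeed says: \emph{every} non-empty $S' \subseteq S$ satisfies $(T\setminus T_{\phi})\cup S' \vDash \psi$. But you then try to conclude that this forces $S$ to be a singleton, arguing that two distinct traces $t_1,t_2 \in S$ would let you pick a non-empty $S' \subseteq S$ witnessing $(\qe{}\phi)\land\sneg\psi$. That is false: take $\psi$ to be any formula satisfied by all the teams involved (\eg $\top$); then every non-empty $S' \subseteq S$ gives $(T\setminus T_{\phi})\cup S' \vDash \psi$, so no witness for $\sneg\psi$ exists even when $S$ is large. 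The appeal to "minimality" has no basis either --- the outer $\qee{\phi}$ is a plain existential over subteams, with nothing forcing a minimal choice. Fortunately the lemma does not need $S$ to be a singleton, and the correct kernel is already in your text: since $T'\vDash\qe{}\phi$ gives $S \neq \emptyset$, pick any $t \in S$; then $S' = \{t\}$ is a non-empty subteam of $S$, so the inner condition yields $(T\setminus T_{\phi})\cup\{t\}\vDash\psi$ with $t \in T_{\phi}$, which is the required witness. Conversely, given such a $t$, take $S = \{t\}$: the only candidates are $S' = \emptyset$, where $(T\setminus T_{\phi})\nvDash\qe{}\phi$ because no trace outside $T_{\phi}$ satisfies $\phi$, and $S' = \{t\}$, where $\sneg\psi$ fails by hypothesis; hence the inner $\sneg$-conjunct holds and $T \vDash \qe{\phi}\psi$. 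With that repair --- replacing the false "singleton" claim by "pass to any singleton inside $S$" --- your argument is complete.
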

\begin{proof}
  This was proved for so-called \emph{model team logic} in \cite[Proposition 5.3]{mtl_lueck}.
  For $\LTL(\sneg)$, the proof is identical.
\end{proof}

\subsection{Translating arithmetic formulas}

In this section, we translate $\Delta^3_0$-formulas into team logic.
First, let us repeat the definitions of the various non-classical connectives from team semantics (pp.~\pageref{p:team-connectives}--\pageref{p:hook}), as we need those in the remainder of the paper.
\begin{alignat*}{3}
  \intertext{Boolean connectives, including those definable from $\land$ and $\sneg$:}
  &T \vDash \phi \land \psi  && \; \LR \; T \vDash \phi \text{ and }T \vDash \psi\\
  &T \vDash \phi \ovee \psi && \; \LR \; T \vDash \phi \text{ or }T \vDash \psi\\
  &T \vDash \phi \timp \psi && \; \LR \;  T \vDash \phi \text{ implies } T \vDash \psi\\
  &T \vDash \phi \tequiv \psi &&\; \LR \;  T \vDash \phi \text{ iff }T \vDash \psi
  \intertext{Subteam connectives, including those definable from $\land,\lor,\sneg,\top,\bot$:}
  &T \vDash \neg \phi && \; \LR \; \forall t \in T : \{t\}\nvDash \phi\\
  &T \vDash \phi \lor \psi && \; \LR \; T = S \cup U \text{ such that }S \vDash \phi, U \vDash \psi\\
  &T \vDash \phi \hook \psi  && \; \LR \; T_\phi \vDash \psi\text{, where }T_\phi \dfn \{ t \in T \mid \{t\} \vDash \phi \}\\
  &T \vDash \qe{} \phi &&\; \LR \; \exists t \in T : \{t\} \vDash \phi\\
  &T \vDash \qa{} \phi && \; \LR \;  \forall t \in T : \{t\} \vDash \phi\\
  &T \vDash \qee{} \phi && \; \LR \;  \exists T' \subseteq T : T' \vDash \phi\\
  &T \vDash \qaa{} \phi && \; \LR \;  \forall T' \subseteq T : T' \vDash \phi\\
  &T \vDash \qe{\phi} \psi && \; \LR \;  \exists t \in T_\phi : (T \setminus T_{\phi}) \cup \{t\} \vDash \psi\\
  &T \vDash \qa{\phi} \psi && \; \LR \;  \forall t \in T_\phi : (T \setminus T_{\phi}) \cup \{t\} \vDash \psi\\
  &T \vDash \qee{\phi} \psi &&\; \LR \;  \exists T' \subseteq T_\phi : (T \setminus T_\phi) \cup T' \vDash \psi\\
  &T \vDash \qaa{\phi} \psi &&\; \LR \;  \forall T' \subseteq T_\phi : (T \setminus T_\phi) \cup T' \vDash \psi
\end{alignat*}

We are now in the position to state the inductive translation $\rho(\phi) \in \LTL(\sneg)$, where $\phi \in \Delta^3_0$.
Formally, $\rho$ should satisfy that $T \vDash \rho(\phi)$ iff $\calI_T \vDash \phi$.

We begin with the atomic formula $x_1 < x_2$, for which we have to compare two represented numbers $n_1$ and $n_2$, respectively.

It is straightforward to see that the formula
\begin{align*}
 \rho(x_1 < x_2) \dfn  \F ((x_1 \hook \vend) \land (x_2 \hook 1))
\end{align*}
implements $x_1 < x_2$, as it states that the digit $1$ on the trace of $n_2$ appears at some position where it already appeared beforehand (indicated by $\vend$) on the trace of $n_1$.
This clearly hinges on the fact that we have synchronous semantics.

Let us proceed with the atomic formula $A(x_1,\ldots,x_\ell)$, where $A$ is second-order.
Assume that $T_{x_i}$ represents a number $n_i$ for each $i \in [\ell]$, and that $T_A$ represents an $\ell$-ary relation, $\ell \in \{1,2,3\}$.
Then $(n_1,\ldots,n_\ell) \in \calI(A)$ iff $T$ satisfies
\begin{align*}
 \rho(A(x_1,\ldots,x_\ell)) \dfn \qe{\F A} \bigwedge_{j\in [\ell]} \F ((x_j \hook 1) \land (A \hook 1_j))\text{.}
\end{align*}
Intuitively, with $\qe{\F A}$ we select a trace $t$ in $\calK_A$ (in case $\ell = 1$ this has no effect, as unary relations are already encoded by single traces), and in the rest of the formula then check for each $j \in [\ell]$ that the 1 on the trace of $x_j$ appears on $\restr{t}{\Sigma_j}$ at the same position.

Finally, we consider the higher-order atom $\fraka(A)$.
Here, $A$ must be a unary relation symbol, so suppose $T_A$ is a single trace that represents a set of numbers.
Then $\fraka(A)$ is translated to
\begin{align*}
    \rho(\fraka(A)) \dfn \qe{\F \fraka} \G ((\fraka \hook 1) \tequiv (A \hook 1))\text{,}
\end{align*}
which selects a witness trace from $T_\fraka$ (representing a member of $\fraka$) and compares it to the single trace in $T_A$.
To compare two unary relations, we synchronously traverse the traces with $\G$ and check that the ones' positions coincide.

\medskip

After the atomic formulas, we now proceed with the remaining connectives.
The Boolean operators are straightforward:
$\rho(\psi_1 \land \psi_2) \dfn \rho(\psi_1) \land \rho(\psi_2)$ and $\rho(\neg\psi) \dfn \sneg \rho(\psi)$.

\smallskip

Finally, the quantifiers of arithmetic will be simulated by $\qe{}$ and $\qee{}$, where we have additional subformulas that ensure that the resulting subteam still represents a number or relation, respectively.
We can assume that all quantifiers are existential since $\forall X \psi \equiv \neg \exists X \neg \psi$.

For first-order $x$, let
\begin{align*}
  \rho(\exists x \, \psi) \dfn \qe{\F x} ((\F x \hook \F \vend) \land \rho(\psi))
\end{align*}
where $\F x \hook \F \vend$ excludes the trace that gets stuck in a loop of zeroes (cf.\ Figure~\ref{fig:gadget-number}).
For unary second-order $X$, we simply map
\begin{align*}
  \rho(\exists X \, \psi) \dfn \qe{\F X} \rho(\psi)
\end{align*}
since any trace in $T_{\F x}$ represents a unary relation and vice versa (cf.\ \Cref{fig:gadget-relation}).
For arity $\ell > 1$, recall that a relation is represented as a set of $\ell$-tuples, each sitting on its own trace (cf.\ \Cref{fig:gadget-tuple}).
Consequently,
\begin{align*}
  \rho(\exists X \, \psi) \dfn \qee{\F X} \big( (\F X \hook \neg\neg{}\F \vend) \land \rho(\psi) \big)\text{.}
\end{align*}
Here, with $\neg\neg{}\F \vend$ we say that no trace indefinitely avoids $\vend$, in other words, all traces represent some tuple.

Third-order relations are again easy,
\begin{align*}
  \rho(\exists \fraka \, \psi) \dfn \qee{\F \fraka} \rho(\psi)\text{,}
\end{align*}
since again any subteam of $T_\fraka$ represents a valid subset of $\wp \N$ and vice versa (cf.\ \Cref{fig:gadget-relation}).

In the next lemma, let $\calK^{\phi}$ again be the full structure defined as on p.~\pageref{p:merged-structure}.
\begin{lemma}\label{lem:reduction-base}
 $\N \vDash \phi$ iff $T(\calK^{\phi}) \vDash \rho(\phi)$.
\end{lemma}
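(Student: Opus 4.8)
The plan is to prove a stronger statement by induction on the structure of the formula, since $\rho$ is defined inductively and its quantifier clauses manipulate subteams. Because $\phi$ is closed and in the prenex normal form of \Cref{cor:delta3-normal-form}, I would split the argument into two phases matching the shape $Q_1 X_1 \cdots Q_n X_n\,\theta$: a \emph{quantifier phase} that peels off $Q_1,\ldots,Q_n$, and a \emph{matrix phase} that handles the quantifier-free $\theta$. Throughout I track a \emph{representation invariant} on the team $T \subseteq T(\calK^\phi)$: every already-quantified variable is \emph{instantiated}, meaning its subteam $T_{\F X_j}$ represents a fixed value of the appropriate type and hence determines $\calI_T(X_j)$, whereas every still-bound variable is \emph{full}, meaning $T_{\F X_j}$ is the entire gadget $\calK_{X_j}$. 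The key structural observation that makes this work is that the marking propositions $X_1,\ldots,X_n$ cleanly separate the gadgets, so that every subteam operation used by $\rho$ — the subteam quantifiers $\qe{\F X_j}$, $\qee{\F X_j}$ and the conditioning $\hook$ — touches or inspects only the subteam of its intended gadget and leaves the representations of all other variables untouched.

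For the matrix phase I would prove, by induction on the quantifier-free $\theta$, that whenever every variable is instantiated according to some $\calI$ one has $T \vDash \rho(\theta)$ iff $(\N,\calI)\vDash\theta$. The Boolean steps are immediate: since $\land$ is non-splitting and $\sneg$ is genuine Boolean negation, $\rho(\psi_1\land\psi_2)$ and $\rho(\neg\psi)$ reduce to the induction hypothesis on the \emph{same} team $T$. The three atomic cases are verified directly against the encoding. For $x_1 < x_2$ I would check that, under synchronous semantics, a position $k$ at which the $x_2$-trace carries its unique $1$ while the $x_1$-trace already carries $\vend$ exists exactly when $n_2 > n_1$, matching the representation of a number $n$ by $\emptyset\{0\}^{n}\{1\}\{0,\vend\}^\omega$. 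For $A(x_1,\ldots,x_\ell)$ the quantifier $\qe{\F A}$ selects a single trace of $\calK_A$, and the conjunction $\bigwedge_j \F((x_j\hook 1)\land (A\hook 1_j))$ forces its $j$-th component to equal $n_j$ for each $j$, so a witness exists iff $(n_1,\ldots,n_\ell)\in\calI(A)$. For $\fraka(A)$ the quantifier $\qe{\F\fraka}$ selects a trace of $\calK_\fraka$ and $\G((\fraka\hook 1)\tequiv(A\hook 1))$ checks, position by position, that it encodes the same subset of $\N$ as the single $A$-trace, so a witness exists iff $\calI(A)\in\calI(\fraka)$.

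For the quantifier phase I would argue by downward induction on $i$ (with base case $i=n+1$ supplied by the matrix phase) that, for teams satisfying the invariant with $X_1,\ldots,X_{i-1}$ instantiated to $\calI$ and $X_i,\ldots,X_n$ full, one has $T \vDash \rho(Q_i X_i\cdots\theta)$ iff $(\N,\calI)\vDash Q_iX_i\cdots\theta$. Here the semantics of the subteam quantifiers (the lemma preceding this one) does the work: selecting a subteam of the \emph{full} gadget $T_{\F X_i}$ ranges over precisely the admissible values of $X_i$. For first-order $x$ the guard $\F x \hook \F\vend$ discards the single all-zeros trace, so $\qe{\F x}$ ranges over all $n\in\N$; for unary second- or third-order variables every selected trace resp.\ subteam already encodes a legal value; for arity $\ell>1$ the guard $\F X \hook \neg\neg\F\vend$ keeps exactly those subteams all of whose traces encode genuine tuples. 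In each case the selection instantiates $X_i$ while leaving the other gadgets full or instantiated as before, so the induction hypothesis applies to $\rho(\psi)$ and the arithmetic quantifier is matched. Instantiating the claim at $i=1$ with $T = T(\calK^\phi)$, in which every gadget is full, yields the lemma.

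I expect the main obstacle to be the bookkeeping of the invariant together with the \emph{locality} claim — that each subteam manipulation affects only its own gadget, so that instantiating one variable perturbs neither the full gadgets of the not-yet-quantified variables nor the fixed values of the already-quantified ones. Making this precise relies on the disjointness of the gadgets and on the fact that $X_j$ holds on exactly the non-root worlds of $\calK_{X_j}$; once this is in hand, the atomic and quantifier computations are routine, but must be carried out with care for the synchronous reading of $\F$ and $\G$.
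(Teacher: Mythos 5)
Your proposal is correct and takes essentially the same approach as the paper: the paper's proof is given only as a ``straightforward induction,'' and the two-phase induction you describe --- peeling off the prenex quantifiers while maintaining the invariant that already-quantified gadgets are instantiated and still-bound ones are full, then verifying the quantifier-free matrix atom by atom against the encoding --- is exactly the induction the paper intends, matching its preceding discussion of shrinking one gadget's subteam per quantifier via the subteam-quantifier lemma. The only difference is one of explicitness: the paper leaves the representation invariant and the locality of the subteam operations implicit, whereas you correctly identify them as the bookkeeping that makes the induction go through.
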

\begin{proof}
 By straightforward induction.
\end{proof}

Note that all constructed formulas are expressible in $\LTL_1(\sneg,\F)$, \ie, with only the temporal operator $\F$ and without nesting of temporal operators.
In particular, observe that constructs such as $\qee{\F X} \psi$ and $\F X \hook \psi$ do not add to the nesting depth of $\psi$.

Since all the above formulas only involve standard recursion, they are easily shown logspace-computable.
Likewise, the structure $\calK^{\phi}$ is logspace-computable as it only consists of $n$ instances of constant substructures, where $n$ is the quantifier rank of $\phi$, with added propositions.
This concludes the main theorem of this section.

\begin{theorem}\label{thm:mc-hardness-delta3}
  $\mathbf{\Delta^3_0}$ is reducible to model checking of $\LTL_1(\sneg,\F)$.
\end{theorem}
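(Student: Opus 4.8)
The plan is to package the constructions of this section into one logspace map taking an instance of $\mathbf{\Delta^3_0}$ to an instance of model checking for $\LTL_1(\sneg,\F)$. So I would start with an arbitrary closed $\phi \in \Delta^3_0$; by definition $\phi$ is a yes-instance of $\mathbf{\Delta^3_0}$ precisely when $\N \vDash \phi$. The reduction then proceeds in two stages. First I would apply \Cref{cor:delta3-normal-form} to obtain a logspace-computable, truth-equivalent $\psi \in \Delta^3_0$ in the bounded-arity prenex normal form, so that $\N \vDash \phi \LR \N \vDash \psi$ and every atom of $\psi$ has one of the three permitted shapes. This step is exactly what makes the encoding feasible, since only unary third-order variables and second-order variables of arity at most three can be faithfully represented by traces and teams.

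In the second stage I would compute the merged structure $\calK^{\psi}$ and the temporal formula $\rho(\psi)$ exactly as defined above, and appeal to \Cref{lem:reduction-base} to get $\N \vDash \psi \LR T(\calK^{\psi}) \vDash \rho(\psi)$. Composing the two equivalences yields $\N \vDash \phi \LR T(\calK^{\psi}) \vDash \rho(\psi)$, which says that $\phi$ is a yes-instance of $\mathbf{\Delta^3_0}$ iff $(\calK^{\psi},\rho(\psi))$ is a yes-instance of the model checking problem. The map $\phi \mapsto (\calK^{\psi},\rho(\psi))$ is therefore the desired reduction, provided the two side conditions hold.

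For the fragment restriction, I would check clause by clause that $\rho$ introduces only the operator $\F$ and never places one $\F$ inside the scope of another. The delicate point is that the derived connective $\hook$, the conditioned subteam quantifiers $\qee{\F X}$ and $\qe{\F X}$, and their duals must unfold into Boolean and splitting combinations of depth-one $\F$-formulas without silently nesting an $\F$; granting this, $\rho(\psi) \in \LTL_1(\sneg,\F)$. For logspace computability, I would observe that $\rho$ is a plain structural recursion on $\psi$, and that $\calK^{\psi}$ is a disjoint union of a bounded number of fixed gadget structures per quantified variable, with the roots identified and each copy tagged by its proposition $X_i$; hence both outputs are writable in logarithmic space. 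Since \Cref{cor:delta3-normal-form} is itself a logspace map and logspace reductions compose, the whole map is logspace, which is exactly the assertion of the theorem.

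The main obstacle is not any individual step — each is either a routine appeal to a previous result or a mechanical recursion — but the bookkeeping required to be fully certain that the temporal depth stays at one once all the abbreviated connectives are expanded into the primitive syntax of $\LTL(\sneg)$. A single careless definition among $\hook$, the conditioned subteam quantifiers, and their duals could introduce a nested $\F$ and invalidate the $\LTL_1(\sneg,\F)$ claim, so that is where I would concentrate the verification.
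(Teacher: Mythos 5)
Your proposal is correct and follows essentially the same route as the paper: apply \Cref{cor:delta3-normal-form} to get the bounded-arity prenex normal form, then combine the gadget structure $\calK^{\psi}$ and the translation $\rho$ via \Cref{lem:reduction-base}, and finally verify that only $\F$ is used at temporal depth one and that everything is logspace-computable. The "delicate point" you flag is exactly the one the paper itself notes, and it does hold: $\hook$, $\qe{\F X}$, $\qee{\F X}$ and their duals unfold into purely Boolean and splitting combinations of their arguments (and $\G$ abbreviates $\sneg\F\sneg$), so no $\F$ is ever placed inside the scope of another.
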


In fact, this yields also lower bounds for some countable cases.
For this, observe that all gadgets except Figure~\ref{fig:gadget-relation} also work if we consider only ultimately periodic (or even ultimately constant) traces, that is, if every trace carries only finite information.
As a consequence, if we forbid third-order variables and also represent unary relations as (infinite) sets of traces representing $1$-tuples, then the reduction goes through and utilizes only ultimately constant traces.
This leads to the following result:

\begin{theorem}\label{thm:mc-hardness-delta2}
 $\mathbf{\Delta^2_0}$ is reducible to ultimately periodic model checking and ultimately constant model checking of $\LTL_1(\sneg,\F)$.
\end{theorem}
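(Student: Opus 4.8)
The plan is to re-run the reduction of \Cref{thm:mc-hardness-delta3} essentially verbatim, but in a way that never invokes the one gadget capable of producing traces carrying infinite aperiodic information, namely \Cref{fig:gadget-relation}. First I would note that a closed $\Delta^2_0$-formula is just a closed $\Delta^3_0$-formula without third-order quantifiers, so the preprocessing of \Cref{cor:delta3-normal-form} applies unchanged and introduces no new third-order variables. Thus I may assume $\phi$ is in prenex form with only first- and second-order variables and whose only atoms are $x_1 < x_2$ and $A(x_1,\ldots,x_\ell)$ with $A$ an $\ell$-ary second-order variable, $\ell \in \{1,2,3\}$; in particular no higher-order atoms $\fraka(A)$ remain.

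The crucial modification concerns unary second-order variables. In the proof of \Cref{thm:mc-hardness-delta3} these were encoded by a \emph{single} trace via \Cref{fig:gadget-relation}, which can realize an arbitrary, possibly aperiodic, subset of $\N$; this is the only source of non-ultimately-periodic traces in $\calK^{\phi}$. Instead I would treat unary second-order relations exactly like the binary and ternary ones, as \emph{team-like} objects: a set $A \subseteq \N$ is represented by the subteam of those number-traces (generated by \Cref{fig:gadget-number}, the $\ell = 1$ instance of the product construction of \Cref{fig:gadget-tuple}) whose encoded number lies in $A$. Every such trace has the form $\emptyset\{0\}^*\{1\}\{0,\vend\}^\omega$ or $\emptyset\{0\}^\omega$, hence is ultimately constant, and \Cref{fig:gadget-relation} is no longer used anywhere. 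Consequently all gadgets, and therefore the whole merged structure $\calK^{\phi}$, generate only ultimately constant, a fortiori ultimately periodic, traces.

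With this encoding I would adapt the translation $\rho$ accordingly. The atom $A(x_1,\ldots,x_\ell)$ keeps the same formula for all $\ell \in \{1,2,3\}$ (for $\ell = 1$ the quantifier $\qe{\F A}$ now genuinely selects the number-trace witnessing membership rather than being vacuous), the comparison atom $x_1 < x_2$ is translated unchanged, and existential second-order quantification is rendered uniformly by the $\ell > 1$ clause $\rho(\exists X\,\psi) \dfn \qee{\F X}((\F X \hook \neg\neg \F \vend) \land \rho(\psi))$, where the guard $\F X \hook \neg\neg \F \vend$ ensures that every selected trace actually represents a number or tuple. First-order quantifiers and the Boolean connectives are handled exactly as before, and no third-order clauses are required. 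The induced interpretation $\calI_T$ then assigns to a unary $X$ the set $\{ n \mid \exists t \in T_{\F X} : \restr{t^1}{\Sigma} = \{0\}^n\{1\}\{0\}^\omega \}$, i.e.\ precisely the $\ell = 1$ case of the tuple representation.

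The correctness statement to prove is the analog of \Cref{lem:reduction-base}, namely $\N \vDash \phi$ iff $T(\calK^{\phi}) \vDash \rho(\phi)$, by the same induction; the only cases needing fresh verification are the unary second-order atom and quantifier under the new representation, and these reduce literally to the already-checked $\ell$-tuple cases. Since every trace of $\calK^{\phi}$ is ultimately constant, we have $T_\ulc(\calK^{\phi}) = T_\ulp(\calK^{\phi}) = T(\calK^{\phi})$, so $T(\calK^{\phi}) \cap \calC = T(\calK^{\phi})$ whenever $\calC$ is the class of ultimately periodic or of ultimately constant traces; hence $\calC$-restricted model checking of $(\calK^{\phi},\rho(\phi))$ coincides with plain model checking and thus with $\N \vDash \phi$. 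As before $\rho(\phi) \in \LTL_1(\sneg,\F)$, and both $\calK^{\phi}$ and $\rho(\phi)$ are logspace-computable. I expect the main obstacle to be purely a matter of bookkeeping: confirming that neither the normal-form preprocessing nor the revised gadget construction lets the aperiodic relation gadget of \Cref{fig:gadget-relation} re-enter, so that $\calK^{\phi}$ provably stays within the ultimately constant traces; the logical correctness itself is inherited almost verbatim from \Cref{thm:mc-hardness-delta3}.
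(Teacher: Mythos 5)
Your proposal is correct and follows essentially the same route as the paper: the paper likewise observes that \Cref{fig:gadget-relation} is the sole source of non-ultimately-periodic traces, forbids third-order variables, and re-encodes unary second-order relations as (possibly infinite) sets of number-traces representing $1$-tuples, so that the reduction of \Cref{thm:mc-hardness-delta3} goes through using only ultimately constant traces. Your write-up in fact supplies more detail (the adapted quantifier clause with the $\F X \hook \neg\neg\F\vend$ guard and the identification $T_\ulc(\calK^{\phi}) = T_\ulp(\calK^{\phi}) = T(\calK^{\phi})$) than the paper's brief sketch.
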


In the next section, we reduce model checking to satisfiability, and thus close the circle of logspace-reductions between these two problems and $\mathbf{\Delta^3_0}$ (resp.\ $\mathbf{\Delta^2_0}$).

\section{From Model Checking to Satisfiability}
\label{sec:sat}

A well-known feature of classical $\LTL$ is that its model checking problem can be reduced to satisfiability (cf.~Sistla and Clarke~\cite{SC85}).
The idea is, given a structure $\calK$, to encode it in a "characteristic formula" $\chi_\calK$ such that $t \vDash \chi_\calK$ if and only if $t$ is a trace in $\calK$.
As a consequence, $\chi_\calK \imp \phi$ is valid (that is, its negation unsatisfiable) if and only if $T(\calK) \vDash \phi$.

We elaborate a bit, following Schnoebelen~\cite{Sch02}.
Let $\calK = (W, R, \eta, r)$ be a structure over a finite set $\Phi \subseteq \AP$ of propositions.
\Wloss there are distinct propositions $p_w$ for every $w \in W$ such that $p_w \in \eta(w)$ and $p_w\notin \eta(w')$ for $w' \neq w$.
Then the formula $\chi_\calK$ is:\label{p:chi-k}
\begin{align*}
 \chi_\calK \dfn p_r \land \G \bigvee_{\mathclap{w \in W}}\Big(p_w \land \bigwedge_{\mathclap{\substack{w' \in W \\w' \neq w}}} \neg p_{w'} \land \bigwedge_{\mathclap{q \in \eta(w)}}q \; \land \; \bigwedge_{\mathclap{\substack{q \in \Phi\\q \notin \eta(w)}}}\neg q \;\land \bigvee_{\mathclap{(w,w') \in R}} \X p_{w'}\Big)
\end{align*}
Every trace in $\calK$ satisfies $\chi_\calK$, and
conversely, a trace that satisfies $\chi_\calK$ is in $\calK$.

\smallskip

An analogous construction for $\LTL(\sneg)$ would be a formula $\chi$ such that $T \vDash \chi$ if and only if $T = T(\calK)$.
However, in team-semantics, things become complicated:
Any classical formula, such as $\chi_\calK$, defines a downward closed class of traces, and as such it defines the \emph{subteams} of $T(\calK)$.
This is sufficient if $\phi$ itself is downward closed, and indeed, then $T(\calK) \vDash \phi$ iff $\neg \chi_\calK \lor \phi$ is valid, which reduces the model checking problem in team semantics to the validity problem.

If now $\phi$ itself is not downward closed, then we need some non-classical formula that requires the actual existence of traces of $\calK$ in the team (which is again not a downward closed property).

A formula $\phi$ \emph{defines a team $T$ (up to $\Phi$)} if it defines the class of teams $T'$ such that $\restr{T'}{\Phi} = T$.
Thus we need to define the team $T(\calK)$ up to $\Phi$, where $\Phi$ is the set of all propositions that occur in $\phi$.

\smallskip

Our approach works in two steps.
First we give a formula $\xi$ that defines the \emph{full} team $\frakT = (\wp \Phi)^\omega$, and then we use the formula $\chi_\calK$ to "weed out" traces not in $T(\calK)$.
For this, we use the fact that $\frakT_{\chi_{\calK}} = T(\calK)$;
recall that $T_\psi = \{t\in T \mid \{t\}\vDash \psi\}$.
Also recall that $T \vDash \phi_1 \hook \phi_2$ iff $T_{\phi_1}\vDash\phi_2$.
Then we obtain
\begin{align*}
 T(\calK) \vDash \phi  \; & \LR \; \frakT \vDash \chi_\calK \hook \phi\text{, and }\tag{$\star$}\\
\frakT \vDash \psi \;    & \LR \; \xi \land \psi \text{ is satisfiable }({\LR} \; \xi \timp \psi\text{ is valid),}\tag{$\star\star$}
\end{align*}
for all formulas $\phi,\psi$ that contain only propositions from $\Phi$.
Combining ($\star$) and ($\star\star$) yields a reduction from model checking to satisfiability.

It remains to construct the formula $\xi$ that defines $\frakT$.
We split this task into two steps, which each can be implemented in $\LTL(\sneg)$:
\begin{enumerate}
 \item Force all traces of the form $\emptyset^* \{ p \}\emptyset^\omega$ to appear in the team, where $p \in \Phi$.
\item For every subset $T$ of traces as in 1., force that the trace defined by $t(i) \dfn \bigcup_{t' \in T} t'(i)$ exists as well.
       As every trace can be expressed this way, this yields $\frakT$.\footnote{Readers familiar with HyperLTL will notice that this is why $\LTL(\sneg)$ can enforce uncountable teams and HyperLTL cannot.
       Roughly speaking, HyperLTL quantifies traces and binds them to trace variables, but cannot quantify and bind infinitely many traces at once.}
\end{enumerate}

To simplify the constructions, we make some refinements to this idea.
First, we introduce an auxiliary proposition $\# \notin \Phi$ to mark the positions \emph{after} $\{p\}$.
We also define the augmented team
\begin{align*}
  \frakT^\# \dfn \frakT \cup \{ \emptyset^n\{p\}\{\#\}^\omega \mid n \geq 0, p \in \Phi \}\text{.}
\end{align*}
Defining this team clearly suffices, as $\restr{\frakT^\#}{\Phi} = \frakT$.

In what follows, we refer to traces of the form $\emptyset^*\{p\}\{\#\}^\omega$ as \emph{prototraces}, as we build all other traces from these.
Prototraces are easily recognized in the team as they satisfy the LTL-formula $\F \#$, while "regular" traces do not.

The following $\LTL$-formula defines prototraces, as it states that exactly one proposition $p$ appears, that $\#$ must appear directly afterwards, and that from that position on, $\# \land \neg p$ holds forever.\label{p:def-proto}
\begin{align*}
 \xi_{\subseteq \mathrm{proto}} \dfn & \;\neg\neg\bigvee_{p \in \Phi} \Big(\F p \land \bigwedge_{\mathclap{p' \in \Phi\setminus \{p\}}}\G \neg p' \land \G(p \imp \X\#) \land \G(\# \imp \neg p\land \G\#)\Big)
 \intertext{Conversely, the non-classical formula}
 \xi_{\supseteq \mathrm{proto}} \dfn & \;\bigwedge_{\mathclap{p \in \Phi}} \G \qe{} p
\end{align*}
states for every $i \geq 0$ and $p \in \Phi$ that $p$ appears on some trace at position $i$.
Consequently, $\xi_{\mathrm{proto}} \dfn  \xi_{\subseteq\mathrm{proto}} \land \xi_{\supseteq\mathrm{proto}}$ means that the team contains precisely all prototraces.
Now we can quantify over sets of prototraces and state that their position-wise union appears as a trace, by saying that $p$ is false iff it is false in all prototraces:
\begin{align*}
 \xi \dfn &\; (\F\#\hook \xi_{\mathrm{proto}}) \land  \qaa{\F \#} \qe{\neg \F \#} \bigwedge_{\mathclap{p \in \Phi}} \G \big( (\neg\F \#\hook \neg p) \tequiv (\F \# \hook \neg p) \big)
\end{align*}\label{p:xi-formula}
Altogether, $\xi$ is the desired formula that defines $\frakT^\#$.\label{p:formula-xi}

\smallskip

A short inspection of the involved formulas reveals that they are all expressible in $\LTL_2(\sneg, \F, \X)$.
For this, again note that $\hook, \qa{}, \qe{}$ and so on do not increase temporal nesting, and that $\G$ is the same as $\sneg\F\sneg$.

\begin{theorem}\label{thm:mc-to-sat-delta3}
  Let $k \geq 2$.
Then the model checking problem of $\LTL_k(\sneg, \F, \X)$ is reducible to its satisfiability problem.
\end{theorem}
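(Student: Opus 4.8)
The plan is to turn a model-checking instance $(\calK,\phi)$ into the single satisfiability instance
\[
  \xi \land (\chi_\calK \hook \phi),
\]
where $\chi_\calK$ and $\xi$ are exactly the formulas constructed above, and $\Phi$ is taken to be the finite set of \emph{all} propositions occurring in either $\chi_\calK$ or $\phi$ — the world-markers $p_w$, the labels of $\calK$, and the propositions of $\phi$ — while $\xi$ additionally uses the fresh marker $\#\notin\Phi$. With this choice $\frakT=(\wp\Phi)^\omega$ contains $T(\calK)$, the formula $\chi_\calK$ recognises precisely the traces of $\calK$ among the traces over $\Phi$ (so $\frakT_{\chi_\calK}=T(\calK)$), and $\xi$ defines $\frakT^\#$, hence $\frakT$ up to $\Phi$.

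Correctness then reduces to concatenating the two equivalences $(\star)$ and $(\star\star)$ that are already established. First I would apply $(\star)$ to rewrite $T(\calK)\vDash\phi$ as $\frakT\vDash\chi_\calK\hook\phi$, using $\frakT_{\chi_\calK}=T(\calK)$ and the definition $T\vDash\phi_1\hook\phi_2\LR T_{\phi_1}\vDash\phi_2$. Since $\chi_\calK\hook\phi$ contains only propositions from $\Phi$, I would then invoke $(\star\star)$ with $\psi\dfn\chi_\calK\hook\phi$ to rewrite $\frakT\vDash\chi_\calK\hook\phi$ as the satisfiability of $\xi\land(\chi_\calK\hook\phi)$. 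Chaining the two gives
\[
  T(\calK)\vDash\phi \quad\LR\quad \xi\land(\chi_\calK\hook\phi)\text{ is satisfiable,}
\]
which is precisely the desired reduction.

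It remains to verify that the output lands in the correct fragment and is logspace-computable. The formula $\chi_\calK$ has temporal depth $2$ (a single $\G$ over subformulas each containing one $\X$), the formula $\xi$ lies in $\LTL_2(\sneg,\F,\X)$ as already observed, and $\hook$, $\qe{}$, $\qa{}$ and the subteam quantifiers leave temporal nesting untouched; hence $\td(\chi_\calK\hook\phi)=\max\{2,\td(\phi)\}$ and $\td(\xi\land(\chi_\calK\hook\phi))=\max\{2,\td(\phi)\}$. For $\phi\in\LTL_k(\sneg,\F,\X)$ with $k\geq 2$ this is at most $k$, so the result stays in $\LTL_k(\sneg,\F,\X)$; this is exactly where the hypothesis $k\geq 2$ enters, since $\chi_\calK$ and $\xi$ already cost depth two regardless of $\phi$. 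Finally, $\chi_\calK$, $\xi$ and their Boolean combination are built by straightforward recursion over $\calK$ and $\Phi$, both of polynomial size, so the map is computable in logarithmic space.

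The genuine mathematical content — constructing $\xi$ so that it pins down the (uncountable) team $\frakT^\#$, and justifying $(\star)$ and $(\star\star)$ — has already been carried out above, so the only substantial work left here is bookkeeping: fixing $\Phi$ so that $(\star)$ and $(\star\star)$ are applicable to $\chi_\calK\hook\phi$, and tracking temporal depth accurately. That depth accounting, in particular the claim that $\hook$ adds no temporal nesting, is the one place where an error could slip in, so it is the part I would check most carefully.
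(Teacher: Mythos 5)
Your proposal is correct and takes essentially the same route as the paper: the paper's proof is precisely the chaining of $(\star)$ and $(\star\star)$ to reduce $(\calK,\phi)$ to the satisfiability of $\xi \land (\chi_\calK \hook \phi)$, together with the observation that all involved formulas lie in $\LTL_2(\sneg,\F,\X)$ because $\hook$ and the subteam quantifiers add no temporal nesting and $\G \equiv \sneg\F\sneg$. Your explicit depth bookkeeping, $\td(\xi\land(\chi_\calK\hook\phi)) = \max\{2,\td(\phi)\} \leq k$ for $k \geq 2$, is exactly the paper's "short inspection of the involved formulas," just spelled out in more detail.
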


In fact, it is possible to eliminate $\X$ from the reduction (assuming that $\phi$ itself is an $\X$-free formula).
This is shown in the next subsection.

\subsection{A hard stutter-invariant fragment}

Let again a formula $\phi$ and an input structure $\calK = (W,R,\eta,r)$ be given, but now with $\phi \in \LTL_1(\sneg,\F)$.
In particular, $\phi$ is now stutter-invariant.
By \Cref{thm:mc-hardness-delta3}, this fragment of model checking is already as hard as the full problem.

First, we restate the first part of the reduction, that is, from model checking to truth in $\frakT^\#$, in the stutter-invariant fragment:
\begin{lemma}\label{lem:mc-f2-to-t}
  The model checking problem of $\LTL_1(\sneg,\F)$ is reducible to truth of $\LTL_2(\sneg,\F)$-formulas in $\frakT^\#$.
\end{lemma}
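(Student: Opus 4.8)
The plan is to reuse the first half of the reduction behind \Cref{thm:mc-to-sat-delta3}, namely the equivalence $T(\calK)\vDash\phi \LR \frakT\vDash\chi_\calK\hook\phi$, and to make it $\X$-free with the target team being the \emph{given} full team $\frakT^{\#}$ instead of $\frakT$. Inspecting $\chi_\calK$, every conjunct except the transition clause $\bigvee_{(w,w')\in R}\X p_{w'}$ is already stutter-invariant: the initial condition $p_r$ is an atom, the label-uniqueness and label-consistency conditions live under a single $\G=\sneg\F\sneg$, and $\hook$, $\qe{}$, $\qaa{}$ and the Boolean connectives add no temporal nesting. Since I evaluate in $\frakT^{\#}$ rather than $\frakT$, I do not need the formula $\xi$ at all (that was the second, $\X$-using half); I only have to isolate the $\calK$-path traces inside $\frakT^{\#}$ by an $\X$-free formula $\chi'$ of temporal depth at most two and then feed it into $\phi$, so that $T(\calK)\vDash\phi\LR\frakT^{\#}\vDash\psi$ for the resulting $\psi\in\LTL_2(\sneg,\F)$.

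The single real obstacle is the transition clause, which asks for the label of the \emph{immediate successor} position on a trace. This is exactly what an $\F$-only formula cannot say about a single trace: by \Cref{thm:stutter-invariance-ltl} every $\X$-free formula is stutter-invariant, and stuttering collapses the distinction between a position and its successor. I would overcome this by moving the successor relation from the level of a single trace to the level of \emph{positions in $\N$}, where it is definable from $<$ as "the least strictly larger element". The point of working inside $\frakT^{\#}$ is that its prototraces $\emptyset^n\{p\}\{\#\}^\omega$ serve as positional rulers (the mark of a prototrace pins a position), and synchronous $\F$ advances all traces simultaneously, so comparisons of marked positions are expressible exactly as the order relation was implemented in \Cref{sec:mc} via a formula of the shape $\F((x_1\hook\vend)\land(x_2\hook 1))$.

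Concretely, I would express "the prototrace $P'$ marks the successor of the position marked by $P$" as: the mark of $P$ precedes the mark of $P'$, and no prototrace marks a position strictly in between; this is a conjunction of two such order comparisons together with subteam quantifiers $\qe{\F\#}$ and $\qaa{\F\#}$ ranging over prototraces (recognised by $\F\#$, while candidate traces satisfy $\neg\F\#$). The transition clause then becomes: for every adjacent prototrace pair $(P,P')$, whenever an isolated candidate trace carries $p_w$ at the position marked by $P$ and $p_{w'}$ at the position marked by $P'$, then $(w,w')\in R$; each label is read off by a single synchronous $\F$-shift to the relevant marker followed by a propositional test guarded by $\hook$. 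Together with the atom $p_r$ and the $\G$-guarded consistency conditions, this gives an $\X$-free $\chi'$ picking out precisely the $\calK$-path traces among the non-prototraces of $\frakT^{\#}$, and $\psi$ is obtained by conditioning $\phi$ on this selection.

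The hard part will be the temporal-depth budget and the precise interaction with stuttering. One must check that the nested uses of $\F$ — one $\F$ to locate a marker, and a further $\F$ inside the order comparison defining adjacency — can be arranged within depth two, and that the subteam quantifiers and $\hook$ genuinely contribute nothing to temporal nesting, as already observed for the constructs of \Cref{sec:mc}. Here stutter-invariance of $\phi$ (\Cref{thm:stutter-invariance-ltl}) provides the decisive slack: it suffices that the selected subteam be \emph{stutter-equivalent} to $T(\calK)$ rather than literally equal to it, so that residual stuttering from self-loops of $\calK$ and from the repeated $\emptyset$'s of the prototraces is harmless. Logspace-computability is then immediate, since $\chi'$ is a fixed template instantiated over the states and edges of $\calK$. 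I expect the genuine difficulty to sit exactly at the $\X$-elimination of the transition clause: trading "immediate successor on a trace" for "successor position in $\N$" through the synchronous prototrace markers, while keeping the entire formula inside $\LTL_2(\sneg,\F)$.
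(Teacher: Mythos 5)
Your high-level plan is indeed the paper's plan: evaluate in $\frakT^\#$, treat the prototraces as synchronous positional markers, and eliminate the $\X$ in the transition clause of $\chi_\calK$ by reading consecutive positions with $\F$. However, two mechanisms you rely on break down. First, the prototrace selection. The quantifier $\qe{\F\#}$ (likewise $\qa{\F\#}$) replaces the \emph{entire} subteam $T_{\F\#}$ by the one selected trace, so after selecting $P$ there is no second prototrace $P'$ left to select, and no pool of prototraces left against which to test your betweenness condition "no prototrace marks a position strictly in between"; nested occurrences of $\qe{\F\#}$ simply re-select the same trace. Since $\F\#$ is the only property you use to address prototraces, they are all indistinguishable to the logic, and you cannot pin down two specific ones while still referring to "the others". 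The paper's device is to tag the two needed prototraces with two distinguished propositions $x,y\in\Phi$ and use the selectors $\F\#\land\F x$ and $\F\#\land\F y$, which leave all other prototraces in place. Moreover, with this tagging the paper avoids betweenness (and hence any third prototrace quantifier) altogether: by persistence of $\#$, a prototrace shows $\#$ exactly after its mark, so "the $y$-mark sits immediately after the $x$-mark" is enforced by a single $\G$ over depth-zero formulas, forbidding at every position that $x$ and $y$ occur simultaneously, that $\#$ occurs simultaneously with $x$, or that $\#$ occurs simultaneously with the empty label (the formula $\zeta_1$).

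Second, your closing step "conditioning $\phi$ on this selection" presumably means $\chi'\hook\phi$, but $\hook$ conditions on \emph{per-trace} satisfaction, $T_{\chi'}=\{t\in T\mid\{t\}\vDash\chi'\}$, whereas your $\chi'$ (like the paper's $\zeta$) is an intrinsically team-level statement: it needs the prototraces present in the team to speak about successor positions at all, and it degenerates on singletons. So $\hook$-conditioning on it does not isolate $T(\calK)$. The paper flags exactly this obstacle ("both sides of the splitting now need access to prototraces") and resolves it with a second copy $\chi''_\calK$ over fresh tags $x',y'$ and a splitting disjunction, sending $T(\calK)$ together with the $(x,y)$-prototraces to the side where $\phi$ is checked, and the remaining traces together with the $(x',y')$-prototraces to a side certified by $\sneg\chi''_\calK$. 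Finally, a smaller conflation: stutter-invariance gives you no slack in this lemma, since $\frakT^\#$ contains every trace on the nose and the isolated subteam must be exactly $T(\calK)$; the stutter-equivalence argument is needed only in the next step (\Cref{lem:f2-t-to-sat}), where an $\X$-free formula can define $\frakT^\#$ only up to stuttering.
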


This requires eliminating the $\X$-operator in the formula $\chi_\calK$ (cf.~p.~\pageref{p:chi-k}).
The idea is to simulate it by using "helper" prototraces.
The following table illustrates this.

\medskip

\begin{center}
\begin{tabular}{CCCCCCC}
 \toprule
 t & \cdots & \emptyset & \{x\}     & \{\#\}    & \{\#\}  & \cdots \\
 t' & \cdots& \emptyset &  \emptyset & \{y\}     & \{\#\}  & \cdots \\
 t''  & \cdots& \eta(w)       & \eta(w')    & \eta(w'') & \eta(w''') & \cdots  \\
 \bottomrule
\end{tabular}
\end{center}

\medskip

We arbitrarily pick two distinct propositions $x,y \in \Phi$ (\wloss $\size{\Phi} \geq 2$).
Call a trace $t$ \emph{active at position $i$} if $t(i) \cap \Phi \neq \emptyset$.
We quantify two prototraces $t,t'$ such that $t$ is active with $x$ first, and $t'$ is active with $y$ directly after.
We say this by stating that $x$ and $y$ do not occur together, and also $\#$ may not occur simultaneously with $x$ or $\emptyset$, so $y$ must appear immediately when $x$ is gone.
\begin{align*}
  \zeta_1 \dfn & \; \F \# \hook \G\big(\sneg(x \lor y) \land \sneg(\# \lor x) \land \sneg(\# \lor (\neg x \land \neg y\land \neg \#) \big)
\intertext{This enables us to access consecutive positions in $t''$ by querying whether a prototrace is active with $x$ or $y$, respectively.
  Hence we can state that $x$ and $y$ appear together with propositions $p_w$ and $p_{w'}$, respectively, such that $w,w'$ are successors:}
  \zeta_2 \dfn &\; \bigvee_{\mathclap{(w,w')\in R}} \G((\F \# \hook \neg x) \ovee (\neg\F\# \hook p_w)) \land \G ((\F \# \hook \neg y) \ovee (\neg \F\# \hook p_{w'}))
\end{align*}
Let $\zeta \dfn  \,\qa{\neg\F\#} \, \qa{(\F\# \land \F x)} \, \qa{(\F\# \land \F y)} \, (\zeta_1 \timp \zeta_2)$.
Then $\zeta$ says that every non-prototrace contains a sequence $p_{w_1},p_{w_2},\ldots$ of variables only if $w_1,w_2,\ldots$ is a path in $\calK$.
The formula $\chi'_\calK$ that replaces $\chi_\calK$ can now be chosen as
\begin{align*}
  \chi'_\calK \dfn\bigg( &(\neg\F\#) \hook \neg\neg \Big( p_r \land \G \bigvee_{\mathclap{w \in W}}(p_w \land \bigwedge_{\mathclap{\substack{w' \in W \\w' \neq w}}} \neg p_{w'} \land \bigwedge_{\mathclap{q \in \eta(w)}}q \; \land \; \bigwedge_{\mathclap{q \notin \eta(w)}}\neg q ) \Big) \bigg)\land \zeta
\end{align*}

As before, we want to check $\phi$ only in $T(\calK)$, so we split the full team into $T(\calK)$ and $(\wp\Phi)^\omega \setminus T(\calK)$.
However, in contrast to before we cannot simply write $\chi_\calK \hook \phi$ anymore.
The reason is that both sides of the splitting now need access to prototraces, but due to the definition of $\hook$ (p.~\pageref{p:hook}), the subteam $\frakT^{\#}_{\chi_{\calK}}$ would not contain any.

To solve this, let $\chi''_\calK$ be a copy of the formula $\chi'_\calK$, but with the propositions $x$ and $y$ replaced by different distinct propositions $x'$ and $y'$, which are \wloss in $\Phi$.
One subteam of $\frakT^\#$ now contains all traces in $T(\calK)$, witnessed by $\chi'_\calK$ using prototraces for $x$ and $y$;
and the remaining subteam contains all traces in $(\wp \Phi)^\omega \setminus T(\calK)$, witnessed by prototraces for $x'$ and $y'$:\label{p:reduction-mc-to-t}
\begin{align*}
  T(\calK) \, \vDash \,\phi \; \LR \; \frakT^\# \vDash \;  & \Big((\F\# \hook \G (\neg x' \land\neg y')) \land \chi'_\calK \land (\neg \F \# \hook \phi)\Big) \\
   & \lor \Big((\F\# \hook \G (\neg x \land \neg y)) \land \sneg \chi''_\calK \Big)
\end{align*}
The above formula is a reduction from model checking to truth in $\frakT^\#$ for $\LTL_2(\sneg,\F)$-formulas.

\medskip

Next, we proceed with the second step by mapping the set of formulas true in $\frakT^\#$ to the satisfiability problem.
\begin{lemma}\label{lem:f2-t-to-sat}
  Truth of $\LTL_2(\sneg,\F)$-formulas in $\frakT^\#$ is reducible to the satisfiability problem of $\LTL_2(\sneg,\F)$.
\end{lemma}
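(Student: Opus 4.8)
The plan is to mirror the reduction behind \Cref{thm:mc-to-sat-delta3}, where $(\star\star)$ turned truth in $\frakT^\#$ into satisfiability via the defining formula $\xi$, but to replace $\xi$ by a genuinely $\X$-free formula. Concretely, I would produce an $\LTL_2(\sneg,\F)$-formula $\hat\xi$ over the propositions $\Phi\cup\{\#\}$ with the two properties (i) $\frakT^\#\vDash\hat\xi$, and (ii) every team $T$ with $T\vDash\hat\xi$ satisfies $\restr{T}{\Phi\cup\{\#\}}\steq\frakT^\#$. Given such an $\hat\xi$, the map $\psi\mapsto\hat\xi\land\psi$ is the reduction sought: the point is that any input $\psi\in\LTL_2(\sneg,\F)$ is $\X$-free, hence stutter-invariant by \Cref{thm:stutter-invariance-ltl}, so it cannot separate $\frakT^\#$ from a stutter-equivalent team.

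Two auxiliary facts glue the argument. First, a locality lemma: if all propositions of a formula $\psi$ lie in a set $\Pi$, then $T\vDash\psi\LR\restr{T}{\Pi}\vDash\psi$ for every team $T$. I would prove this by a routine induction on $\psi$; the only case needing care is the splitting disjunction, where a split of $\restr{T}{\Pi}$ is pulled back to a split of $T$ by grouping the traces of $T$ according to their $\Pi$-projection. Second, the stutter-invariance of $\psi$. With these, the equivalence $\frakT^\#\vDash\psi\LR\hat\xi\land\psi\text{ satisfiable}$ follows: for "$\Rightarrow$", $\frakT^\#$ itself satisfies $\hat\xi\land\psi$ by (i); for "$\Leftarrow$", a model $T$ of $\hat\xi\land\psi$ gives $\restr{T}{\Phi\cup\{\#\}}\steq\frakT^\#$ by (ii) and $\restr{T}{\Phi\cup\{\#\}}\vDash\psi$ by locality, whence $\frakT^\#\vDash\psi$ by stutter-invariance.

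The substance lies entirely in constructing $\hat\xi$. Inspecting $\xi$ (p.~\pageref{p:xi-formula}), the \emph{only} occurrence of $\X$ is in $\xi_{\subseteq\mathrm{proto}}$, namely the conjunct $\G(p\imp\X\#)$ that forces $\#$ to appear \emph{immediately} after the unique $p$, pinning each prototrace to the thin shape $\emptyset^n\{p\}\{\#\}^\omega$. This "immediately after" is precisely a non-stutter-invariant property, so no $\X$-free formula can impose it trace by trace; the most an $\X$-free formula can require is a \emph{block} shape $\emptyset^a\{p\}^b\{\#\}^\omega$. I would therefore replace $\G(p\imp\X\#)$ by $\X$-free conjuncts stating that $p$ precedes a persistent tail of $\#$ and that no $\emptyset$ ever intervenes once $p$ has appeared, so that $\restr{t}{\Phi}$ is a single block of $p$; the parts $\xi_{\supseteq\mathrm{proto}}$ and the union clause of $\xi$ are already $\X$-free and can be kept verbatim.

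The main obstacle is that admitting block prototraces a priori enlarges the stutter-free skeleton: the team consisting of all block prototraces together with all of $\frakT$ is again stutter-free, yet strictly larger than $\frakT^\#$, hence \emph{not} stutter-equivalent to it, so property (ii) would fail. The crux is thus to adjoin $\X$-free constraints forcing the fattening to be \emph{uniform} across the whole team — that the prototrace pulses tile the time axis without gaps and in lockstep with the remaining traces — so that a single stuttering function collapses the team exactly onto $\frakT^\#$. I expect this alignment condition to be the technical heart of the lemma: it must be expressed purely with the synchronous $\F$ and the subteam connectives $\qe{},\qa{},\hook$ (which, as noted, do not raise temporal depth, keeping $\hat\xi$ in $\LTL_2(\sneg,\F)$), in the spirit of the helper-prototrace device already used for $\chi'_\calK$ in \Cref{lem:mc-f2-to-t}, after which one verifies that $\frakT^\#$ satisfies $\hat\xi$ and that only uniform stutterings of $\frakT^\#$ do, establishing (i) and (ii).
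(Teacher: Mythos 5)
Your overall skeleton is the same as the paper's: replace the single $\X$-conjunct $\G(p\imp\X\#)$ of $\xi_{\subseteq\mathrm{proto}}$ by a block-shape condition (the paper uses $\G(p \to (\F \# \land \G(p \lor \#)))$, yielding $\xi'$), observe that tracewise stutter-equivalence to prototraces is not enough because the \emph{team} need not be stutter-equivalent to $\frakT^\#$, and then conclude via exactly the chain you give: $\frakT^\#\vDash\psi$ iff some $T\steq\frakT^\#$ satisfies $\psi$ (stutter-invariance of $\X$-free $\psi$, \Cref{thm:stutter-invariance-ltl}) iff the conjunction of the defining formula with $\psi$ is satisfiable. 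Your locality lemma and the (i)/(ii) framing are a slightly more careful packaging of what the paper does implicitly via "defines a team up to $\Phi$".

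However, there is a genuine gap: the part you yourself call "the technical heart of the lemma" --- the $\X$-free constraints forcing the whole team to stutter in lockstep --- is never constructed; you only assert that such an alignment condition should exist and be expressible with $\F$ and the subteam connectives. This is precisely the nontrivial content of the paper's proof, and it is not obvious that it can be done, since the condition is a property of \emph{pairs} of traces across the prototrace/non-prototrace divide. The paper realizes it with two explicit formulas: $\psi_{\mathrm{stutter},1}$, which uses $\qaa{\F\#}$ to require that any two prototraces whose active (non-$\#$, non-empty) positions overlap have \emph{identical} sets of active positions, and $\psi_{\mathrm{stutter},2}$, which uses $\qa{\F\#}\,\qa{\neg\F\#}$ together with $\hook$, $\timp$ and $\ovee$ to require that while any prototrace is active, every non-prototrace keeps each literal $\ell\in\{p,\neg p\}$ fixed. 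Only with these in hand can one prove the key characterization ($T\vDash\xi'\land\psi_{\mathrm{stutter}}$ iff $T\steq\frakT^\#$), i.e., your properties (i) and (ii); without them your proof reduces the lemma to an unproved claim of expressibility. So the proposal is a correct plan following the paper's route, but it is incomplete as a proof: you must exhibit the alignment formulas (or some equivalent device) and verify they stay within $\LTL_2(\sneg,\F)$ and characterize exactly the uniform stutterings of $\frakT^\#$.
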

This boils down to finding an $\LTL_2(\sneg,\F)$-formula that defines $\frakT^\#$ (up to stutter-equivalence).
We start with the formula that defines prototraces.
Without $\X$, we can only define traces of the form $\emptyset^*\{p\}^+\{\#\}^\omega$, which are stutter-equivalent to prototraces.
For this, the formula $\xi$ (cf.~p.~\pageref{p:def-proto}) is changed to $\xi'$ where the subformula $\G(p \to \X \#)$ is replaced with $\G(p \to (\F \# \land \G(p \lor \#)))$.

A problem arises when considering teams:
It may still be that the whole \emph{team} is not stutter-equivalent to a \emph{team} of prototraces, although every trace is, as the following example illustrates.
The team depicted below is stutter-free, but not stutter-equivalent to a team of prototraces.

\begin{center}
\begin{tabular}{CCCCCCC}
 \toprule
 t  & \emptyset & \{p\}     & \{p\}   & \{\#\}  & \{\#\} & \cdots \\
 t' & \emptyset & \emptyset & \{ q \} & \{ q \} & \{\#\} & \cdots \\
 \bottomrule
\end{tabular}
\end{center}
So we need to control the stuttering throughout the team.
The following formula stipulates that whenever prototraces overlap, in the sense that they have a common active position, their active positions are identical altogether (but not necessarily the labeled proposition).
We write $\bigvee\Phi$ short for $\bigvee_{p \in \Phi}p$.
\begin{align*}
  \psi_\mathrm{stutter,1} \dfn \qaa{\F\#} \Big(\#\F\hook \Big(( \F \bigvee \Phi) \timp \G ((\neg \# \land \neg \bigvee\Phi) \ovee \# \ovee \bigvee\Phi )\Big)\Big)
\end{align*}
With this, the situation depicted before cannot occur, and we obtain a team that is stutter-equivalent to a team of prototraces.
However, this is still not sufficient when other traces come into play, as the following example shows:
\begin{center}
\begin{tabular}{CCCCCCC}
 \toprule
 t  & \emptyset & \{p\}     & \{p\}   & \{\#\}  & \{\#\} & \cdots \\
 t' & \emptyset & \emptyset & \emptyset & \{ q \} & \{\#\} & \cdots \\
 t'' & \eta(w)  & \eta(w')  & \eta(w'') & \cdots  & \cdots & \cdots\\
 \bottomrule
\end{tabular}
\end{center}
The "regular" traces can still advance faster than the supposed prototraces, and as a consequence, the whole team is again not stutter-equivalent to $\frakT^\#$.
This can be remedied as follows.
We stipulate that, whenever a prototrace $t$ is active, no non-prototrace may change its label until $t$ switches to $\#$:
\begin{align*}
  \psi_\mathrm{stutter,2} \dfn \qa{\F\#} \qa{\neg\F\#} \bigwedge_{\mathclap{\substack{p \in \Phi\\\ell\in\{p,\neg p\}\\q\in \Phi}}}  \Big( \F\big((\F&\#\hook q) \land (\neg\F\# \hook \ell)\big)\\[-2.5em]
  & \qquad\timp \G \big( (\F\#\hook q) \timp (\neg\F\#\hook \ell)\big)\Big)
\end{align*}
Here, the first line asks whether there is some common position where a prototrace $t$ is active and a normal trace $t'$ satisfies some literal $\ell$.
If so, then the second line states that $t'$ must satisfy $\ell$ in \emph{all} positions where $t$ is active.
Let $\psi_\mathrm{stutter} \dfn \psi_\mathrm{stutter,1} \land \psi_\mathrm{stutter,2}$.

Now we have achieved that the whole team stutters whenever a prototrace stays active.
For this reason, every prototrace stays active for effectively only one time step.
Formally, for any team $T$, it holds that $T \vDash \xi' \land \psi_{\mathrm{stutter}}
$
if and only if $T \steq \frakT^\#$.
As a consequence, we obtain
\begin{align*}
  \frakT^\# \vDash \phi \;\LR \; & \exists T : T \steq \frakT^\# \text{ and }T\vDash \phi\tag{as $\phi$ is stutter-invariant}\\
  \LR \; & \exists T : T \vDash  \xi' \land \psi_{\mathrm{stutter}} \land \phi\tag{construction of $\xi'$ and $\psi_{\mathrm{stutter}}$}\\
  \LR \; &  \xi' \land \psi_{\mathrm{stutter}} \land \phi \text{ is satisfiable.}
\end{align*}

Again, all formulas have temporal depth at most two and use only temporal operators $\F$ and $\G$.
Combining \Cref{lem:mc-f2-to-t} and \Cref{lem:f2-t-to-sat} yields:

\begin{theorem}\label{thm:mc-to-sat-x-free}
  Let $k \geq 2$.
  Then the model checking problem of $\LTL_k(\sneg,\F)$ is reducible to its satisfiability problem.
\end{theorem}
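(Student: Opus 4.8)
The plan is to obtain the reduction as the composition of the two lemmas just proved. Given an instance $(\calK,\phi)$ of model checking with $\phi\in\LTL_k(\sneg,\F)$, I would first apply (the construction underlying) \Cref{lem:mc-f2-to-t} to pass from $T(\calK)\vDash\phi$ to the truth of a single formula $\Psi$ in the augmented full team $\frakT^\#$, where $\Psi$ is the disjunction built from $\chi'_\calK$, $\chi''_\calK$ and $\neg\F\#\hook\phi$; and then apply \Cref{lem:f2-t-to-sat} to replace truth in $\frakT^\#$ by plain satisfiability. Chaining the two equivalences gives $T(\calK)\vDash\phi$ if and only if $\xi'\land\psi_{\mathrm{stutter}}\land\Psi$ is satisfiable, which is exactly a reduction from model checking to satisfiability.

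It then remains to verify the fragment and the resource bound. Although the two lemmas are stated for $\LTL_1(\sneg,\F)$ inputs, their constructions never use $\td(\phi)\leq 1$: the formula $\phi$ enters only through the conditioning $\neg\F\#\hook\phi$, which preserves its temporal depth, and each auxiliary component — $\chi'_\calK$, the path-consistency formula $\zeta$, the prototrace definition $\xi'$, and $\psi_{\mathrm{stutter}}$ — has temporal depth at most two and uses only $\F$ and $\G$ (the latter abbreviating $\sneg\F\sneg$). Since the subteam constructs $\hook$, $\qe{}$, $\qa{}$, $\qee{}$, $\qaa{}$ likewise do not increase temporal nesting, re-running the same construction on an input $\phi\in\LTL_k(\sneg,\F)$ yields an output formula of temporal depth $\max(k,2)=k$ over $\{\F,\G\}$, hence in $\LTL_k(\sneg,\F)$ as required; logspace-computability is immediate because every piece is produced by a syntax-directed recursion over $\calK$ and $\phi$.

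The one genuinely new ingredient over the $\X$-containing reduction of \Cref{thm:mc-to-sat-delta3} — and the step I expect to be the main obstacle — is that without $\X$ one cannot pin down $\frakT^\#$ exactly, only up to stutter-equivalence. Correctness of substituting a stutter-equivalent team for $\frakT^\#$ rests on $\phi$ being stutter-invariant, which holds because $\phi$ is $\X$-free (\Cref{thm:stutter-invariance-ltl}). The delicate point, absorbed into \Cref{lem:f2-t-to-sat}, is that defining prototraces as $\emptyset^*\{p\}^+\{\#\}^\omega$ is not enough: a team may be trace-wise but not team-wise stutter-equivalent to $\frakT^\#$, as the two tabulated counterexamples show. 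Enforcing $\steq$ at the level of the whole team — so that every prototrace stays active for effectively a single step — is precisely what $\psi_{\mathrm{stutter}}=\psi_{\mathrm{stutter,1}}\land\psi_{\mathrm{stutter,2}}$ achieves, and I would present this global stuttering control, rather than the depth bookkeeping, as the heart of the argument.
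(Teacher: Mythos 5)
Your proposal is correct, and its core---chaining \Cref{lem:mc-f2-to-t} and \Cref{lem:f2-t-to-sat}, with the global stuttering control $\psi_{\mathrm{stutter}}$ identified as the genuinely new ingredient over \Cref{thm:mc-to-sat-delta3}---is exactly the paper's. Where you diverge is in how the case of arbitrary depth $k \geq 2$ is dispatched. The paper never feeds a depth-$k$ formula into the two constructions: its lemmas are stated (and used) only for $\LTL_1(\sneg,\F)$ model-checking instances, and the general case is obtained by first compressing model checking of $\LTL_k(\sneg,\F)$ down to model checking of $\LTL_1(\sneg,\F)$ via the detour through third-order arithmetic (\Cref{thm:ltlmc-to-delta3} followed by \Cref{thm:mc-hardness-delta3}); the resulting depth-two satisfiability instance then lies in $\LTL_k(\sneg,\F)$ precisely because $k \geq 2$. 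You instead re-run the constructions directly on the depth-$k$ input, justified by the (correct) observations that $\phi$ enters only through $\neg\F\#\hook\phi$, that every auxiliary formula ($\chi'_\calK$, $\zeta$, $\xi'$, $\psi_{\mathrm{stutter}}$) has temporal depth at most two over $\{\F,\G\}$ with $\G = \sneg\F\sneg$, and that the subteam constructs add no temporal nesting, so the output has depth $\max(k,2)=k$. Both routes are sound: yours is self-contained within \Cref{sec:sat} and yields a reduction in which $\phi$ occurs verbatim as a subformula, while the paper's reuses the lemmas as black boxes at the price of routing the instance through the arithmetic encoding. One small sharpening for your write-up: when you invoke \Cref{lem:f2-t-to-sat}, the formula whose stutter-invariance the equivalence chain actually needs is the entire output $\Psi$ of the first step (including $\chi'_\calK$, $\zeta$ and $\sneg\chi''_\calK$), not just $\phi$; this is immediate from \Cref{thm:stutter-invariance-ltl} because every component is $\X$-free, but it should be said of $\Psi$ rather than of $\phi$ alone.
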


\subsection{The countable cases}

We showed in \Cref{thm:countable-upper-mc} and \ref{thm:mc-hardness-delta2} that model checking of ultimately periodic/constant traces in a structure is equivalent to $\mathbf{\Delta^2_0}$.
Here, we transfer this result also to the satisfiability problem.
We focus on ultimately constant traces as our result will also cover the ultimately periodic case.

Let $\frakT_\ulc$ resp.\ $\frakT^\#_\ulc$ be the team of all ultimately constant traces in $\frakT$ resp.\ $\frakT^\#$.
Then, we need the following result, analogously to \Cref{lem:mc-f2-to-t}:
\begin{align*}
  T_\ulc(\calK) \vDash \phi \; &\LR \; \frakT_\ulc \vDash \chi_\calK \hook \phi
\end{align*}

But we need to address a subtle issue first.
For the formula $\chi_\calK$ to work, we assumed that each state $w$ of the structure has some proposition $p_w$ labeled uniquely in that state.
For general model checking, this was no loss of generality.
But unfortunately, adding such propositions changes the set of ultimately constant paths!
The following example illustrates this.

\begin{center}
  \begin{tikzpicture}[node distance=1mm,scale=.85,thick,->,>=stealth]
    \tikzset{world/.style={draw, circle,inner sep=1mm,black}}
    \tikzset{dummy/.style={inner sep=.6mm}}
    \node[world] (root) at (0,0){$0$};
    \node[world] (end) at (4,0){$0$};
    \node[below = of root] {$w$};
    \node[below = of end] {$w\smash{'}$};
    \draw
    (root) edge[bend left] (end)
    (end) edge[bend left] (root)
    ;
  \end{tikzpicture}
\end{center}

This structure has exactly one path from any state, which also induces an ultimately constant trace.
But adding propositions as mentioned before leads to the situation that the team of ultimately constant traces is empty, and certainly this is not a valid reduction.

\smallskip

To make the same reduction work, we need to ensure that ultimately constant traces are induced by paths through the structure that themselves are "ultimately constant", \ie, visit only one state infinitely often.
However, note that the reduction from $\mathbf{\Delta^2_0}$ in \Cref{sec:mc} uses precisely such structures.
In these, the ultimately constant traces are only induced by paths that get stuck in a loop, \ie, in an edge of a state to itself.
Therefore we can strengthen \Cref{thm:mc-hardness-delta2} and obtain the following lemma.

\begin{lemma}\label{lem:delta2-to-distinct-model-checking}
  $\mathbf{\Delta^2_0}$ is reducible to ultimately constant model checking of $\LTL_1(\sneg,\F)$ on structures where all states have pairwise distinct labels.
\end{lemma}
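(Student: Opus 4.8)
The plan is to reuse the reduction behind \Cref{thm:mc-hardness-delta2} almost verbatim and then relabel the states of the produced structure so that they become pairwise distinct, arguing that this relabeling changes neither the set of ultimately constant traces nor the truth value of the translated formula. Recall that the $\mathbf{\Delta^2_0}$-variant of that reduction takes a closed $\Delta^2_0$-formula in the normal form of \Cref{cor:delta3-normal-form} (with no third-order variables) and builds the structure $\calK^{\phi}$ (p.~\pageref{p:merged-structure}) from copies of the number gadget of \Cref{fig:gadget-number} and its products (\Cref{fig:gadget-tuple}), deliberately avoiding the relation gadget of \Cref{fig:gadget-relation}, together with a formula $\rho(\phi) \in \LTL_1(\sneg,\F)$ with $\N \vDash \phi \LR T_\ulc(\calK^{\phi}) \vDash \rho(\phi)$ by \Cref{lem:reduction-base}. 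The key structural observation I would record first is that each of these gadgets is acyclic apart from self-loops; hence every infinite path eventually stays forever in a single state, so it induces an ultimately constant trace, and distinct paths induce distinct traces. In particular $T_\ulc(\calK^{\phi}) = T(\calK^{\phi})$.

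Next I would construct $\calK'$ from $\calK^{\phi}$ by adding to each state $w$ a fresh proposition $p_w \notin \Phi$, where $\Phi$ is the finite set of propositions occurring in $\rho(\phi)$, with $p_w \in \eta'(w)$ and $p_w \notin \eta'(w')$ for $w' \neq w$. Now all states of $\calK'$ carry pairwise distinct labels, as required. Since $\calK'$ has the same worlds, transitions and root as $\calK^{\phi}$, it has the same paths, so the acyclic-plus-self-loops property is preserved: every infinite path again stabilizes in one state and induces an ultimately constant trace, whence $T_\ulc(\calK') = T(\calK')$. The only effect of the relabeling on a trace is the extra $p_w$, which lie outside $\Phi$; forgetting them yields $\restr{T(\calK')}{\Phi} = T(\calK^{\phi})$. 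Here the point emphasized before the lemma is exactly that, because no path visits two distinct states infinitely often, adding the $p_w$ cannot turn an ultimately constant trace into a non-ultimately-constant one.

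Finally I would invoke a locality lemma for team semantics: for every team $T$ and every formula whose propositions are contained in $\Phi$, one has $T \vDash \rho(\phi) \LR \restr{T}{\Phi} \vDash \rho(\phi)$. This is a routine induction on $\rho(\phi)$ — the atomic and Boolean cases are immediate, the splitting disjunction is handled by projecting, resp.\ lifting, the splits along $\restr{\cdot}{\Phi}$, and the temporal cases use $\restr{T^k}{\Phi} = (\restr{T}{\Phi})^k$. Chaining the equivalences then gives
\begin{align*}
  T_\ulc(\calK') \vDash \rho(\phi)
  \;\LR\; T(\calK') \vDash \rho(\phi)
  \;\LR\; T(\calK^{\phi}) \vDash \rho(\phi)
  \;\LR\; T_\ulc(\calK^{\phi}) \vDash \rho(\phi)
  \;\LR\; \N \vDash \phi,
\end{align*}
so $(\calK', \rho(\phi))$ is the desired instance; since attaching the fresh propositions $p_w$ is trivially logspace-computable, the whole reduction is. I expect the main obstacle to be the structural verification that the gadgets used in the $\mathbf{\Delta^2_0}$-reduction really contain no nontrivial cycles, since this is precisely what guarantees that each infinite path visits a single state infinitely often and hence that the relabeling is harmless; the locality argument, while needed, is standard.
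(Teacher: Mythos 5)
Your proposal is correct and takes essentially the same route as the paper: the paper's (very terse) argument likewise reuses the $\mathbf{\Delta^2_0}$-reduction of \Cref{thm:mc-hardness-delta2}, observes that the structures it produces have no cycles other than self-loops---so ultimately constant traces arise only from paths that eventually stay in a single state---and concludes that the unique labels $p_w$ can be added without changing the ultimately constant team (modulo the fresh propositions) or the truth of the translated formula. Your write-up merely makes explicit two points the paper leaves implicit, namely that $T_\ulc(\calK^{\phi}) = T(\calK^{\phi})$ for these gadget structures and the locality property $T \vDash \rho(\phi) \LR \restr{T}{\Phi} \vDash \rho(\phi)$; both are correct, and your emphasis that the relation gadget of \Cref{fig:gadget-relation} (the one gadget with a nontrivial cycle) is avoided in the $\mathbf{\Delta^2_0}$ case is exactly the point the argument hinges on.
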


Thus we can assume the propositions $p_w$ for $w \in W$ labeled as before, and obtain
\begin{align*}
  T_\ulc(\calK) \vDash \phi \; &\LR \; \frakT_\ulc \vDash \chi_\calK \hook \phi\text{.}
\end{align*}

\smallskip

Next, we show that $\X$ can again be eliminated from $\chi_\calK$.
As $\frakT^\#_\ulc$ includes all prototraces (which are ultimately constant),
we can use the same formula as before on p.~\pageref{p:reduction-mc-to-t}:
\begin{align*}
  T_\ulc(\calK) \, \vDash \,\phi \; \LR \; \frakT^\#_\ulc \vDash \;  & \Big((\F\# \hook \G (\neg x' \land\neg y')) \land \chi'_\calK \land (\neg \F \# \hook \phi)\Big) \\
   & \lor \Big((\F\# \hook \G (\neg x \land \neg y)) \land \sneg \chi''_\calK \Big)
\end{align*}
This leads to the analogous reduction:

\begin{lemma}\label{lem:mc-delta2-to-t}
  Ultimately constant model checking of $\LTL_1(\sneg,\F)$ on structures where all states have pairwise distinct labels is reducible to truth of $\LTL_2(\sneg,\F)$ in $\frakT^\#_\ulc$.
\end{lemma}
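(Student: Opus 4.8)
The plan is to transport the uncountable-case reduction of \Cref{lem:mc-f2-to-t} essentially verbatim to the ultimately constant setting and to verify that every ingredient survives the restriction. The whole reduction rests on the classical formula $\chi_\calK$ and its $\X$-free replacement, which in turn needs, for each state $w$, a subformula identifying exactly $w$. In the general case this was arranged by adding fresh propositions $p_w$, but — as the motivating two-state example shows — such propositions would in general destroy all ultimately constant paths. This is precisely the obstacle removed by \Cref{lem:delta2-to-distinct-model-checking}: I may assume that $\calK = (W,R,\eta,r)$ has an injective labeling $\eta$, so the atom $p_w$ occurring in $\chi_\calK$, $\chi'_\calK$ and $\zeta_2$ can be read as the temporal-operator-free characteristic formula of the label $\eta(w)$, and no propositions need to be added.

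With this in hand I would reuse, without change, the $\X$-elimination from the proof of \Cref{lem:mc-f2-to-t}, producing the $\X$-free formulas $\chi'_\calK$ and $\chi''_\calK$ (the latter using the renamed markers $x',y'$). The correctness of that step relied on helper prototraces $t,t'$ of shape $\emptyset^n\{p\}\{\#\}^\omega$ that encode consecutive positions; since every such prototrace is ultimately constant, they all belong to $\frakT^\#_\ulc$. Hence the subteam quantifiers $\qa{\neg\F\#}$, $\qa{(\F\#\land\F x)}$ and $\qa{(\F\#\land\F y)}$ in $\zeta$ range over exactly the same witnesses as before, and the argument that reads a path $w_1,w_2,\ldots$ of $\calK$ off the alternating $x$- and $y$-active prototraces goes through unchanged.

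It then remains to confirm the displayed equivalence
\begin{align*}
  T_\ulc(\calK) \, \vDash \,\phi \; \LR \; \frakT^\#_\ulc \vDash \;  & \Big((\F\# \hook \G (\neg x' \land\neg y')) \land \chi'_\calK \land (\neg \F \# \hook \phi)\Big) \\
   & \lor \Big((\F\# \hook \G (\neg x \land \neg y)) \land \sneg \chi''_\calK \Big)
\end{align*}
by exactly the splitting argument of the uncountable case, now performed inside the subuniverse of ultimately constant traces. The disjunction partitions $\frakT^\#_\ulc$ into one subteam whose non-prototraces are precisely $T_\ulc(\calK)$ (certified by $\chi'_\calK$ via its $x,y$-prototraces, with $\phi$ checked on the $\neg\F\#$ part) and a second subteam whose non-prototraces are precisely $\frakT_\ulc\setminus T_\ulc(\calK)$ (certified by $\sneg\chi''_\calK$ via its $x',y'$-prototraces); the use of disjoint marker pairs $x,y$ and $x',y'$ guarantees that both halves keep access to prototraces, exactly as on p.~\pageref{p:reduction-mc-to-t}.

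Finally I would observe that the right-hand side lies in $\LTL_2(\sneg,\F)$, since $\hook$, $\qa{}$ and the subteam quantifiers add no temporal nesting and $\G = \sneg\F\sneg$, and that it is logspace-computable from $(\calK,\phi)$, being built by constant-depth recursion over $\phi$ together with a fixed scheme ranging over the states and edges of $\calK$. The main obstacle here is conceptual rather than computational: one must verify that passing from $\frakT^\#$ to $\frakT^\#_\ulc$ leaves the splitting intact, and this reduces entirely to the two facts that prototraces are ultimately constant and that, under the injective-label assumption, $\chi_\calK$ still pins down $T_\ulc(\calK)$ without introducing path-destroying propositions.
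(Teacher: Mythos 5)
Your proposal is correct and follows essentially the same route as the paper: invoke \Cref{lem:delta2-to-distinct-model-checking} to assume pairwise distinct labels, observe that prototraces are ultimately constant and hence present in $\frakT^\#_\ulc$, and transport the splitting equivalence of \Cref{lem:mc-f2-to-t} verbatim. The only (harmless) local deviation is how the state-identifying atoms are realized: the paper keeps the added propositions $p_w$, arguing that under injective labeling every ultimately constant trace is induced by an eventually constant path so adding the $p_w$ no longer destroys such traces, whereas you instead read $p_w$ as the temporal-operator-free characteristic conjunction of literals describing $\eta(w)$ --- both moves resolve the same obstacle.
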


It remains to reduce the truth problem of $\frakT^\#_\ulc$ to the satisfiability problem, as done before for $\frakT^\#$ in \Cref{lem:f2-t-to-sat}.
There, we showed that
\begin{align*}
  \frakT^\# \vDash \phi \;\LR\; (\xi' \land  \psi_{\mathrm{stutter}} \land \phi) \text{  is satisfiable,}
\end{align*}
where $\xi'$, $\psi_{\mathrm{stutter}}$ and $\phi$ all are $\LTL_2(\sneg,\F)$-formulas.
To define the team $\frakT^\#$ by a formula, we previously used the formula $\xi'$, in particular the subformula
\begin{align*}
  \qaa{\F\#} \qe{\neg \F\#} \bigwedge_{p \in \Phi} \G \big((\neg \F \# \hook \neg p) \tequiv (\F\# \hook \neg p)\big)\tag{$\star$}
\end{align*}
of $\xi'$ (cf.~p.~\pageref{p:xi-formula}) claimed that the position-wise union of every subset of prototraces appears as a regular trace in the team.
This now has to be restricted to ultimately constant traces.
Ultimately constant traces are defined by the $\LTL$-formula
\begin{align*}
  \alpha \dfn \bigwedge_{p \in \Phi} (\F\G p \lor \F \G \neg p)\text{.}
\end{align*}\label{p:def-of-alpha}
However, we want to express something slightly different:
that the union of the currently selected (ultimately constant) prototraces is still ultimately constant, which is not the case in general.
For this, we state that, in some suffix, $p \in \Phi$ either appears in no prototrace (that is, $\F\G\neg p$ holds), or that at every position, $p$ appears in some prototrace ($\F \G \sneg \neg p$ holds):
\begin{align*}
  \alpha' \dfn \F\#\hook \bigwedge_{p \in \Phi}(\F \G \neg p \ovee \F\G \sneg \neg p)
\end{align*}
Accordingly, we obtain the formula $\xi''$ by adding $\alpha'$ to the above subformula ($\star$) of $\xi'$:
\begin{align*}
  \qaa{\F\#} \Big( \alpha' \timp   \qe{\neg \F\#} \bigwedge_{p \in \Phi} \G \big((\neg \F \# \hook \neg p) \tequiv (\F\# \hook \neg p)\big)\Big)
\end{align*}
By this, we force only the ultimately constant traces to appear.
In total, we change the reduction to
\begin{align*}
  \phi \mapsto \xi'' \land \psi_{\mathrm{stutter}} \land \neg\neg\alpha \land \phi\text{.}
\end{align*}
Now $\xi'' \land \psi_{\mathrm{stutter}} \land \neg\neg\alpha$ defines $\frakT^\#_\ulc$ up to stuttering.
More precisely, a similar equivalence chain as in the proof for \Cref{thm:mc-to-sat-x-free} follows:

\begin{align*}
  \frakT^\#_\ulc \vDash \phi \;\LR \; & \frakT^\#_\ulc \vDash \neg\neg\alpha \land \phi  \tag{as $\neg\neg\alpha$ defines ultimately constant teams}\\
  \LR \; & \exists \, T : T \steq \frakT^\#_\ulc \text{ and }T\vDash\neg\neg\alpha\land \phi\tag{as $\neg\neg\alpha$ and $\phi$ are stutter-invariant}\\
  \LR \; & \exists T : T \vDash  \xi'' \land \psi_{\mathrm{stutter}} \land \neg\neg \alpha \land \phi  \tag{construction of $\xi''$ and $\psi_{\mathrm{stutter}}$}\\
  \LR \; &  \xi'' \land \psi_{\mathrm{stutter}} \land \neg\neg\alpha \land \phi \text{ is satisfiable.}
\end{align*}

\begin{lemma}\label{lem:t-to-sat-delta2}
Let $\calC$ contain at least all ultimately constant traces.
  Then the truth of $\LTL_2(\sneg,\F)$-formulas in $\frakT^\#_\ulc$ is reducible to the $\calC$-restricted satisfiability problem of $\LTL_2(\sneg,\F)$.
\end{lemma}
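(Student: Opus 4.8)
The plan is to read the reduction directly off the equivalence chain assembled immediately above the statement, and then to upgrade ``satisfiable'' to ``$\calC$-satisfiable''. Concretely, I would map $\phi$ to the formula $\psi \dfn \xi'' \land \psi_{\mathrm{stutter}} \land \neg\neg\alpha \land \phi$. The chain already establishes that $\frakT^{\#}_{\ulc} \vDash \phi$ holds if and only if $\psi$ is satisfiable in the ordinary, unrestricted sense, so the entire remaining content of the lemma reduces to showing that, for this particular $\psi$, ordinary satisfiability and $\calC$-satisfiability coincide whenever $\calC$ contains every ultimately constant trace.

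The key lemma-internal observation concerns the conjunct $\neg\neg\alpha$. Since $\neg\neg\alpha$ asserts that every singleton subteam satisfies $\alpha$, and since on a single trace the split disjunction $\F\G p \lor \F\G\neg p$ reduces (via empty-team satisfaction of the complementary part) to $\{t\}\vDash\F\G p$ or $\{t\}\vDash\F\G\neg p$, a singleton satisfies $\alpha$ exactly when each $p \in \Phi$ eventually stabilises, that is, when the trace is ultimately constant. Hence any team $T$ with $T \vDash \psi$ consists solely of ultimately constant traces, so $T \subseteq \calC$.

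With this the bridge is immediate in both directions. If $\frakT^{\#}_{\ulc} \vDash \phi$, the chain yields an ordinary witness $T \vDash \psi$; as $T \subseteq \calC$ we get $T \cap \calC = T$, so $\psi$ is $\calC$-satisfiable. Conversely, a $\calC$-satisfying team $T$ gives the ordinary witness $T \cap \calC$, whence the chain returns $\frakT^{\#}_{\ulc} \vDash \phi$. I would also record that $\phi \mapsto \psi$ merely conjoins $\phi$ with fixed $\LTL_2(\sneg,\F)$-formulas determined by $\Phi$, so it is logspace-computable and stays within temporal depth two over $\{\F\}$ (using $\G \equiv \sneg\F\sneg$).

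The main obstacle is not this bridging step but the correctness of the chain on which it rests, in particular that $\xi'' \land \psi_{\mathrm{stutter}} \land \neg\neg\alpha$ defines $\frakT^{\#}_{\ulc}$ up to stutter-equivalence. The delicate point is that the position-wise union of ultimately constant prototraces need not itself be ultimately constant, so the union-forcing step must be confined to exactly those prototrace families whose union does stabilise; verifying that $\alpha'$ captures precisely this restriction, neither discarding a legitimate ultimately constant trace nor admitting a non-stabilising one, and that $\psi_{\mathrm{stutter}}$ still collapses each prototrace to a single effective step under this restriction, is where the care is needed.
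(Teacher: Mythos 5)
Your proposal takes essentially the same route as the paper: the paper's proof of this lemma is precisely the displayed equivalence chain ending in ordinary satisfiability of $\xi'' \land \psi_{\mathrm{stutter}} \land \neg\neg\alpha \land \phi$, with the passage from ordinary to $\calC$-restricted satisfiability left implicit, and that passage is exactly the bridge you spell out.

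One caution on that bridge: your blanket claim that \emph{every} team $T$ with $T \vDash \psi$ satisfies $T \subseteq \calC$ is not literally true, because satisfaction constrains a trace only on the propositions occurring in $\psi$ --- a satisfying team may behave arbitrarily (in particular, non-constantly) on fresh propositions outside $\Phi \cup \{\#\}$, in which case it contains no ultimately constant trace and $T \cap \calC$ may even be empty. The repair is one line and matches what the paper tacitly does: either note that satisfaction is invariant under projecting the team onto the propositions of $\psi$, so one may pass to $\restr{T}{\Phi\cup\{\#\}}$, whose traces are ultimately constant (by $\neg\neg\alpha$ on $\Phi$, together with the fact that $\#$ is constantly false on regular traces and eventually constantly true on prototraces); or simply take the chain's canonical witness $\frakT^\#_\ulc$ itself, which lies inside $\calC$ by assumption. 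Your converse direction, where a $\calC$-witness $T \cap \calC$ is already an ordinary witness, is correct as stated, as is your assessment that the real mathematical weight rests on $\xi'' \land \psi_{\mathrm{stutter}} \land \neg\neg\alpha$ defining $\frakT^\#_\ulc$ up to stutter-equivalence, which the paper likewise asserts by construction rather than proving in detail.
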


The combination of \Cref{lem:delta2-to-distinct-model-checking,lem:mc-delta2-to-t,lem:t-to-sat-delta2} yields:

\begin{theorem}\label{thm:arbitrary-delta2-sat}
  Let $\calC$ contain at least all ultimately constant traces.
Then $\mathbf{\Delta^2_0}$ is reducible to the $\calC$-restricted satisfiability problem of $\LTL_2(\sneg,\F)$.
\end{theorem}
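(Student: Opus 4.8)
The plan is to obtain the claimed reduction simply by composing the three reductions supplied by \Cref{lem:delta2-to-distinct-model-checking,lem:mc-delta2-to-t,lem:t-to-sat-delta2} and invoking the transitivity of logspace-reducibility. Concretely, given a closed formula $\phi \in \Delta^2_0$, I would first apply \Cref{lem:delta2-to-distinct-model-checking} to produce, in logarithmic space, a structure $\calK$ with pairwise distinct state labels together with an $\LTL_1(\sneg,\F)$-formula, such that $\N \vDash \phi$ holds iff the resulting instance is a positive instance of ultimately constant model checking on $\calK$. The insistence on distinct labels in this first step is exactly what makes the later steps sound: as the two-state example with identical labels shows, naively introducing fresh propositions $p_w$ to identify states would destroy all ultimately constant paths, so the reduction from $\mathbf{\Delta^2_0}$ is arranged so that every ultimately constant trace arises from a path that gets stuck in a self-loop, whence the $p_w$ can be added harmlessly.

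Next, I would feed the model-checking instance into \Cref{lem:mc-delta2-to-t}, which eliminates the $\X$-operator of $\chi_\calK$ by means of the auxiliary prototraces for $x,y$ (and the primed copies $x',y'$ on the complementary subteam), yielding an $\LTL_2(\sneg,\F)$-formula whose truth in $\frakT^\#_\ulc$ is equivalent to the model-checking answer. This is the step where the temporal depth rises from one to two, which is why the final bound lands at $\LTL_2(\sneg,\F)$ and cannot be pushed down to depth one by this argument. Finally, I would apply \Cref{lem:t-to-sat-delta2}: since $\calC$ contains all ultimately constant traces, every prototrace and every trace of the constructed witness team lies in $\calC$, so the defining formula $\xi'' \land \psi_{\mathrm{stutter}} \land \neg\neg\alpha$ may be conjoined with the formula from the previous step to reduce truth in $\frakT^\#_\ulc$ to $\calC$-restricted satisfiability of $\LTL_2(\sneg,\F)$.

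Each of the three maps is logspace-computable, and the class of logspace reductions is closed under composition, so the composite is again a logspace reduction from $\mathbf{\Delta^2_0}$ to the $\calC$-restricted satisfiability problem of $\LTL_2(\sneg,\F)$, as claimed. I expect no genuine obstacle at the level of this theorem, since the real difficulty has already been discharged inside the lemmas: namely the $\X$-elimination in the stutter-invariant fragment together with the lockstep stutter control enforced by $\psi_{\mathrm{stutter}}$, and the restriction of the position-wise union construction to ultimately constant traces via $\alpha'$ in $\xi''$. The one point requiring care in assembling the composition is to verify that the hypothesis that $\calC$ contains all ultimately constant traces is invoked consistently across all three steps, which it is, because none of the constructed witnesses ever leaves the class of ultimately constant traces; in particular this covers the ultimately periodic case as a special instance.
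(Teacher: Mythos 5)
Your proposal is correct and takes essentially the same route as the paper: the paper obtains this theorem precisely as the composition of \Cref{lem:delta2-to-distinct-model-checking,lem:mc-delta2-to-t,lem:t-to-sat-delta2}, exactly as you describe, with the real work already discharged inside those lemmas. Your supporting remarks (the need for pairwise distinct labels to preserve ultimately constant paths, the depth increase from $\X$-elimination via prototraces, and the role of $\alpha'$ in $\xi''$) accurately reflect the roles those lemmas play.
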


In particular, the reduction in \Cref{lem:t-to-sat-delta2} produces formulas that are either unsatisfiable, or satisfied by some ultimately constant and hence countable team:

\begin{corollary}
  $\mathbf{\Delta^2_0}$ is reducible to the countable satisfiability problem of $\LTL_2(\sneg,\F)$.
\end{corollary}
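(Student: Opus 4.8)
The plan is to reuse verbatim the composite reduction underlying \Cref{thm:arbitrary-delta2-sat}: given a closed $\Delta^2_0$-sentence, we first apply \Cref{lem:delta2-to-distinct-model-checking} and \Cref{lem:mc-delta2-to-t} to obtain an $\LTL_2(\sneg,\F)$-formula whose truth in $\frakT^\#_\ulc$ encodes truth of the sentence in $\N$, and then apply the reduction of \Cref{lem:t-to-sat-delta2}, which maps this formula $\phi$ to $\psi \dfn \xi'' \land \psi_{\mathrm{stutter}} \land \neg\neg\alpha \land \phi$. The equivalence chain preceding \Cref{lem:t-to-sat-delta2} already shows that the input sentence holds in $\N$ if and only if $\psi$ is satisfiable, so it only remains to upgrade ordinary satisfiability to \emph{countable} satisfiability for the formulas $\psi$ in the image of this reduction.

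The key observation is that every team satisfying $\psi$ is necessarily countable. Indeed, the conjunct $\xi'' \land \psi_{\mathrm{stutter}} \land \neg\neg\alpha$ was constructed precisely so as to define $\frakT^\#_\ulc$ up to stutter-equivalence; hence any model $T$ of $\psi$ satisfies $T \steq \frakT^\#_\ulc$. Now $\frakT^\#_\ulc$ consists only of ultimately constant traces over the finite alphabet $\Phi \cup \{\#\}$ — the prototraces $\emptyset^n\{p\}\{\#\}^\omega$ together with the ultimately constant members of $(\wp\Phi)^\omega$ — and there are only countably many such traces, so $\frakT^\#_\ulc$ is countable. Since stutter-equivalent teams have the same cardinality, every model $T$ of $\psi$ is countable as well.

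Consequently, $\psi$ is satisfiable if and only if it is countably satisfiable, and by the equivalence chain this happens if and only if the original $\Delta^2_0$-sentence is true in $\N$. As the map is the same logspace-computable reduction as in \Cref{thm:arbitrary-delta2-sat} and produces only $\LTL_2(\sneg,\F)$-formulas, this establishes the claimed reduction of $\mathbf{\Delta^2_0}$ to the countable satisfiability problem of $\LTL_2(\sneg,\F)$.

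The one point that needs care is the appeal to the defining property of $\xi'' \land \psi_{\mathrm{stutter}} \land \neg\neg\alpha$: one must check that this conjunct forces every trace of a model to be ultimately constant in all coordinates, including the auxiliary $\#$. This follows because the prototrace part of $\xi''$ pins the $\#$-behaviour to the ultimately constant shape $\emptyset^*\{p\}^+\{\#\}^\omega$ while regular traces carry no $\#$ at all, and $\neg\neg\alpha$ forces the $\Phi$-part of every trace to be ultimately constant; together these guarantee that any model embeds into the countable set of ultimately constant traces, which is the crux of the argument.
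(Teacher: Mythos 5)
Your proposal follows the same route as the paper: the paper's own justification for this corollary is precisely that the composite reduction behind \Cref{thm:arbitrary-delta2-sat} produces formulas that are either unsatisfiable or satisfied by some ultimately constant, hence countable, team, so that satisfiability and countable satisfiability coincide on its image. Your explicit countability computation for $\frakT^\#_\ulc$ (ultimately constant traces over the finite alphabet $\Phi \cup \{\#\}$ admit finite descriptions, so there are only countably many) is exactly the detail the paper leaves implicit, and it is correct.

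However, your intermediate claim that \emph{every} team satisfying $\psi = \xi'' \land \psi_{\mathrm{stutter}} \land \neg\neg\alpha \land \phi$ is countable is false as stated. Satisfaction in team semantics depends only on the projection of the team to the propositions occurring in the formula: $T \vDash \psi$ iff $\restr{T}{\Phi'} \vDash \psi$, where $\Phi' = \Phi \cup \{\#\}$ (this is why the paper defines what it means for a formula to define a team \emph{up to} $\Phi$). Hence from $T \vDash \psi$ you may only conclude $\restr{T}{\Phi'} \steq \frakT^\#_\ulc$, not $T \steq \frakT^\#_\ulc$; stutter-equivalence requires the destuttered teams to be literally equal as sets of traces, and this is destroyed by propositions outside $\Phi'$. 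Indeed, whenever $\psi$ is satisfiable it also has uncountable models: take any model and, for each of its traces, adjoin uncountably many variants differing only in the positions carrying some fresh proposition $q \notin \Phi'$; the projection to $\Phi'$, and therefore satisfaction of $\psi$, is unchanged. The repair is immediate and leaves your argument intact: given any model $T$, pass to $\restr{T}{\Phi'}$, which is again a model, is stutter-equivalent to $\frakT^\#_\ulc$, and is therefore countable by the cardinality-preservation of $\steq$ — or, as the paper does, simply note that $\frakT^\#_\ulc$ itself is a countable model whenever $\psi$ is satisfiable at all. With the claim weakened from "all models are countable" to "some model is countable", the equivalence between satisfiability and countable satisfiability of $\psi$, and hence the corollary, follows exactly as you describe.
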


\subsection{Finite satisfiability}

\label{subsec:generated}

Finally, we investigate the complexity of the problem of finite satisfiability, \ie, whether a formula is true in some team generated by a finite structure.
Unlike in classical $\LTL$, this is a genuinely different problem.
In $\LTL$, every satisfiable formula is satisfied by an ultimately periodic trace~\cite{SC85}, which itself is always finitely generated.
However, for example, the $\LTL(\sneg)$-formula
\begin{align*}
  (\G \qe{} (p \land \X \neg p )) \land \qa{}(p \U \G\neg p)
\end{align*}
defines the team $\{ \{p\}^n\emptyset^\omega \mid n \geq 1 \}$, which is not generated by any finite structure~\cite{abs-1907-05070}.

\smallskip

The full team $\frakT$ is not finitely generated for the simple reason that its traces differ on the first position.
The same holds for $\frakT^\#$, $\frakT_\ulc$ and $\frakT^\#_\ulc$.
However, avoiding this problem is not too difficult, we will show that these teams can actually be simulated by traces with a common root.
The idea is to "delay" the first label---we do not ask if a proposition $p$ is true at the beginning, but rather if there is some point in the future where $p^\leftarrow$ holds, where $p^\leftarrow$ is a fresh proposition that does not interact with the formula otherwise.

We add a new proposition $\vroot$ and claim that it is true in some non-empty, finite prefix of the team, consistently across all traces, and that nothing else is labeled simultaneously:
\begin{align*}
  \psi_{\mathrm{root}} \dfn \vroot \land \F \neg \vroot \land \G((\vroot \land \neg \# \land \bigwedge_{p \in \Phi}(\neg p \land \neg p^\leftarrow)) \ovee \G\neg \vroot)
\end{align*}
Modulo stuttering, this is of course equivalent to every trace $t$ having the initial label $t(0) = \{\vroot\}$ and $\vroot \notin t(1), t(2), \cdots$.
In the initial prefix, the value of any proposition $p \in \Phi$ on a trace is now simulated by the truth of $\F p^\leftarrow$.

We carefully replace $p$ in the formulas as follows in order to not increase the temporal depth to three.
\begin{itemize}
  \item In $\alpha$ and $\alpha'$, that is, the formulas stated on p.~\pageref{p:def-of-alpha} that say that a trace is ultimately constant, we change nothing, since disturbing $p$ on a finite prefix of a trace does not alter whether it is ultimately constant or not.
  \item The subformula $\G(p \to (\F \# \land \G (p \lor \#)))$ of $\xi'$, which states that on prototraces the proposition $p$ is followed by $\#$ (after possible stuttering of $p$), is replaced by $\G \big((p \lor \F p^\leftarrow) \to (\F \# \land \G(\vroot \lor p \lor \#))\big)$.
  \item Every other occurrence of $p \in \Phi$ has been only at temporal depth at most one, and can thus be replaced with $p \ovee (\vroot \land \neg\neg\F p^\leftarrow)$ without increasing the total temporal depth of two.
\end{itemize}

Afterwards, the formula $\psi_{\mathrm{root}}$ is appended to the reduction.

The team $\frakT^\#$ with the above changes is finitely generated:
Add arcs from a root state with label $\{\vroot\}$ to a fully connected set of states, one for each possible label, to generate $\frakT$.
For the prototraces, add a single path from the root that cycles through an empty label (or skips this altogether), eventually visits some $p \in \Phi$ and then cycles through $\{\#\}$.
Moreover, the ultimately constant traces in this structure form precisely the team $\frakT^\#_\ulc$.
By this, we can adapt \Cref{thm:arbitrary-delta2-sat} to the case of finitely generated teams.

\begin{theorem}
  Let $\calC$ contain at least all ultimately constant traces.
Then $\mathbf{\Delta^2_0}$ is reducible to the $\calC$-restricted finite satisfiability problem of $\LTL_2(\sneg,\F)$.
\end{theorem}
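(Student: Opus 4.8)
The plan is to re-use the reduction of \Cref{thm:arbitrary-delta2-sat} essentially unchanged, altering only the team-defining formula so that the witnessing team becomes the trace set of one fixed finite structure. Recall from \Cref{lem:delta2-to-distinct-model-checking,lem:mc-delta2-to-t,lem:t-to-sat-delta2} that a closed $\Delta^2_0$-instance $\psi_0$ is first turned into an $\LTL_1(\sneg,\F)$-formula $\phi = \rho(\psi_0)$ with $\N \vDash \psi_0 \LR \frakT^\#_\ulc \vDash \phi$, and then mapped to $\xi'' \land \psi_{\mathrm{stutter}} \land \neg\neg\alpha \land \phi$, which is $\calC$-satisfiable exactly by the teams stutter-equivalent to $\frakT^\#_\ulc$. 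The one reason this is not yet a finite-satisfiability reduction is that $\frakT^\#_\ulc$, like $\frakT,\frakT^\#,\frakT_\ulc$, is not finitely generated: its traces disagree already at position $0$, whereas every $T(\calK)$ shares the root label $\eta(r)$.

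First I would remove this single obstruction by the delay trick indicated before the theorem. Introduce a fresh proposition $\vroot \notin \Phi$ and fresh shadow propositions $p^\leftarrow$ for $p \in \Phi$, and let $\psi_{\mathrm{root}}$ force every trace to start with a common, finite, $\{\vroot\}$-labelled prefix after which $\vroot$ never recurs and no other proposition is labelled simultaneously with $\vroot$. The value a trace used to carry at position $0$ is now read off by asking whether $\F p^\leftarrow$ holds inside the $\vroot$-prefix, so the distinguishing information is moved off the shared root. With labels delayed in this way, a single finite structure $\calK_0$ generates the modified team: a $\{\vroot\}$-root feeding into a fully connected clique with one state per label (which produces all of $\frakT$), together with one extra branch that loops on $\emptyset$, visits a single $p \in \Phi$, and then loops on $\{\#\}$ (which produces the prototraces). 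As observed in the text, the ultimately constant traces of $\calK_0$ are exactly the modified $\frakT^\#_\ulc$, so since $\calC$ contains all ultimately constant traces this target team sits inside $T(\calK_0) \cap \calC$.

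The delicate step, and the one I expect to be the main obstacle, is to perform the substitution of each $p \in \Phi$ by its delayed counterpart \emph{without} raising the temporal depth above two, and then to re-verify that the rewritten formulas still pin down $\frakT^\#_\ulc$ up to stutter-equivalence. Here I would follow the three-way split already flagged: leave $\alpha$ and $\alpha'$ untouched, since perturbing a trace on a finite prefix changes neither ultimate constancy nor the relevant suffix behaviour; rewrite the unique depth-two clause $\G(p \to (\F\# \land \G(p\lor\#)))$ of $\xi'$ as $\G\big((p \lor \F p^\leftarrow) \to (\F\# \land \G(\vroot \lor p \lor \#))\big)$; and replace every remaining occurrence of $p$, all of which sit at temporal depth at most one, by $p \ovee (\vroot \land \neg\neg\F p^\leftarrow)$. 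Checking that each rewrite preserves the intended meaning modulo stuttering is routine but lengthy, and it is exactly here that stutter-invariance of all formulas in sight (they are $\X$-free, hence covered by \Cref{thm:stutter-invariance-ltl}) is used; the net effect is that the modified $\xi'' \land \psi_{\mathrm{stutter}} \land \neg\neg\alpha \land \psi_{\mathrm{root}}$ defines the modified $\frakT^\#_\ulc$ up to stutter-equivalence.

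Finally I would close the argument by the equivalence chain of \Cref{thm:mc-to-sat-x-free,thm:arbitrary-delta2-sat}. Completeness is unchanged: any finitely $\calC$-satisfying team $T(\calK) \cap \calC$ must satisfy $\neg\neg\alpha$, hence consist of ultimately constant traces, hence be stutter-equivalent to $\frakT^\#_\ulc$, so $\frakT^\#_\ulc \vDash \phi$ and $\N \vDash \psi_0$. For soundness, $\N \vDash \psi_0$ gives $\frakT^\#_\ulc \vDash \phi$, and the structure $\calK_0$ together with the $\calC$-restriction supplies the required finitely generated witness. Since every formula produced stays in $\LTL_2(\sneg,\F)$ and the whole construction is plainly logspace-computable, this establishes the reduction.
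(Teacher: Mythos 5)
Your proposal is correct and takes essentially the same route as the paper's own argument: the identical delay trick with $\vroot$ and the shadow propositions $p^\leftarrow$ via $\psi_{\mathrm{root}}$, the same three-way substitution scheme that keeps the temporal depth at two, the same finite generating structure whose ultimately constant traces form the modified $\frakT^\#_\ulc$, and the same concluding equivalence chain adapted from \Cref{thm:arbitrary-delta2-sat}. There is nothing substantive to distinguish the two arguments.
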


\section{Conclusion}

In this article, we studied the computational complexity of the logic we called $\LTL(\sneg$), \ie, LTL with synchronous team semantics and with Boolean negation $\sneg$.
We showed that both the model checking and the satisfiability problem are highly undecidable, each equivalent to $\mathbf{\Delta^3_0}$, that is, the set of all true formulas of third-order arithmetic.
The idea is that infinite traces can be seen as the characteristic sequences of subsets of $\N$, and teams hence represent sets of subsets of $\N$, which are third-order objects.
As one step in the hardness proof, we used the fact that $\LTL(\sneg)$ can enforce uncountable teams, which is possible in neither of the related logics HyperLTL~\cite{FZ17} and team-logical LTL without negation~\cite{KMV018}.
Over countable teams, for example when using only ultimately periodic traces, the complexity drops down to $\mathbf{\Delta^2_0}$, that is, "only" second-order arithmetic.

Several known features of the high expressivity of HyperLTL manifest in $\LTL(\sneg)$ as well.
For example, Finkbeiner and Zimmermann~\cite{FZ17} showed that HyperLTL can enforce aperiodic traces, which is a crucial step also for our reduction to $\LTL(\sneg)$ where infinite sets of numbers are identified with traces.

\smallskip

The lower complexity bounds already hold for weak fragments such as stutter-invariant $\LTL(\sneg)$, and in fact with only the future modality $\F$ available and temporal depth two.
For model checking, even temporal depth one suffices.
This deviates from classical LTL, where temporal depth at least two is required for the full hardness, for any combination of temporal operators~\cite{DS02,Sch02}.
The satisfiability problem of HyperLTL, however, is undecidable already for temporal depth one~\cite[Theorem 5]{abs-1907-05070}.
The question whether a similar result holds for $\LTL_1(\sneg)$ is open.

In future research, studying other weak fragments could yield further insight.
Besides $\LTL_1(\sneg)$, examples are the fragments using only $\X$ or with a bounded number of propositional variables, which has also been considered for classical LTL~\cite{DS02}.
Also, the asynchronous operators might be worth investigating.

With its high complexity, $\LTL(\sneg)$ is much harder than HyperLTL, which has a non-elementary but decidable model checking problem~\cite{CFKMRS14}.
In fact, it seems plausible that HyperLTL satisfiability is reducible to $\mathbf{\Delta^2_0}$, since every satisfiable formula has a countable model~\cite[Theorem 2]{FZ17}, which would again be easier than satisfiability of $\LTL(\sneg)$.

The lower bounds presented here heavily utilize the unrestricted Boolean negation $\sneg$ in team semantics in combination with team splitting.
A sensible restriction of negation may be a first step towards finding more tractable fragments.

\printbibliography

\end{document}